\newtheorem{theorem}{Theorem}
\newtheorem{lemma}[theorem]{Lemma}
\newtheorem{corollary}[theorem]{Corollary}
\newtheorem{proposition}[theorem]{Proposition}
\theoremstyle{definition}
\newtheorem{definition}[theorem]{Definition}
\newtheorem{instance}{Instance}
\newtheorem{openP}{Open Problem}
\newtheorem{problem}{Problem}
\newcommand{\M}{\mathcal{M}}
\journal{Artificial Intelligence}
\begin{document}

\begin{frontmatter}



\title{On Existence of Truthful Fair Cake Cutting Mechanisms}


\author[1]{Xiaolin Bu}
\author[2]{Jiaxin Song}
\author[3]{Biaoshuai Tao}

\address[1]{Shanghai Jiao Tong University, lin\_bu@sjtu.edu.cn}
\address[2]{Shanghai Jiao Tong University, sjtu\_xiaosong@sjtu.edu.cn}
\address[3]{Shanghai Jiao Tong University, bstao@sjtu.edu.cn}

\begin{abstract}

 We study the fair division problem on divisible heterogeneous resources (the cake cutting problem) with strategic agents, where each agent can manipulate his/her private valuation to receive a better allocation.
    A (direct-revelation) mechanism takes agents' reported valuations as input and outputs an allocation that satisfies a given fairness requirement.
    A natural and fundamental open problem, first raised by Chen, Lai, Parkes, and Procaccia~\cite{CL10} and subsequently raised in reference~\cite{Pro13,aziz2014cake,branzei2015dictatorship,menon2017deterministic,bei2017cake,bei2020truthful}, etc., is whether there exists a deterministic, truthful, and envy-free (or even proportional) cake cutting mechanism.
    In this paper, we resolve this open problem by proving that there does not exist a deterministic, truthful and proportional cake cutting mechanism, even in the special case where all of the following hold:
    \begin{itemize}
    \item there are only two agents;
    \item each agent's valuation is a piecewise-constant function;
    \item each agent is hungry: each agent has a strictly positive value on any part of the cake.
    \end{itemize}
    The impossibility result extends to the case where the mechanism is allowed to leave some part of the cake unallocated.
    
    We also present a truthful and envy-free mechanism when each agent's valuation is piecewise-constant and monotone.
    However, if we require Pareto-optimality, we show that truthful is incompatible with approximate proportionality for any positive approximation ratio even for piecewise-constant and monotone value density functions.
    
    To circumvent the main impossibility result, we aim to design mechanisms that possess a certain degree of truthfulness.
    Motivated by the kind of truthfulness possessed by the classical I-cut-you-choose protocol, we propose a weaker notion of truthfulness, \emph{the proportional risk-averse truthfulness}.
    We show that the well-known moving-knife (Dubins-Spanier) procedure and Even-Paz algorithm do not have this truthful property.
    We propose a mechanism that is proportionally risk-averse truthful and envy-free, and a mechanism that is proportionally risk-averse truthful that always outputs allocations with connected pieces.
\end{abstract}



\begin{keyword}
Fair Division \sep Cake Cutting \sep Mechanism Design \sep Truthful



\end{keyword}

\end{frontmatter}

\noindent\textbf{Note: A preliminary version of this paper is published in EC'22. New results are added in this version, and are presented in Sect.~\ref{sect:monotone}, \ref{append:nom}, and \ref{append:missingproofs}.}

\section{Introduction}
\label{sect:intro}
The cake cutting problem studies the allocation of a piece of divisible heterogeneous resource to multiple agents, normally with a given fairness requirement.
The cake is a metaphor for divisible heterogeneous resources, which is normally modeled as an interval $[0,1]$.
Different agents have different valuations on different parts of the interval.
Typically, each agent's valuation is described by a \emph{value density function} $f:[0,1]\to\mathbb{R}_{\geq0}$, and his/her value on a subset $X\subseteq[0,1]$ is given by the Riemann integral $\int_Xf(x)dx$.
Starting with Steinhaus~\cite{Steinhaus1948}, the cake cutting problem has been widely studied by mathematicians (e.g., \cite{dubins1961cut,even1984note,brams1995envy,edward1999rental,brams2006better}), economists (e.g., \cite{thomson1983problems,thomson1983fair,abdulkadirouglu2004room}), and computer scientists (e.g., most of the papers cited by this paper).
See the books~\cite{brams1996fair}, \cite{robertson1998cake} and Part II of the book~\cite{brandt2016handbook} and the survey~\cite{Pro13}.

Two of the most widely studied fairness criteria are \emph{proportionality} and \emph{envy-freeness}.
An allocation is proportional if each agent believes (s)he receives a share with a value that is at least a $\frac1n$ fraction of the value of the entire cake (where $n$ is the number of the agents).
An allocation is envy-free if each agent believes (s)he receives a share that has weakly more value than the share allocated to each of the other agents (i.e., an agent does not envy any other agents).
Formal definitions for the two notions are in Sect.~\ref{sect:prelim}.
If we require that the entire cake needs to be allocated (i.e., discarding some part of the cake is disallowed), an envy-free allocation is always proportional.
It is well-known that envy-free allocations (with the entire cake allocated) always exist~\cite{brams1995envy}, even if we require each agent must receive a connected interval~\cite{edward1999rental}.
In addition to the existence, the algorithm design aspect has also been considered in a long history~\cite{dubins1961cut,even1984note,stromquist2008envy,aziz2015discrete,aziz2016discrete}.
In particular, we know how to compute a proportional allocation~\cite{dubins1961cut,even1984note} and an envy-free allocation~\cite{aziz2016discrete} for any number of agents.

However, a fundamental issue when deploying a certain cake cutting algorithm is that agents are self-interested and may manipulate and misreport their valuations to the algorithm to get better allocations.
This motivates the study of the cake-cutting problem from a game-theoretical aspect, in particular, a mechanism design aspect.
Is there a \emph{truthful} and fair cake cutting mechanism such that truth-telling is each agent's dominant strategy?
This question was first proposed by Chen, Lai, Parkes, and Procaccia~\cite{CL10}.

To answer this question, we first need to address the following issue: how can we represent a value density function succinctly?
Two different approaches have been considered in the past literature.
In the first approach (e.g., \cite{brams1995envy,robertson1998cake,stromquist2008envy,kurokawa2013cut,aziz2015discrete,aziz2016discrete}), the mechanism communicates with the agents by a query model called \emph{the Robertson-Webb query model}, where the mechanism learns the valuation of each agent through a sequence of queries that are of the following two types:
\begin{itemize}
    \item $\textbf{Eval}_i(x,y)$: ask agent $i$ his/her value on the interval $[x,y]$;
    \item $\textbf{Cut}_i(x,r)$: ask agent $i$ for a point $y$ where $[x,y]$ is worth exactly $r$.
\end{itemize}
In the second approach (e.g., \cite{mossel2010truthful,CL10,bei2012optimal,menon2017deterministic,bei2017cake,bei2020truthful}), the value density function is assumed to be \emph{piecewise-constant}.
Piecewise-constant functions can approximate most natural real functions arbitrarily closely, and they can be succinctly encoded.
The mechanism then takes the $n$ encoded value density functions as input and outputs an allocation.
These mechanisms are called \emph{direct revelation} mechanisms.

In the setting with the Robertson-Webb query model, the game agents are playing is an \emph{extensive-form game}, whereas, in the piecewise-constant valuation setting, this is a one-round game where all the agents report their valuations simultaneously.
Naturally, when truthfulness is concerned, agents in the first setting have much more room for manipulation.
Indeed, for the first setting, Kurokawa, Lai, and Procaccia~\cite{kurokawa2013cut} prove that no truthful and envy-free mechanism terminates within a bounded number of Robertson-Webb queries.
A strong impossibility result by Br{\^a}nzei and Miltersen~\cite{branzei2015dictatorship} show that, for any truthful mechanism, there exists an agent who receives a zero value.
In particular, when there are only two agents, the only truthful mechanism is essentially the one that allocates the entire cake to a single agent.

For direct revelation mechanisms, Chen, Lai, Parkes, and Procaccia~\cite{CL10} give the first truthful envy-free cake cutting mechanism that works when each agent's valuation is \emph{piecewise-uniform}, a special case of piecewise-constant valuations with the additional assumption that each value density function takes value either $0$ or $1$.
Chen, Lai, Parkes, and Procaccia~\cite{CL10} then propose the following natural open problem.
\begin{problem}\label{openP:main}
Does there exist a (deterministic) truthful, envy-free (or even proportional) cake cutting mechanism for piecewise-constant value density functions?
\end{problem}
Many researchers make partial progress on this problem in the past decade.
Aziz and Ye~\cite{aziz2014cake} show that there exists no truthful mechanism that satisfies either one of the following properties:
\begin{itemize}
    \item Proportional and Pareto-optimal;
    \item Robust-proportional and non-wasteful (non-wasteful means that no piece is allocated to an agent who does not want it, a notion weaker than Pareto-optimality).
\end{itemize}
Menon and Larson~\cite{menon2017deterministic} show that there exists no truthful mechanism that is even approximately proportional, with the constraint that each agent must receive a connected piece.
Bei, Chen, Huzhang, Tao, and Wu~\cite{bei2017cake} show that there exists no truthful, proportional mechanism under any one of the following three settings:
\begin{itemize}
    \item the mechanism is non-wasteful;
    \item the mechanism is position-oblivious (meaning that the allocation of a cake-part is based only on the agents' valuations of that part, and not on its relative position on the cake);
    \item agents report the value density functions sequentially, where an agent's strategy can depend on the reports of the previous agents.
\end{itemize}

On the positive side, the mechanism proposed by Chen, Lai, Parkes, and Procaccia~\cite{CL10} for piecewise-uniform value density functions is further studied by Maya and Nisan~\cite{maya2012incentive} and Li, Zhang, and Zhang~\cite{li2015truthful}.
Maya and Nisan~\cite{maya2012incentive} characterize truthful mechanisms and show that the mechanism proposed in reference~\cite{CL10} is unique in some sense.
Li, Zhang, and Zhang~\cite{li2015truthful} show that this mechanism also works in the setting where agents have externalities.
Bei, Huzhang, and Suksompong~\cite{bei2020truthful} propose a truthful envy-free mechanism for piecewise-uniform value density functions that do not need the \emph{free-disposal} assumption, an assumption made in the mechanism in reference~\cite{CL10}.
Designing truthful and fair allocations has also been studied for value density functions that are more restrictive than piecewise-uniform~\cite{alijani2017envy,seddighin2019expand,asano2020cake}.
As can be seen above, most of the positive results are regarding piecewise-uniform valuations or even more restrictive ones.

Despite the above-mentioned progress, Problem~\ref{openP:main} remains open.

All the mechanisms mentioned above are deterministic.
If we allow randomized mechanisms, a simple mechanism proposed by Mossel and Tamuz~\cite{mossel2010truthful} is universal envy-free and truthful in expectation.
However, randomized mechanisms have many drawbacks.
Firstly, agents can be risk-seeking or risk-averse and may have different views on a truthful-in-expectation randomized mechanism.
Secondly, agents may have concerns about the source of the randomness.
It is costly to find a trustworthy random source.
Agents receiving less utility due to randomness may believe they have not been treated fairly.

\subsection{Our Results}
As the main result of this paper, we resolve Problem~\ref{openP:main} by proving that there does not exist a (deterministic) truthful proportional cake cutting mechanism.
This impossibility result can be extended to the setting where there are only two agents, each agent has a strictly positive value on any part of the cake (we say that the agents are \emph{hungry} in this case), and the mechanism is allowed to leave some part of the cake unallocated.
We further show that the impossibility result extends to the setting where only approximate proportionality is required, for some constant approximation ratio sufficiently close to $1$.

\paragraph{Main Result} \textit{There does not exist a deterministic, truthful, and (approximately) proportional mechanism, even if there are only two agents, agents are hungry, and the mechanism is allowed to discard some parts of the cake.} (Theorem~\ref{thm:main2} and Theorem~\ref{thm:main2_approx})

We next consider a natural special case where all agents' value density functions are monotone, in addition to being piecewise-constant.
We show that truthful is compatible with envy-freeness under this setting.
However, truthful and fairness are incompatible with Pareto-optimality.

\paragraph{Result 2}
\textit{There exists a truthful and envy-free mechanism for piecewise-constant and monotone value density functions.} (Theorem~\ref{thm:monotone})

Since our mechanism allocates the entire cake without discarding any part, this mechanism is also proportional.

\paragraph{Result 3}
\textit{There does not exist a truthful, approximately proportional, and Pareto-optimal mechanism for piecewise-constant and monotone value density functions. This is true for any positive approximation factor on proportionality.} (Theorem~\ref{thm:monotonePO})

To circumvent the main impossibility result, we propose a weaker truthful notion called \emph{risk-averse truthful}.
This is motivated by the truthful guarantee of the \emph{I-cut-you-choose} protocol (the protocol is defined in Sect.~\ref{sect:RAT}, after Theorem~\ref{thm:Nash}).
Our risk-averse truthful notion captures the risk-averseness of the agents and the setting where an agent does not know other agents' valuations.
Informally, a mechanism is risk-averse truthful if either each agent's misreporting of his/her valuation is not beneficial, or there is a possibility that the misreporting will hurt the agent's utility (see Definition~\ref{def:wrat}).
Based on the solution concept of proportionality, we also consider a truthful notion called \emph{proportionally risk-averse truthful} that is stronger than risk-averse truthful.
A proportional mechanism is proportionally risk-averse truthful if either each agent's misreporting of his/her valuation is not more beneficial, or there is a possibility that the misreporting will make the agent even fail to get a proportional allocation (see Definition~\ref{def:rat}).

We show that those well-known algorithms, e.g., the moving-knife procedure~\cite{dubins1961cut} and the Even-Paz algorithm~\cite{even1984note}, do not satisfy this truthful property.
We then propose a mechanism that is proportionally risk-averse truthful and envy-free, and a mechanism that is proportionally risk-averse truthful that always outputs allocations with connected pieces.

\paragraph{Result 4} \textit{There exists a mechanism that is proportionally risk-averse truthful and envy-free.} (Theorem~\ref{thm:rat_ef})

\paragraph{Result 5} \textit{There exists a mechanism that is proportionally risk-averse truthful that always outputs allocations with connected pieces.} (Theorem~\ref{thm:rat_p_proportional}, Theorem~\ref{thm:rat_p_risk-averse_hungry} and Theorem~\ref{thm:rat_p_change})

\medskip

Our risk-averse truthful notion is similar but stronger than the truthful notion defined by Brams, Jones, and Klamler~\cite{brams2006better}.
They also consider the setting where each agent does not know the valuations of the other agents, and, in their notion, a mechanism is truthful if each agent cannot misreport his/her valuation and ``assuredly'' do better.
It is possible that misreporting will always be no harm, sometimes make the agent's utility unchanged, and sometimes be beneficial.
In this case, the misreporting cannot ``assuredly do better''.
It satisfies the truthful notion in reference~\cite{brams2006better} but not our risk-averse truthfulness.
For example, the above-mentioned moving-knife procedure satisfied the truthful notion in reference~\cite{brams2006better} but not our risk-averse truthfulness.
See Sect.~\ref{sect:RAT} and \ref{append:Brams} for details and more comparisons.

Another similar truthful notion that takes into account agents' uncertainty about other agents' utilities, called \emph{not obvious manipulability}, is proposed by Troyan and Morrill~\cite{troyan2020obvious}.
It requires that misreporting the utility function is non-beneficial in both the worst case and the best case.
Besides many technical differences, Troyan and Morrill's notion is also conceptually different from ours.
Our notions, as well as Brams et al.'s, focus more on agents' risk-averse motivation, whereas Troyan and Morrill's notion mainly captures the difficulty of finding a beneficial deviation.
See Sect.~\ref{sect:RAT} and \ref{append:nom} for details and more comparisons.

\subsection{Related Work}
In the previous part, we have discussed many related work about the truthfulness in the cake-cutting problem.
In this section, we will mainly go through some related work about the truthfulness in other fair division settings.

\paragraph{Truthfulness under homogeneous divisible items setting}
Many work~\cite{cole2013mechanism,guo2010strategy,han2011strategy,cole2013positive,zivan2010reducing} considers the allocation of multiple homogeneous divisible items.
This model is similar to our cake-cutting problem with piecewise-constant value density functions.
Indeed, for piecewise-constant value density functions, we can partition the cake $[0,1]$ to many intervals where all the functions are constant on each of these intervals.
Each such interval can then be viewed as a divisible homogeneous item.
However, if we are dealing with truthfulness, the two models are fundamentally different: in the cake-cutting setting, an agent can misreporting the value density function by changing the set of the discontinuity points, which will change the ``definitions of the items''.
In fact, under the homogeneous divisible items setting, allocating each item evenly to all the agents is a trivial truthful and envy-free (and proportional) mechanism.
However, the main result in this paper shows that truthfulness and proportionality are incompatible in the cake-cutting setting.

Since truthfulness and fairness can trivially be guaranteed for the homogeneous divisible items setting, it is tempting to include efficiency into consideration.
Guo and Conitzer~\cite{guo2010strategy}, Han et al.~\cite{han2011strategy} and Cole et al.~\cite{cole2013positive} present many results on the upper bound and the lower bound of the approximation ratio on the \emph{social welfare} for truthful mechanisms.
Cole et al.~\cite{cole2013mechanism} provide a truthful mechanism where each agent can receive at least a $1/e$ fraction of his/her value in a \emph{proportionally fair} allocation\footnote{An allocation is \emph{proportionally fair} if it maximizes the \emph{Nash social welfare}---the product of all the agents' utilities. It is widely known that an agent's utility is the same in all proportionally fair allocations.}.
Zivan et al.~\cite{zivan2010reducing} focus on achieving envy-freeness and Pareto-optimality while reducing, but not eliminating, the incentive to misreport.

\paragraph{Truthfulness under indivisible items setting}
For indivisible items, even weak fairness notions cannot be achieved when truthfulness is enforced.
Even for two agents, Caragiannis et al.~\cite{caragiannis2009low} and Amanatidis et al.~\cite{amanatidis2016truthful} present strong impossibility results. 
Amanatidis et al.~\cite{amanatidis2017truthful} also characterize the deterministic truthful mechanisms for two agents with additive valuations, and show that no reasonable fairness notion can be guaranteed.
Garg and Psomas~\cite{garg2022efficient} show that, even when allowing randomized mechanisms, the only truthful and Pareto-optimal mechanism is a serial dictatorship.
Only for some special cases, truthfulness and fairness can be compatible~\cite{bogomolnaia2004random, halpern2020fair, amanatidis2021maximum, babaioff2021fair}. 
Psomas and Verma~\cite{psomas2022fair} consider the previously mentioned relaxed truthful notion ``not obvious manipulability'' and show that it is compatible with the fairness notion \emph{envy-free up to one item} (a natural relaxation of envy-freeness in the setting with indivisible items) and Pareto-optimality.

\paragraph{Truthfulness in other models}
Apart from the above two models, truthful is also widely studied in other scenarios such as house allocation and stable matching~\cite{zhou1990conjecture, abdulkadirouglu1999house, miyagawa2002strategy, krysta2014size}.
However, instead of considering fairness notions, these papers mainly focus on stability or efficiency. 
Some other researches are about truthful random assignments~\cite{katta2006solution, aziz2017impossibilities, mennle2021partial}, where randomized mechanisms satisfying \emph{truthfulness in expectation} are considered.

\paragraph{Other aspects of cake-cutting}
There are also many other papers studying the cake-cutting problem without the strategic aspect.
They focus on other aspects such as computational complexity~\cite{stromquist2008envy,edmonds2006cake,procaccia2009thou,edmonds2006balanced,deng2012algorithmic,branzei2017query,goldberg2020contiguous,seddighin2019expand} and economic efficiency~\cite{bei2012optimal,cohler2011optimal,brams2012maxsum,aumann2012computing,bertsimas2011price,caragiannis2012efficiency,aumann2015efficiency,arzi2011throw} that are not discussed in this paper.

\subsection{Structure of This Paper}
In Sect.~\ref{sect:prelim}, we formally describe the model of the cake cutting problem with direct revelation mechanisms.
In Sect.~\ref{sect:impossibility}, we present our main result: resolving Problem~\ref{openP:main} and extending the impossibility result to the approximation setting.
In Sect.~\ref{sect:monotone}, we consider monotone valuations, present a truthful and envy-free mechanism, and show that truthful and fairness cannot be compatible with Pareto-optimality.
Sect.~\ref{sect:RAT} to Sect.~\ref{sect:RAT_P} discuss the relaxations on dominant strategy truthfulness and present several mechanisms that satisfy the relaxed truthful notions.
We conclude our paper and discuss some future research directions in Sect.~\ref{sect:conclusion}.

\section{Preliminaries}
\label{sect:prelim}
The cake is modeled as the interval $[0,1]$, which is allocated to $n$ agents.
Each agent $i$ has a \emph{value density function} $f_i:[0,1]\to\mathbb{R}_{\geq0}$ that describes his/her preference on the cake.
A value density function $f_i$ is \emph{piecewise-constant} if $[0,1]$ can be partitioned into finitely many intervals, and $f_i$ is constant on each of these intervals.
We will assume agents' value density functions are piecewise-constant throughout the paper, although our results in Sect.~\ref{sect:RAT_P} do not rely on this.
Agent $i$ is \emph{hungry} if $f_i(x)>0$ for any $x\in[0,1]$.
Given a subset $X\subseteq[0,1]$, agent $i$'s \emph{utility} on $X$, denoted by $v_i(X)$, is given by $$v_i(X)=\int_Xf_i(x)dx.$$

An allocation $(A_1,\ldots,A_n)$ is a collection of mutually disjoint subsets of $[0,1]$, where $A_i$ is the subset allocated to agent $i$.
An allocation is \emph{entire} if $\bigcup_{i=1}^nA_i=[0,1]$.
Notice that an impossibility result without the entire requirement is stronger than an impossibility result with this requirement.
An allocation is \emph{proportional} if each agent receives his/her average share of the entire cake:
$$\forall i:\quad v_i(A_i)\geq\frac1n v_i([0,1]).$$
An allocation is \emph{$\alpha$-approximately proportional} if $\frac1n$ above is changed to $\frac\alpha{n}$.
An allocation is \emph{envy-free} if each agent receives a portion that has a weakly higher value than any portion received by any other agent, based on his/her own valuation:
$$\forall i,j:\quad v_i(A_i)\geq v_i(A_j).$$
An entire envy-free allocation is always proportional.
In the case of two agents, if an allocation is entire, it is envy-free if and only if it is proportional.
In Sect.~\ref{sect:RAT_P}, we consider a specific kind of allocations where each agent needs to receive a connected piece of cake, i.e., each $A_i$ is an \emph{interval}.

A \emph{mechanism} is a function $\M$ that maps $n$ value density functions $F=(f_1,\ldots,f_n)$ to an allocation $(A_1,\ldots,A_n)$.
Given $\M(F)=(A_1,\ldots,A_n)$, we write $\M_i(F)=A_i$.
That is, $\M_i(F)$ outputs the share allocated to agent $i$, given input $F=(f_1,\ldots,f_n)$.
A mechanism is proportional/envy-free if it always outputs a proportional/envy-free allocation with respect to the input $F=(f_1,\ldots,f_n)$.
A mechanism is entire if it always outputs entire allocations.
In this paper, we consider only deterministic mechanisms.

A mechanism $\M$ is \emph{truthful} if each agent's dominant strategy is to report his/her true value density function.
That is, for each $i\in[n]$, any $(f_1,\ldots,f_n)$ and any $f_i'$,
$$v_i\left(\M_i(f_1,\ldots,f_n)\right)\geq v_i\left(\M_i(f_1,\ldots,f_{i-1},f_i',f_{i+1},\ldots,f_n)\right).$$

As a clarification, when proportionality/envy-freeness is concerned, a mechanism must output an allocation that is proportional/envy-free with respect to the \emph{reported value density functions}; when truthfulness is concerned, we require each agent's misreporting does not give this agent strictly more utility, and the utility here is with respect to this agent's \emph{true value density function}.

\section{Impossibility Result for Truthful Proportional Mechanism}
\label{sect:impossibility}
In this section, we prove the following theorem.
\begin{theorem}\label{thm:main2}
There does not exist a truthful proportional mechanism, even when all of the following hold:
\begin{itemize}
    \item there are two agents;
    \item each agent's value density function is piecewise-constant;
    \item each agent is hungry: each $f_i$ satisfies $f_i(x)>0$ for any $x\in[0,1]$;
    \item the mechanism needs not to be entire: the mechanism may throw away parts of the cake.
\end{itemize}
\end{theorem}

We will prove Theorem~\ref{thm:main2} by contradiction.
Suppose there exists a truthful proportional mechanism $\M$ for two agents.
For a description of the main idea behind the proof, we construct multiple cake cutting instances, analyze the outputs of $\M$ on these instances, and prove that truthfulness and proportionality cannot be guaranteed on all these instances.
In particular, we will construct six instances.
For the first five instances, we show that the outputs of $\M$ are unique.
Based on the outputs for the first five instances, we show that any allocation output by $\M$ for the sixth instance will violate either proportionality or truthfulness.
The six instances constructed are shown in Table~\ref{tab}.

\begin{table}
    \centering
    \begin{tabular}{|c|c|}
    \hline
    Instance & Allocation \\
    \hline
      \includegraphics{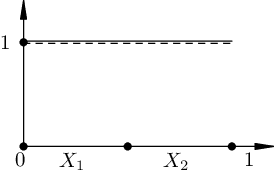}   & $\M(F^{(1)})=(X_1,X_2)$ \\
      \hline
      \includegraphics{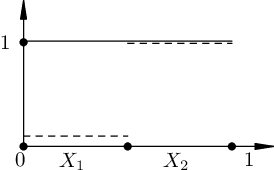}  & $\M(F^{(2)})=(X_1,X_2)$\\
      \hline
      \includegraphics{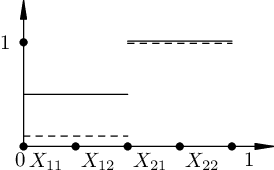}  & $\M(F^{(3)})=(X_{11}\cup X_{21},X_{12}\cup X_{22})$\\
      \hline
      \includegraphics{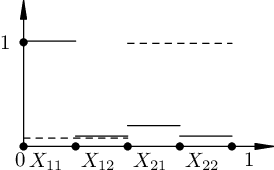}  & $\M(F^{(4)})=(X_{11}\cup X_{21},X_{12}\cup X_{22})$\\
      \hline
      \includegraphics{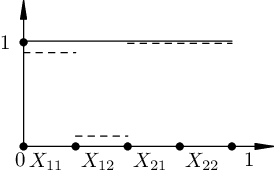}  & $\M(F^{(5)})=(X_1,X_2)$\\
      \hline
      \includegraphics{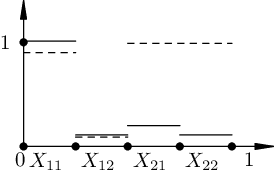}  & See Sect.~\ref{sect:submain}\\
      \hline
    \end{tabular}
    \caption{Instances constructed for the proof of Theorem~\ref{thm:main2} and the corresponding allocations given by $\M$. The value density for agent $1$ is shown in solid lines, and the value density for agent $2$ is shown in dashed lines.}
    \label{tab}
\end{table}

We start with the simplest cake cutting instance.
\begin{instance}\label{ins1}
$F^{(1)}=(f_1^{(1)},f_2^{(1)})$, where $f_1^{(1)}(x)=1$ and $f_2^{(1)}(x)=1$ for $x\in[0,1]$.
\end{instance}

To ensure proportionality, we must have $|\M_1(F^{(1)})|=|\M_2(F^{(1)})|=\frac12$.
We will denote the allocation of $\M(F^{(1)})$ by $(X_1,X_2)$.
$X_1$ and $X_2$ will be used multiple times in the definitions of other instances.

\begin{definition}
$X_1=\M_1(F^{(1)})$ and $X_2=\M_2(F^{(1)})$.
\end{definition}

We have shown that $|X_1|=|X_2|=\frac12$.
It is helpful to assume $X_1=[0,0.5]$ and $X_2=(0.5,1]$ without loss of generality.

In the instances constructed later, we let $\varepsilon>0$ be a sufficiently small real number.

Next, we consider the following instance.
\begin{instance}\label{ins2}
$F^{(2)}=(f_1^{(2)},f_2^{(2)})$, where $f_1^{(2)}(x)=1$ for $x\in[0,1]$ and
$$f_2^{(2)}(x)=\left\{\begin{array}{ll}
    \varepsilon & x\in X_1 \\
    1 & x\in X_2
\end{array}\right..$$
\end{instance}

The following proposition shows that the only possible allocation output by $\M$ for Instance~\ref{ins2} is $(X_1,X_2)$.
\begin{proposition}\label{prop:ins2}
$\M(F^{(2)})=(X_1,X_2)$.
\end{proposition}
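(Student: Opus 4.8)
The plan is to pin down $\M(F^2)$ by playing agent~$2$'s incentives against proportionality, using Instance~\ref{ins1} as the anchor. First I would observe that in $F^2$ agent~$1$ is identical to agent~$1$ in $F^1$, so agent~$1$ could, starting from the true profile $F^1$, misreport $f_1^1$ unchanged — that gives no leverage directly; instead the useful direction is to consider agent~$2$. In $F^2$ agent~$2$'s true valuation is $f_2^2$, but agent~$2$ could misreport $f_2^1\equiv 1$, turning the profile into $F^1$; truthfulness then forces $v_2^2(\M_2(F^2)) \ge v_2^2(\M_2(F^1)) = v_2^2(X_2)$. Since $f_2^2$ equals $1$ on $X_2$ and $\varepsilon$ on $X_1$, and $|X_2| = \tfrac12$, we get $v_2^2(\M_2(F^2)) \ge \tfrac12$. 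Symmetrically, in $F^1$ agent~$2$'s true valuation is $f_2^1$, but could misreport $f_2^2$, yielding $v_2^1(\M_2(F^1)) \ge v_2^1(\M_2(F^2))$, i.e. $\tfrac12 \ge |\M_2(F^2)|$.

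Next I would combine these with proportionality for $F^2$. Proportionality for agent~$2$ in $F^2$ says $v_2^2(\M_2(F^2)) \ge \tfrac12 v_2^2([0,1]) = \tfrac12(\varepsilon |X_1| + |X_2|) = \tfrac12(\tfrac{\varepsilon}{2} + \tfrac12)$, which is weaker than what we already have. The key is rather to extract the \emph{shape} of $\M_2(F^2)$: writing $\M_2(F^2) = Y$ with $Y = (Y\cap X_1) \cup (Y\cap X_2)$, the bound $v_2^2(Y) \ge \tfrac12$ reads $\varepsilon|Y\cap X_1| + |Y\cap X_2| \ge \tfrac12 = |X_2|$, so $|Y\cap X_2| \ge |X_2| - \varepsilon|Y\cap X_1| \ge \tfrac12 - \tfrac{\varepsilon}{2}$. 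Meanwhile proportionality for agent~$1$ in $F^2$ (whose density is $1$ everywhere, so $v_1^2 = |\cdot|$) forces $|\M_1(F^2)| \ge \tfrac12$, hence $|Y| = |\M_2(F^2)| \le \tfrac12$ since the two pieces are disjoint subsets of $[0,1]$. Combined with $|\M_2(F^2)| \ge \tfrac12$ from the misreport argument (actually $v_2^2(Y)\ge\tfrac12$ only gives $|Y|\ge \tfrac12 - $ a bit, but the clean inequality $\tfrac12 \ge |\M_2(F^2)|$ from the reverse deviation closes it together with agent~$1$'s proportionality giving $|\M_1(F^2)|\ge\tfrac12$, forcing $|\M_1(F^2)| = |\M_2(F^2)| = \tfrac12$ and the allocation to be entire). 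So $|Y| = \tfrac12$ and $|Y \cap X_2| \ge \tfrac12 - \tfrac{\varepsilon}{2}$, which pushes almost all of $Y$ onto $X_2$; since $\varepsilon$ is arbitrarily small this wants to be exactly $Y = X_2$.

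To get \emph{exactly} $Y = X_2$ rather than an approximate statement, I would argue that if $|Y \cap X_1| > 0$ then, because $|Y| = \tfrac12 = |X_2|$, also $|X_2 \setminus Y| = |Y\cap X_1| > 0$; now have agent~$2$ deviate from $F^2$ more cleverly, or re-run the comparison with a profile where agent~$1$'s density is slightly boosted on $X_1\cap Y$ so that agent~$1$'s proportionality strictly rules out giving that mass to agent~$2$ — alternatively, simply note $v_2^2(X_2) = |X_2| = \tfrac12$ and $v_2^2(Y) \le \varepsilon|Y\cap X_1| + |Y\cap X_2| < |X_2|$ whenever $|Y\cap X_1|>0$ and $|Y\cap X_2| < |X_2|$, contradicting $v_2^2(Y)\ge\tfrac12 = v_2^2(X_2)$ once we have also shown $v_2^2(Y) = \tfrac12$ exactly (which follows since $\tfrac12 \le v_2^2(Y)$ and $v_2^2(Y)\le |Y| = \tfrac12$). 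That equality $v_2^2(Y) = |Y|$ forces $Y \subseteq \{x : f_2^2(x) = 1\}$ up to measure zero, i.e. $Y \subseteq X_2$, and then $|Y| = |X_2|$ gives $Y = X_2$ (mod null sets), hence $\M_1(F^2) = X_1$.

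The main obstacle I anticipate is the bookkeeping needed to upgrade the "approximately $X_2$" conclusion to the exact equality $\M(F^2) = (X_1, X_2)$: one must carefully chain proportionality for both agents (to force the allocation to be entire and each share to have measure exactly $\tfrac12$) with the two-directional truthfulness deviations between $F^1$ and $F^2$ for agent~$2$, and then exploit that $v_2^2$ agrees with Lebesgue measure precisely on $X_2$. The role of $\varepsilon$ being "sufficiently small" is a red herring for this particular proposition — the argument actually goes through for any $\varepsilon \in (0,1)$ — but it matters for later instances, so I would keep the $\varepsilon$ notation. I should also be mildly careful about sets of measure zero throughout (all equalities of allocated pieces are understood up to null sets), and about the convention, implicit in the excerpt's use of $|\M_1(F^1)|$, that only the measures of the pieces are what the later instances depend on.
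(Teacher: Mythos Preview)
Your argument is correct and follows essentially the same route as the paper: agent~$1$'s proportionality forces $|\M_2(F^2)|\le\tfrac12$, agent~$2$'s deviation from $f_2^2$ to $f_2^1$ forces $v_2^2(\M_2(F^2))\ge v_2^2(X_2)=\tfrac12$, and since $f_2^2\le 1$ these two together pin down $\M_2(F^2)=X_2$ exactly (your observation that $v_2^2(Y)=|Y|$ forces $Y\subseteq X_2$ is just a repackaging of the paper's direct computation). The reverse deviation you introduce (agent~$2$ in $F^1$ misreporting $f_2^2$) is valid but redundant---you already get $|\M_2(F^2)|\le\tfrac12$ from agent~$1$'s proportionality alone---and your parenthetical ``$|Y|\ge\tfrac12-$ a bit'' undersells your own bound: since $f_2^2\le 1$, the inequality $v_2^2(Y)\ge\tfrac12$ gives $|Y|\ge\tfrac12$ on the nose, which is exactly what closes the argument.
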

\begin{proof}
Firstly, we must have $|\M_2(F^{(2)})|\leq\frac12$.
Otherwise, agent $1$ will receive a subset of length strictly less than $1/2$.
Since agent $1$'s valuation is uniform on $[0,1]$, $\M$ is not proportional.

Secondly, we must have $X_2\subseteq\M_2(F^{(2)})$.
Suppose agent $2$ does not receive all of $X_2$, i.e., $|X_2\cap\M_2(F^{(2)})|<\frac12$.
Given that $|\M_2(F^{(2)})|\leq\frac12$, we have 
$$v_2\left(\M_2(F^{(2)})\right)=v_2\left(X_1\cap\M_2(F^{(2)})\right)+v_2\left(X_2\cap\M_2(F^{(2)})\right)$$
$$\qquad\leq\varepsilon\cdot\left(\frac12-|X_2\cap\M_2(F^{(2)})|\right)+1\cdot |X_2\cap\M_2(F^{(2)})|<\frac12.$$
On the other hand, if agent $2$ misreports his/her value density function to $f_2^{(1)}$ (instead of his/her true value density function $f_2^{(2)}$), the mechanism receives input $(f_1^{(2)},f_2^{(1)})$, which becomes Instance~\ref{ins1} since $f_1^{(1)}=f_1^{(2)}$.
In this case the allocation output is $(X_1,X_2)$, and agent $2$'s total value, in terms of his true valuation $f_2^{(2)}$, is $\frac12$.
Therefore, agent $2$ can receive more value by misreporting his/her value density function, and $\M$ cannot be truthful.

Putting these observations together, we have $X_2\subseteq\M_2(F^{(2)})$ and $|\M_2(F^{(2)})|\leq\frac12$, which implies $\M_2(F^{(2)})=X_2$.
Agent $1$ will then receive the remaining part of the cake which is just enough to guarantee proportionality: $\M_1(F^{(2)})=X_1$.
\end{proof}

The next instance we consider is slightly more complicated.
\begin{instance}\label{ins3}
$F^{(3)}=(f_1^{(3)},f_2^{(3)})$, where
$$f_1^{(3)}(x)=\left\{\begin{array}{ll}
    0.5 & x\in X_1 \\
    1 & x\in X_2
\end{array}\right.\qquad\mbox{and}\qquad
f_2^{(3)}(x)=\left\{\begin{array}{ll}
    \varepsilon & x\in X_1 \\
    1 & x\in X_2
\end{array}\right..$$
\end{instance}

The following proposition shows that each agent's allocated subset is exactly the union of half of $X_1$ and half of $X_2$.
\begin{proposition}\label{prop:ins3}
$|\M_1(F^{(3)})\cap X_1|=|\M_1(F^{(3)})\cap X_2|=|\M_2(F^{(3)})\cap X_1|=|\M_2(F^{(3)})\cap X_2|=\frac14$.
\end{proposition}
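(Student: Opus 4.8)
The plan is to quantify the allocation by the four overlap lengths $a_1=|\M_1(F^3)\cap X_1|$, $b_1=|\M_1(F^3)\cap X_2|$, $a_2=|\M_2(F^3)\cap X_1|$, $b_2=|\M_2(F^3)\cap X_2|$, and to show that proportionality, the disjointness of the two allocated pieces, and a single truthfulness inequality imported from Instance~\ref{ins2} together force each of these four numbers to equal $\frac14$. Recall that Instance~\ref{ins1} already forces $X_1$ and $X_2$ to partition $[0,1]$ into two sets of measure $\frac12$ each, which is what makes these four quantities meaningful.

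First I would collect the routine bounds. Since $\M_1(F^3)$ and $\M_2(F^3)$ are disjoint subsets of $[0,1]$, we get $a_1+a_2\le\frac12$ and $b_1+b_2\le\frac12$. Computing $v_1([0,1])=\frac12\cdot\frac12+1\cdot\frac12=\frac34$ and $v_2([0,1])=\varepsilon\cdot\frac12+1\cdot\frac12$, proportionality in $F^3$ gives $\frac12 a_1+b_1\ge\frac38$ for agent~$1$ and $\varepsilon a_2+b_2\ge\frac14+\frac\varepsilon4$ for agent~$2$.

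The crucial step is extracting one truthfulness inequality, and the right deviation to look at is \emph{not} agent~$1$ deviating away from $F^3$ (that only re-derives a weakening of proportionality). Instead, consider agent~$1$ in Instance~\ref{ins2}: because $f_2^2=f_2^3$, if agent~$1$ there reports $f_1^3$ instead of his true density $f_1^2\equiv 1$, the reported profile becomes exactly $F^3$, so agent~$1$ receives $\M_1(F^3)$, which is worth $|\M_1(F^3)|=a_1+b_1$ under his true uniform density. By Proposition~\ref{prop:ins2} his truthful payoff in Instance~\ref{ins2} is $|X_1|=\frac12$, so truthfulness yields $a_1+b_1\le\frac12$. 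This is exactly the ``missing'' constraint, and identifying it is the main obstacle: one must read it off from Instance~\ref{ins2} rather than from $F^3$ itself, and must remember that truthfulness compares values under the \emph{deviator's own true} valuation, which is precisely what turns the deviation payoff into the Lebesgue measure $a_1+b_1$.

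Finally I would solve the resulting linear system. Feeding $a_2\le\frac12-a_1$ into agent~$2$'s proportionality bound and then $b_2\le\frac12-b_1$ gives $b_1\le\frac14+\frac\varepsilon4-\varepsilon a_1$; substituting this into agent~$1$'s proportionality bound yields $(\frac12-\varepsilon)a_1\ge\frac18-\frac\varepsilon4=\frac14(\frac12-\varepsilon)$, hence $a_1\ge\frac14$ (the hypothesis ``$\varepsilon$ sufficiently small'' is used only to ensure $\frac12-\varepsilon>0$). In the other direction, combining $\frac12 a_1+b_1\ge\frac38$ with the truthfulness bound $a_1+b_1\le\frac12$ forces $a_1\le\frac14$. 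Thus $a_1=\frac14$, then $b_1=\frac14$ from the two bounds on $b_1$, and then $b_2=\frac14$ and $a_2=\frac14$ follow from $b_1+b_2\le\frac12$, $a_1+a_2\le\frac12$ together with agent~$2$'s proportionality bound $\varepsilon a_2+b_2\ge\frac14+\frac\varepsilon4$.
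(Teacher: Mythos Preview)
Your proposal is correct and uses essentially the same approach as the paper: both proofs rely on exactly the same three ingredients---disjointness of the two pieces, the two proportionality inequalities in $F^3$, and the single truthfulness bound $a_1+b_1\le\frac12$ obtained by letting agent~$1$ in Instance~\ref{ins2} deviate to $f_1^3$. The only difference is bookkeeping: the paper parametrizes by $x=b_1-\tfrac14$ and shows $x\ge0$ and $x\le0$ in turn, whereas you parametrize by $a_1$ and pin it to $\tfrac14$ first; the linear algebra is the same either way.
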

We provide a brief intuition behind the proof first.
Firstly, agent $1$ cannot receive a subset of length more than $0.5$.
Otherwise, in Instance~\ref{ins2}, agent $1$ will misreport his value density function from $f_1^{(2)}$ to $f_1^{(3)}$, which is more beneficial to agent $1$ (as $f_1^{(2)}$ is uniform and agent $1$ receives a larger length by misreporting).

Secondly, agent $1$ cannot receive less than half of $X_2$. If agent $1$ receives less than half of $X_2$ by a length of $x$, agent $1$ needs to receive more than half of $X_1$ by a length of at least $2x$ to guarantee proportionality.
This will make the total length received by agent $1$ more than $0.5$.

Thirdly, agent $1$ cannot receive more than half of $X_2$. Otherwise, we consider two cases. 
If agent $1$ receives a length of at least $3/8$ on $X_2$ (so that proportionality is already guaranteed for agent $1$), it is easy to see that proportionality cannot be guaranteed for agent $2$.
If agent $1$ receives a length between $1/4$ and $3/8$ on $X_2$, agent $2$, having significantly less value on $X_1$, will have to receive a length on $X_1$ that is significantly longer than half of $X_1$.
This will destroy the proportionality of agent $1$ for that agent $2$ has already taken too much.

Finally, having shown that agent $1$ must receive exactly half of $X_2$, the proportionality of agent $1$ and the proven fact that agent $1$'s received total length is at most $0.5$ imply that agent $1$ has to receive exactly half of $X_1$.
\begin{proof}[Proof of Proposition~\ref{prop:ins3}]
Firstly, we must have $|\M_1(F^{(3)})|\leq\frac12$.
Suppose this is not the case: $|\M_1(F^{(3)})|>\frac12$.
We show that $\M$ cannot be truthful.
Consider Instance~\ref{ins2} where agent $1$'s value density function is uniform.
In Instance~\ref{ins2}, if agent $1$ misreports his/her value density function to $f_1^{(3)}$, the mechanism $\M$ will see an input that is exactly the same as $F^{(3)}$ (notice $f_2^{(2)}=f_2^{(3)}$), and agent $1$ will receive a subset with length strictly more than $\frac12$.
However, we have seen in Proposition~\ref{prop:ins2} that agent $1$ will receive a subset with length exactly $\frac12$ if (s)he reports truthfully.
Since agent $1$'s true valuation is uniform, agent $1$ will benefit from this misreporting.

Let $|\M_1(F^{(3)})\cap X_2|=\frac14+x$ where $x\in[-\frac14,\frac14]$.
We aim to show that $x=0$.
Agent $1$'s total utility on $[0,1]$ is $\int_0^1f_1^{(3)}(x)dx=\frac34$.
To guarantee proportionality, we must have 
$$v_1\left(\M_1(F^{(3)})\right)=v_1\left(\M_1(F^{(3)})\cap X_1\right)+v_1\left(\M_1(F^{(3)})\cap X_2\right)$$
\begin{equation}\label{eqn:prop:ins3}
    \qquad=0.5\cdot\left|\M_1(F^{(3)})\cap X_1\right|+1\cdot\left(\frac14+x\right)\geq\frac38.
\end{equation}
By rearranging (\ref{eqn:prop:ins3}), we have $|\M_1(F^{(3)})\cap X_1|\geq\frac14-2x$.
The total length agent $1$ receives is then $|\M_1(F^{(3)})|=|\M_1(F^{(3)})\cap X_1|+|\M_1(F^{(3)})\cap X_2|\geq \frac12-x$.
Since we have seen $|\M_1(F^{(3)})|\leq\frac12$ at the beginning, we have $x\geq0$.

On the other hand, since $|\M_1(F^{(3)})\cap X_2|=\frac14+x$, we have $|\M_2(F^{(3)})\cap X_2|\leq\frac14-x$.
Since $v_2([0,1])=\frac12+\frac12\varepsilon$ and $v_2(\M_2(F^{(3)})\cap X_2)=1\cdot |\M_2(F^{(3)})\cap X_2|\leq\frac14-x$, to guarantee proportionality for agent $2$, we must have $v_2(\M_2(F^{(3)})\cap X_1)\geq\frac14\varepsilon+x$.
Therefore, $|\M_2(F^{(3)})\cap X_1|\geq\frac14+\frac{x}{\varepsilon}$, which implies $|\M_1(F^{(3)})\cap X_1|\leq\frac14-\frac{x}{\varepsilon}$.
Substituting this into (\ref{eqn:prop:ins3}), we have
$$0.5\cdot\left(\frac14-\frac{x}{\varepsilon}\right)+\left(\frac14+x\right)\geq\frac38,$$
which implies $x\leq0$ if $\varepsilon$ is sufficiently small.

Therefore, $x=0$, and we have $|\M_1(F^{(3)})\cap X_2|=\frac14$.
Since agent $1$ receives exactly length $\frac14$ on $X_2$, to guarantee proportionality, agent $1$ must receive at least length $\frac14$ on $X_1$.
To guarantee $|\M_1(F^{(3)})|\leq\frac12$, agent $1$ must receive at most length $\frac14$ on $X_1$.
Therefore, we have $|\M_1(F^{(3)})\cap X_1|=\frac14$.

Finally, agent $2$ must receive the remaining part of the cake to guarantee proportionality.
\end{proof}

We will define four subsets $X_{11},X_{12},X_{21},X_{22}$ of $[0,1]$ that will be used for constructing other instances later.
\begin{definition}
$X_{11}=|\M_1(F^{(3)})\cap X_1|$, $X_{12}=|\M_2(F^{(3)})\cap X_1|$, $X_{21}=|\M_1(F^{(3)})\cap X_2|$ and $X_{22}=|\M_2(F^{(3)})\cap X_2|$.
\end{definition}
Proposition~\ref{prop:ins3} implies $|X_{11}|=|X_{12}|=|X_{21}|=|X_{22}|=\frac14$. 
It is helpful for the readers to assume $X_{11}=[0,0.25]$, $X_{12}=(0.25,0.5]$, $X_{21}=(0.5,0.75]$ and $X_{22}=(0.75,1]$.

\begin{instance}\label{ins4}
$F^{(4)}=(f_1^{(4)},f_2^{(4)})$, where
$$f_1^{(4)}(x)=\left\{\begin{array}{ll}
    1 & x\in X_{11}\\
    \varepsilon & x\in X_{12}\\
    2\varepsilon & x\in X_{21}\\
    \varepsilon & x\in X_{22}
\end{array}\right.\qquad\mbox{and}\qquad
f_2^{(4)}(x)=\left\{\begin{array}{ll}
    \varepsilon & x\in X_1 \\
    1 & x\in X_2
\end{array}\right..$$
\end{instance}

We will show that $\M(F^{(3)})$ and $\M(F^{(4)})$ output the same allocation.
\begin{proposition}\label{prop:ins4}
$\M_1(F^{(4)})=X_{11}\cup X_{21}$ and $\M_2(F^{(4)})=X_{12}\cup X_{22}$.
\end{proposition}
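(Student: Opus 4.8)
The plan is to establish Proposition~\ref{prop:ins4} by the same kind of argument used for Proposition~\ref{prop:ins3}, combining a truthfulness constraint that caps the length agent~$1$ can receive, a proportionality constraint for agent~$2$ that forces agent~$1$ off of $X_{12}$, and a proportionality constraint for agent~$1$ that pins down exactly how much of $X_{21}$ agent~$1$ must get. First I would note that $F^4$ and $F^3$ have the same agent~$2$, namely $f_2^4 = f_2^3$, so if agent~$1$ in Instance~\ref{ins3} deviates by reporting $f_1^4$, the mechanism sees exactly the input $F^4$. Agent~$1$'s true valuation in Instance~\ref{ins3} is $f_1^3$, which values $X_2$ (length $1/2$) twice as much per unit as $X_1$; by Proposition~\ref{prop:ins3} truthful play gives him value $\frac{1}{4}\cdot\frac12 + \frac14\cdot 1 = \frac38$. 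So truthfulness of $\M$ forces $v_1^3(\M_1(F^4)) \le \frac38$, i.e.\ $\frac12|\M_1(F^4)\cap X_1| + |\M_1(F^4)\cap X_2| \le \frac38$. This is the substitute for the "length $\le \frac12$" bound from the previous proof.

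Next I would exploit agent~$2$'s proportionality in $F^4$. Since $f_2^4$ is exactly $f_2^3$ (value $\varepsilon$ on $X_1$, value $1$ on $X_2$), $v_2([0,1]) = \frac12 + \frac12\varepsilon$, so agent~$2$ needs value at least $\frac14 + \frac14\varepsilon$. Writing $|\M_1(F^4)\cap X_2| = \frac14 + x$ as before and tracking how much of $X_1$ agent~$2$ must grab to compensate, exactly the computation in the proof of Proposition~\ref{prop:ins3} yields $|\M_1(F^4)\cap X_1| \le \frac14 - \frac{x}{\varepsilon}$ and, combined with the truthfulness inequality above, forces $x \le 0$ for small $\varepsilon$. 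For the reverse inequality $x \ge 0$, I would use agent~$1$'s own proportionality in $F^4$: his total value is $1\cdot\frac14 + \varepsilon\cdot\frac14 + 2\varepsilon\cdot\frac14 + \varepsilon\cdot\frac14 = \frac14 + \varepsilon$, so he needs $\frac18 + \frac\varepsilon2$; since his value outside $X_{11}$ is tiny ($O(\varepsilon)$), he must receive essentially all of $X_{11}$, i.e.\ $|\M_1(F^4)\cap X_{11}| \ge \frac14 - O(\varepsilon)$. Plugging this into the truthfulness bound $\frac12|\M_1(F^4)\cap X_1| + |\M_1(F^4)\cap X_2| \le \frac38$ and using $|\M_1(F^4)\cap X_1| \ge |\M_1(F^4)\cap X_{11}| \ge \frac14 - O(\varepsilon)$ gives $|\M_1(F^4)\cap X_2| \le \frac14 + O(\varepsilon)$, which together with $x \le 0$ pins $x$ near $0$; a cleaner route is to first establish $|\M_1(F^4)\cap X_{12}| = O(\varepsilon)$-small directly from agent~$2$'s proportionality and then redo the agent~$1$ proportionality exactly. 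Either way I expect to conclude $|\M_1(F^4)\cap X_{11}| = \frac14$, $|\M_1(F^4)\cap X_{21}| = \frac14$, $|\M_1(F^4)\cap X_{12}| = |\M_1(F^4)\cap X_{22}| = 0$, and then agent~$2$ receives exactly the complement $X_{12}\cup X_{22}$.

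The last step is to upgrade these length equalities to the set equalities $\M_1(F^4) = X_{11}\cup X_{21}$ and $\M_2(F^4) = X_{12}\cup X_{22}$. Since agent~$1$ receives length $\frac14$ inside $X_{11}$ which itself has length $\frac14$, he receives all of $X_{11}$ (up to measure zero, which the construction allows us to ignore as in the treatment of $X_1,X_2$), and similarly all of $X_{21}$; the complementary claim for agent~$2$ follows, and since these two sets already partition $[0,1]$ the allocation is entire.

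The main obstacle, I expect, is the agent~$1$ proportionality direction: the truthfulness inequality only caps a weighted combination of the two $X_1$-coordinates and the $X_2$-coordinate, so I cannot immediately separate "agent~$1$ gets at most $\frac14$ of $X_1$" from "agent~$1$ gets at most $\frac14$ of $X_2$" the way the uniform-valuation bound did in Proposition~\ref{prop:ins3}. The resolution is that the $2\varepsilon$ weight on $X_{21}$ versus $\varepsilon$ on $X_{12}, X_{22}$ in $f_1^4$ is precisely engineered so that, once agent~$2$'s proportionality kicks agent~$1$ off of (most of) $X_{12}$, agent~$1$'s own proportionality makes him prefer $X_{21}$ over $X_{22}$, and the truthfulness cap then forbids him from taking more than half of $X_2$. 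I would make sure to handle the $\varepsilon$ bookkeeping carefully and to state the conclusion in the limit / for $\varepsilon$ sufficiently small, exactly as the previous proposition does.
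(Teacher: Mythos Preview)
Your approach has a genuine gap: the constraints you assemble are not strong enough to pin down $\M_1(F^4)$, and two of the claimed deductions are in fact incorrect.

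First, the truthfulness inequality you invoke goes in the wrong direction for the argument you want. Having agent~$1$ in Instance~\ref{ins3} deviate to $f_1^4$ gives an \emph{upper} bound $v_1^{3}(\M_1(F^4))\le\frac38$, i.e.\ $\tfrac12|\M_1(F^4)\cap X_1|+|\M_1(F^4)\cap X_2|\le\frac38$. But to mimic the $x\le 0$ step of Proposition~\ref{prop:ins3} you need a \emph{lower} bound on this quantity to combine with the upper bound $|\M_1(F^4)\cap X_1|\le\frac14-\frac{x}{\varepsilon}$ coming from agent~$2$'s proportionality; two upper bounds cannot be chained. Second, agent~$1$'s proportionality in $F^4$ only requires value $\frac18+\frac{\varepsilon}{2}$, which forces $|\M_1(F^4)\cap X_{11}|\ge\frac18-O(\varepsilon)$, not $\frac14-O(\varepsilon)$ as you claim. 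Concretely, the allocation $\M_1(F^4)=X_{11}$, $\M_2(F^4)=X_{12}\cup X_2$ satisfies all three of your constraints (the truthfulness bound with value $\frac18$, both agents' proportionality), yet is not the desired allocation.

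The paper's proof uses two different deviations that you did not consider. It first reuses the observation from Proposition~\ref{prop:ins3} that $f_2^4=f_2^2$, so agent~$1$ in Instance~\ref{ins2} (who has the \emph{uniform} valuation) deviating to $f_1^4$ yields the clean length bound $|\M_1(F^4)|\le\frac12$. It then considers agent~$1$ in Instance~\ref{ins4} deviating to $f_1^3$, which gives the lower bound $v_1^{4}(\M_1(F^4))\ge v_1^{4}(X_{11}\cup X_{21})=\frac14+\frac{\varepsilon}{2}$ measured in the valuation $f_1^4$. The point of the densities $1,\varepsilon,2\varepsilon,\varepsilon$ in $f_1^4$ is that, subject to total length at most $\frac12$, the set $X_{11}\cup X_{21}$ is the \emph{unique} maximizer of $v_1^4$ and its value is exactly $\frac14+\frac{\varepsilon}{2}$; hence $\M_1(F^4)=X_{11}\cup X_{21}$ immediately. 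Your reversed deviation (Instance~\ref{ins3} to $f_1^4$) yields a constraint in terms of $f_1^3$, which is constant on each of $X_1,X_2$ and therefore cannot distinguish $X_{11}$ from $X_{12}$ or $X_{21}$ from $X_{22}$.
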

\begin{proof}
Noticing that $f_2^{(2)}=f_2^{(3)}=f_2^{(4)}$, for the same reason in the proof of Proposition~\ref{prop:ins3}, we must have $|\M_1(F^{(4)})|\leq\frac12$.
Otherwise, agent $1$ in Instance~\ref{ins2} will misreport his/her true value density function $f_1^{(2)}$ to $f_1^{(4)}$.

On the other hand, if agent $1$ misreports his/her true value density function $f_1^{(4)}$ to $f_1^{(3)}$, the mechanism $\M$ will see the same input as $F^{(3)}$ and allocate $X_{11}\cup X_{21}$ to agent $1$.
With respect to agent $1$'s true valuation $f_1^{(4)}$, this is worth $\frac14+\frac\varepsilon2$.
To guarantee truthfulness, agent $1$ must receive a value of at least $\frac14+\frac\varepsilon2$ on $\M_1(F^{(4)})$: $v_1(\M_1(F^{(4)}))\geq \frac14+\frac\varepsilon2$.

Given that agent $1$ can receive a subset of length at most $\frac12$, the maximum value agent $1$ can receive is $\frac14+\frac\varepsilon2$, by receiving the two subsets $X_{11}$ and $X_{21}$ that are most valuable to agent $1$.
Therefore, $|\M_1(F^{(4)})|\leq\frac12$ and $v_1(\M_1(F^{(4)}))\geq \frac14+\frac\varepsilon2$ imply $\M_1(F^{(4)})=X_{11}\cup X_{21}$.

Finally, to guarantee proportionality, agent $2$ must receive the remaining part of the cake.
\end{proof}

\begin{instance}\label{ins5}
$F^{(5)}=(f_1^{(5)},f_2^{(5)})$, where $f_1^{(5)}(x)=1$ for $x\in[0,1]$ and 
$$
f_2^{(5)}(x)=\left\{\begin{array}{ll}
    1-\varepsilon & x\in X_{11} \\
    \varepsilon & x\in X_{12}\\
    1 & x\in X_{2}
\end{array}\right..$$
\end{instance}

We show that there is only possible output for $\M(F^{(5)})$ that guarantee both truthfulness and proportionality, with $\M(F^{(5)})=\M(F^{(1)})=\M(F^{(2)})$.

\begin{proposition}\label{prop:ins5}
$\M_1(F^{(5)})=X_1$ and $\M_2(F^{(5)})=X_2$.
\end{proposition}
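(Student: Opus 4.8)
The plan is to determine the lengths of the two pieces first and then their exact identities, following the pattern of the earlier propositions. Since agent~$1$'s reported density $f_1^5$ is uniform on $[0,1]$ with $v_1([0,1])=1$, proportionality immediately gives $|\M_1(F^5)|\ge\frac12$. The crux is the matching bound $|\M_1(F^5)|\le\frac12$, which I expect to be the main obstacle: in contrast with Propositions~\ref{prop:ins3} and~\ref{prop:ins4}, no earlier instance has agent~$2$ reporting $f_2^5$, so the usual ``agent~$1$ deviates to $f_1^3$ or $f_1^4$'' trick is unavailable and the bound has to come from agent~$2$'s side instead.

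For that bound I would invoke truthfulness of agent~$2$. If agent~$2$ deviates from the true density $f_2^5$ to the uniform density $f_2^1$, then, because $f_1^5=f_1^1$, the mechanism faces exactly Instance~\ref{ins1} and outputs $(X_1,X_2)$; with respect to agent~$2$'s true valuation $f_2^5$, which equals $1$ on $X_2$, this share is worth $v_2(X_2)=\frac12$. Truthfulness therefore forces $v_2(\M_2(F^5))\ge\frac12$. Since $f_2^5(x)\le 1$ everywhere, this yields $|\M_2(F^5)|\ge v_2(\M_2(F^5))\ge\frac12$. As $\M_1(F^5)$ and $\M_2(F^5)$ are disjoint subsets of $[0,1]$, the two bounds $|\M_1(F^5)|\ge\frac12$ and $|\M_2(F^5)|\ge\frac12$ can both hold only if $|\M_1(F^5)|=|\M_2(F^5)|=\frac12$.

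It then remains to identify the shares. Writing $a=|\M_2(F^5)\cap X_2|$, so that $|\M_2(F^5)\setminus X_2|=\frac12-a$, and using that $f_2^5=1$ on $X_2$ while $f_2^5\le 1-\varepsilon$ on $X_1=[0,1]\setminus X_2$ (for $\varepsilon$ small), we get $v_2(\M_2(F^5))\le a+(1-\varepsilon)\bigl(\tfrac12-a\bigr)=\tfrac12-\varepsilon\bigl(\tfrac12-a\bigr)$; comparing with $v_2(\M_2(F^5))\ge\frac12$ forces $a=\frac12$, hence $\M_2(F^5)=X_2$ up to a null set. Finally $\M_1(F^5)$ has length $\frac12$ and is disjoint from $X_2$, hence contained in $X_1$, and since $|X_1|=\frac12$ we conclude $\M_1(F^5)=X_1$. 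The only delicate point is the length upper bound in the second paragraph; the remaining steps are bookkeeping and go through for all sufficiently small $\varepsilon>0$.
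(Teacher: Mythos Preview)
Your proof is correct and follows essentially the same approach as the paper: proportionality for agent~$1$ pins down $|\M_1(F^5)|\ge\frac12$, and truthfulness for agent~$2$ (via a deviation that makes the mechanism see an earlier instance and hand agent~$2$ the set $X_2$) forces $v_2(\M_2(F^5))\ge\frac12$, which together with $f_2^5\le 1$ and $f_2^5<1$ on $X_1$ identifies $\M_2(F^5)=X_2$. The only cosmetic difference is that the paper has agent~$2$ deviate to $f_2^2$ (invoking Instance~\ref{ins2}) rather than to $f_2^1$; since $f_1^5=f_1^2=f_1^1$, both deviations yield the same share $X_2$ for agent~$2$, so the arguments coincide.
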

\begin{proof}
Firstly, we must have $|\M_1(F^{(5)})|\geq\frac12$ to guarantee proportionality for agent $1$.
Therefore, $|\M_2(F^{(5)})|\leq\frac12$.
Secondly, if agent $2$ misreports his/her value density function to $f_2^{(2)}$, the mechanism $\M$ will see an input exactly the same as $F^{(2)}$, and will allocate $X_2$ to agent $2$.
This is worth $\frac12$ with respect to agent $2$'s true valuation $f_2^{(5)}$.
Therefore, we must have $v_2(\M_2(F^{(5)}))\geq\frac12$, for otherwise agent $2$ will misreport his/her value density function to $f_2^{(2)}$.
Given that agent $2$ can receive a length of at most $\frac12$, the maximum value (s)he can receive is $\frac12$, by receiving $X_2$ that is most valuable to agent $2$.
Therefore, $\M_2(F_5)=X_2$.
To guarantee proportionality for agent $1$, we must also have $\M_1(F^{(5)})=X_1$.
\end{proof}

Notice that, although we do not require entire allocations, the proportionality and truthfulness constraints make the output allocations of $\M$ for the first five instances entire.

Finally, we will consider our last instance below, and show that $\M$ cannot be both truthful and proportional for any allocation it outputs.

\begin{instance}\label{ins6}
$F^{(6)}=(f_1^{(6)},f_2^{(6)})$, where
$$f_1^{(6)}(x)=\left\{\begin{array}{ll}
    1 & x\in X_{11}\\
    \varepsilon & x\in X_{12}\\
    2\varepsilon & x\in X_{21}\\
    \varepsilon & x\in X_{22}
\end{array}\right.\qquad\mbox{and}\qquad
f_2^{(6)}(x)=\left\{\begin{array}{ll}
    1-\varepsilon & x\in X_{11} \\
    \varepsilon & x\in X_{12}\\
    1 & x\in X_{2}
\end{array}\right..$$
\end{instance}

We will analyze this instance in the following sub-section.

\subsection{Analysis of $\M(F^{(6)})$}
\label{sect:submain}
We show that $\M$ cannot output an allocation for Instance~\ref{ins6} that guarantees both truthfulness and proportionality.
This will give us a contradiction and proves Theorem~\ref{thm:main2}.
To show this, we begin by proving three propositions, and then show that they cannot be simultaneously satisfied.

\begin{proposition}\label{prop:i}
$|\M_2(F^{(6)})\cap X_2|\leq\frac14+\frac14\varepsilon$.
\end{proposition}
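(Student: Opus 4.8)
The plan is to obtain the bound by a truthfulness argument for agent $2$, using the previously-analyzed Instance~\ref{ins5} as the ``deviation target''. Observe first that $f_1^6 = f_1^4$, so Instance~\ref{ins6} differs from Instance~\ref{ins4} only in agent $2$'s value density function, and that $f_2^6 = f_2^5$ while $f_1^6$ is not uniform. The key observation is the reverse deviation: if in Instance~\ref{ins6} agent $2$ misreports $f_2^6$ to $f_2^5$, then the mechanism sees the input $(f_1^6, f_2^5) = (f_1^5{}', f_2^5)$ --- wait, more carefully, the mechanism sees $(f_1^6,f_2^5)$, which is \emph{not} $F^5$ since $f_1^6\neq f_1^5$. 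So instead I would run the truthfulness argument the other way, starting from Instance~\ref{ins5}.

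Concretely: in Instance~\ref{ins5}, agent $1$ has the uniform valuation $f_1^5$ and (by Proposition~\ref{prop:ins5}) receives $X_1$ if truthful. If agent $1$ misreports $f_1^5$ to $f_1^6 = f_1^4$, the mechanism sees $(f_1^6, f_2^5) = (f_1^6, f_2^6)= F^6$, and agent $1$ receives $\M_1(F^6)$. Since agent $1$'s true valuation $f_1^5$ is uniform, truthfulness forces $|\M_1(F^6)| \leq \tfrac12$, hence $|\M_2(F^6)| \geq \tfrac12$. This gives a handle on agent $2$'s total received length, but not yet the claimed bound on $|\M_2(F^6)\cap X_2|$; for that I need an \emph{upper} bound on how much of $X_2$ agent $2$ can get, which should come from agent $1$'s proportionality in Instance~\ref{ins6}. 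Agent $1$'s total value in $F^6$ is $\tfrac14 + \tfrac14\varepsilon + \tfrac12\varepsilon + \tfrac14\varepsilon = \tfrac14 + \varepsilon$, so proportionality demands $v_1(\M_1(F^6)) \geq \tfrac18 + \tfrac12\varepsilon$. The region $X_{11}$ is the only place agent $1$ has large density ($=1$), and $|X_{11}| = \tfrac14$; if agent $2$ takes too much of $X_2 = X_{21}\cup X_{22}$, that by itself does not block agent $1$, so I expect the real constraint to be a combined one: $|\M_1(F^6)| \leq \tfrac12$ together with agent $1$'s proportionality, together with the fact that agent $1$ values $X_{12}\cup X_{22}$ at only $\varepsilon$ per unit.

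Here is the cleaner route I would actually take. Suppose for contradiction $|\M_2(F^6)\cap X_2| > \tfrac14 + \tfrac14\varepsilon$. Then $|\M_1(F^6)\cap X_2| < \tfrac12 - (\tfrac14+\tfrac14\varepsilon) = \tfrac14 - \tfrac14\varepsilon$. Agent $1$'s value from $X_2$ is then at most $2\varepsilon\cdot|\M_1(F^6)\cap X_{21}| + \varepsilon\cdot|\M_1(F^6)\cap X_{22}| \leq 2\varepsilon(\tfrac14 - \tfrac14\varepsilon) = O(\varepsilon)$, and agent $1$'s value from $X_1$ is at most $1\cdot|X_{11}| + \varepsilon\cdot|X_{12}| = \tfrac14 + \tfrac14\varepsilon$; summing, $v_1(\M_1(F^6)) \leq \tfrac14 + O(\varepsilon)$, which for small $\varepsilon$ exceeds the proportionality threshold $\tfrac18 + \tfrac12\varepsilon$, so this alone gives no contradiction. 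Thus proportionality of agent $1$ is too weak; the contradiction must instead use truthfulness of agent $2$ relative to a deviation that yields a \emph{good} chunk of $X_2$. The natural such deviation is $f_2^6 \mapsto f_2^3 (= f_2^4)$: the mechanism then sees $(f_1^6, f_2^4) = (f_1^4, f_2^4) = F^4$, and by Proposition~\ref{prop:ins4} agent $2$ receives $X_{12}\cup X_{22}$, worth (under true $f_2^6$) $\varepsilon\cdot\tfrac14 + 1\cdot\tfrac14 = \tfrac14 + \tfrac14\varepsilon$. Truthfulness then forces $v_2(\M_2(F^6)) \geq \tfrac14 + \tfrac14\varepsilon$. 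Finally, since $v_2$ under $f_2^6$ assigns density at most $1$ everywhere and $|\M_2(F^6)| \geq \tfrac12$ need not hold tightly --- actually I should bound $v_2(\M_2(F^6)\cap X_1) \leq (1-\varepsilon)\cdot\tfrac14 + \varepsilon\cdot\tfrac14 = \tfrac14$ since $|X_1| = \tfrac12$ and more precisely $\leq (1-\varepsilon)|\M_2(F^6)\cap X_{11}| + \varepsilon|\M_2(F^6)\cap X_{12}| \leq \tfrac14(1-\varepsilon) + \tfrac14\varepsilon = \tfrac14$ --- wait that is not matching; let me just say $v_2(\M_2(F^6)\cap X_1)\le \tfrac14$ crudely. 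Hmm, but then $v_2(\M_2(F^6)\cap X_2) \geq \tfrac14 + \tfrac14\varepsilon - \tfrac14 = \tfrac14\varepsilon$, a \emph{lower} bound, not the upper bound claimed. So truthfulness of agent $2$ pushes the wrong way too.

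Reconciling: the upper bound $|\M_2(F^6)\cap X_2| \le \tfrac14 + \tfrac14\varepsilon$ should come from agent $1$'s truthfulness via Instance~\ref{ins5} after all, combined with agent $1$'s proportionality in $F^6$ but applied contrapositively. The argument I would commit to: from $|\M_1(F^6)| \le \tfrac12$ (agent $1$'s truthfulness against deviating in Instance~\ref{ins5}, as above), and from agent $1$'s proportionality $v_1(\M_1(F^6)) \ge \tfrac18 + \tfrac12\varepsilon$, deduce a lower bound on $|\M_1(F^6)\cap X_{11}|$: since every unit outside $X_{11}$ is worth at most $2\varepsilon$ to agent $1$ and $|\M_1(F^6)| \le \tfrac12$, we get $|\M_1(F^6)\cap X_{11}| + 2\varepsilon\cdot\tfrac12 \ge \tfrac18 + \tfrac12\varepsilon$, hence $|\M_1(F^6)\cap X_{11}| \ge \tfrac18 - \tfrac12\varepsilon$. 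This is still not about $X_2$. I now think the correct and simplest derivation is: agent $2$'s truthfulness against the deviation $f_2^6 \mapsto f_2^2$ (mechanism sees $(f_1^6,f_2^2)$, which is \emph{not} a prior instance either). I have evidently not pinned the exact chain; the honest statement of the plan is therefore: \textbf{identify which earlier instance serves as agent~$2$'s profitable deviation target so that its guaranteed bundle, valued under $f_2^6$, forces the stated upper bound on $|\M_2(F^6)\cap X_2|$ via a complementary proportionality constraint for agent~$1$} --- and I expect this bookkeeping (choosing the right deviation and tracking the $\varepsilon$-order terms so the inequality closes for small $\varepsilon$) to be the main obstacle, exactly as in the proof of Proposition~\ref{prop:ins3}.
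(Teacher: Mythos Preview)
You came very close but ran the key truthfulness inequality in the wrong direction. You correctly observed that $f_1^6=f_1^4$, so the profiles $F^4$ and $F^6$ differ only in agent~$2$'s report; this is exactly the link the paper uses. But then you considered agent~$2$ in Instance~\ref{ins6} (with true valuation $f_2^6$) deviating to $f_2^4$, which yields a \emph{lower} bound on $v_2(\M_2(F^6))$ under $f_2^6$ --- the opposite of what is needed.

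The paper's argument reverses the roles: take agent~$2$ in Instance~\ref{ins4}, whose \emph{true} valuation is $f_2^4$, and consider the deviation $f_2^4\mapsto f_2^6$. The mechanism then sees $(f_1^4,f_2^6)=(f_1^6,f_2^6)=F^6$ and allocates $\M_2(F^6)$ to agent~$2$. Truthfulness says agent~$2$'s utility (under $f_2^4$) from truth-telling in $F^4$, which is $v_2^{(4)}(X_{12}\cup X_{22})=\tfrac14\varepsilon+\tfrac14=\tfrac14+\tfrac14\varepsilon$ by Proposition~\ref{prop:ins4}, is at least the utility (still under $f_2^4$) from the deviation, namely $v_2^{(4)}(\M_2(F^6))$. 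Since $f_2^4$ has density $1$ on $X_2$ and $\varepsilon$ on $X_1$, this gives
\[
|\M_2(F^6)\cap X_2|\;\le\;|\M_2(F^6)\cap X_2|+\varepsilon\,|\M_2(F^6)\cap X_1|\;=\;v_2^{(4)}(\M_2(F^6))\;\le\;\tfrac14+\tfrac14\varepsilon,
\]
which is exactly the claim. None of the detours through Instance~\ref{ins5}, agent~$1$'s proportionality in $F^6$, or the length bound $|\M_1(F^6)|\le\tfrac12$ are needed here; the proof is a single-step truthfulness comparison, just starting from the other instance.
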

\begin{proof}
Suppose this is not the case: $|\M_2(F^{(6)})\cap X_2|>\frac14+\frac14\varepsilon$.
Consider Instance~\ref{ins4}.
By Proposition~\ref{prop:ins4}, we have $\M_2(F^{(4)})= X_{12}\cup X_{22}$, and agent $2$ can receive value $\frac14+\frac14\varepsilon$ (with respect to $f_2^{(4)}$).
By misreporting from $f_2^{(4)}$ to $f_2^{(6)}$, the mechanism $\M$ will see input $F^{(6)}$ and allocate $\M_2(F^{(6)})$ to agent $2$ with $|\M_2(F^{(6)})\cap X_2|>\frac14+\frac14\varepsilon$.
With respect to agent $2$'s true value density function $f_2^{(4)}$ in Instance~\ref{ins4}, this is worth more than $\frac14+\frac14\varepsilon$.
Therefore, $\M$ cannot be truthful.
\end{proof}

\begin{proposition}\label{prop:ii}
$v_1(\M_1(F^{(6)}))\geq\frac14+\frac14\varepsilon$ with respect to $f_1^{(6)}$.
\end{proposition}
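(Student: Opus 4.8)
\textbf{Proof proposal for Proposition~\ref{prop:ii}.}

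The plan is to mimic the structure used for Proposition~\ref{prop:ins4} and for Proposition~\ref{prop:i}: identify an earlier instance in which agent~$1$ has the value density function $f_1^6$ as a \emph{true} valuation, and exhibit a misreport of agent~$1$ that leads the mechanism to see the input $F^6$. Concretely, I would compare Instance~\ref{ins6} with Instance~\ref{ins4}. Observe that $f_1^4 = f_1^6$, so from the point of view of agent~$1$, Instance~\ref{ins4} and Instance~\ref{ins6} have the same true valuation for agent~$1$; the difference is only in agent~$2$'s reported function ($f_2^4$ versus $f_2^6$), which agent~$1$ cannot control. The key observation is that from Instance~\ref{ins6}, agent~$1$ can misreport $f_1^6$ to $f_1^4$ (note: these are literally the same function, so this is not actually a misreport)\ldots so this comparison alone does not do the job. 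Instead, the relevant move is the one used in Proposition~\ref{prop:ins4}: in Instance~\ref{ins6}, agent~$1$ with true valuation $f_1^6$ can misreport to $f_1^3$. Since $f_2^6 \neq f_2^3$, this does \emph{not} produce input $F^3$, so that does not work either.

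The correct approach is the reverse direction: start from Instance~\ref{ins5}, where agent~$1$'s true valuation is the uniform $f_1^5$, and note that $f_2^5 = f_2^6$. Hence in Instance~\ref{ins5}, agent~$1$ can misreport from $f_1^5$ to $f_1^6$, causing the mechanism to see exactly the input $F^6$, so agent~$1$ receives $\M_1(F^6)$. By Proposition~\ref{prop:ins5}, when reporting truthfully in Instance~\ref{ins5} agent~$1$ receives $X_1$, which under the uniform valuation $f_1^5$ is worth $\frac12$; but actually I need a \emph{lower} bound on $v_1(\M_1(F^6))$ with respect to $f_1^6$, not $f_1^5$, so Instance~\ref{ins5} is the wrong source as well. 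The instance I actually want is one where agent~$1$'s true function is $f_1^6$ and where, by reporting truthfully, agent~$1$ gets a known good bundle; I would take Instance~\ref{ins6} itself is not available, so the right source is Instance~\ref{ins4} after all, using truthfulness in the \emph{other} direction: in Instance~\ref{ins6}, agent~$1$'s true valuation is $f_1^6 = f_1^4$, and agent~$1$ can misreport $f_1^6$ to \ldots again the same function. I would resolve this by instead invoking truthfulness at Instance~\ref{ins4} with agent~$1$ misreporting to the function that yields input $F^6$: since $f_2^4 \neq f_2^6$, no single-agent misreport by agent~$1$ converts $F^4$ into $F^6$.

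Therefore the genuinely correct argument must use an instance whose agent-$2$ component equals $f_2^6$, namely Instance~\ref{ins5}, combined with truthfulness applied to agent~$1$ \emph{at} $F^6$: in Instance~\ref{ins6}, agent~$1$ (true valuation $f_1^6$) may misreport to $f_1^5$, the uniform function, producing input $(f_1^5, f_2^6) = F^5$; by Proposition~\ref{prop:ins5} agent~$1$ then receives $X_1 = X_{11} \cup X_{12}$. With respect to agent~$1$'s true valuation $f_1^6$, this bundle is worth $v_1(X_{11} \cup X_{12}) = 1 \cdot \frac14 + \varepsilon \cdot \frac14 = \frac14 + \frac14\varepsilon$. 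Truthfulness for agent~$1$ in Instance~\ref{ins6} then forces $v_1(\M_1(F^6)) \geq \frac14 + \frac14\varepsilon$, which is exactly the claim. The main obstacle, and the step I would be most careful about, is bookkeeping which instance plays the role of the ``truthful'' baseline and which misreport is legal (only agent~$1$'s coordinate may change, and the resulting agent-$2$ coordinate must match $f_2^6$); once the pairing $F^6 \to F^5$ via agent~$1$'s misreport to uniform is identified, the value computation $v_1(X_1) = \frac14 + \frac14\varepsilon$ under $f_1^6$ is immediate, and the proof is a two-line application of the truthfulness inequality.
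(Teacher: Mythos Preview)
Your final paragraph is exactly the paper's argument: in Instance~\ref{ins6} agent~$1$ misreports $f_1^6$ to $f_1^5$, the mechanism sees $F^5$ (since $f_2^5=f_2^6$), allocates $X_1=X_{11}\cup X_{12}$ by Proposition~\ref{prop:ins5}, and this is worth $\tfrac14+\tfrac14\varepsilon$ under $f_1^6$, so truthfulness yields the bound. The preceding exploratory dead ends are unnecessary and should be deleted, but the eventual proof is correct and identical to the paper's.
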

\begin{proof}
Suppose agent $1$ misreports his/her true value density function $f_1^{(6)}$ to $f_1^{(5)}$.
The mechanism $\M$ will see input $F^{(5)}$, which will allocate $X_1$ to agent $1$ by Proposition~\ref{prop:ins5}.
This is worth $\frac14+\frac14\varepsilon$ to agent $1$.
Therefore, to guarantee truthfulness, we must have $v_1(\M_1(F^{(6)}))\geq\frac14+\frac14\varepsilon$.
\end{proof}

\begin{proposition}\label{prop:iii}
$v_2(\M_2(F^{(6)}))\geq\frac38$ with respect to $f_2^{(6)}$.
\end{proposition}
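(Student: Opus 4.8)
The plan is to recognize that this proposition is nothing more than the proportionality guarantee of $\M$ applied to Instance~\ref{ins6}; unlike Propositions~\ref{prop:i} and~\ref{prop:ii}, it requires no appeal to truthfulness or to the allocations of the earlier instances. First I would compute agent $2$'s total value of the whole cake under $f_2^6$. By the definitions of $X_{11}$ and $X_{12}$ together with Proposition~\ref{prop:ins3}, we have $|X_{11}|=|X_{12}|=\frac14$, and by the definition of $X_2$ (from Instance~\ref{ins1}) we have $|X_2|=\frac12$. Since $f_2^6$ equals $1-\varepsilon$ on $X_{11}$, equals $\varepsilon$ on $X_{12}$, and equals $1$ on $X_2$, this gives
$$v_2([0,1]) = \tfrac14(1-\varepsilon) + \tfrac14\varepsilon + \tfrac12\cdot 1 = \tfrac34,$$
where the $\varepsilon$ contributions cancel exactly. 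Then, because $\M$ is proportional and there are two agents, $v_2(\M_2(F^6)) \geq \frac12 v_2([0,1]) = \frac38$, which is the claim.

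There is essentially no obstacle in this step; the one point worth stating carefully is that the $\varepsilon$-terms in $v_2([0,1])$ cancel, so that the bound $\frac38$ is clean and independent of $\varepsilon$. This exactness is what matters downstream: combined with Proposition~\ref{prop:i} (which caps how much of $X_2$ agent $2$ may take) and Proposition~\ref{prop:ii} (which forces agent $1$ to retain enough value), the bound $v_2(\M_2(F^6))\geq\frac38$ will be tight enough to be incompatible with the other two, yielding the contradiction that proves Theorem~\ref{thm:main2}. I would therefore present the proof of this proposition in a single short paragraph, emphasizing only the value computation and the invocation of proportionality.
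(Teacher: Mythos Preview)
Your proposal is correct and matches the paper's own proof essentially line for line: compute $v_2([0,1])=\tfrac14(1-\varepsilon)+\tfrac14\varepsilon+\tfrac12=\tfrac34$ and apply proportionality for two agents to obtain $\tfrac38$. There is nothing to add or change.
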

\begin{proof}
We have $v_2([0,1])=\frac14((1-\varepsilon)+\varepsilon)+\frac12\times1=\frac{3}4$.
The proposition follows by the proportionality of agent $2$.
\end{proof}

We first give an intuitive argument to show that Proposition~\ref{prop:i}, \ref{prop:ii} and \ref{prop:iii} cannot be all satisfied.
In $F^{(6)}$, agent $2$ has a value equal to or approximately equal to $1$ on each of the three segments $X_{11},X_{21}$ and $X_{22}$ and has a negligible value on $X_{12}$.
Proposition~\ref{prop:i} indicates that (s)he can receive at most (a little bit more than) half of $X_{21}\cup X_{22}$.
To guarantee proportionality (indicated by Proposition~\ref{prop:iii}), (s)he must receive approximately half of $X_{11}$.
On the other hand, by our construction of $f_1^{(6)}$, it is easy to see that Proposition~\ref{prop:ii} indicates that almost the entire $X_{11}$ needs to be given to agent $1$.
This gives a contradiction.

Formally, Proposition~\ref{prop:i} implies $v_2(\M_2(F^{(6)})\cap X_2)\leq\frac14+\frac14\varepsilon$.
Proposition~\ref{prop:iii} then indicates $v_2(\M_2(F^{(6)})\cap X_1)\geq\frac18-\frac14\varepsilon$.
Even if the entire $X_{12}$ is allocated to agent $2$ (which is worth $\frac14\varepsilon$), we still have
$$
\left|\M_2(F^{(6)})\cap X_{11}\right|\geq\frac{\frac18-\frac14\varepsilon-\frac14\varepsilon}{1-\varepsilon}=\frac{1-4\varepsilon}{8-8\varepsilon}.
$$
For agent $1$, we must then have
$$\left|\M_1(F^{(6)})\cap X_{11}\right|\leq\frac14-\frac{1-4\varepsilon}{8-8\varepsilon}=\frac{1+2\varepsilon}{8-8\varepsilon}.$$
To find an upper bound for $v_1(\M_1(F^{(6)}))$, suppose agent $1$ receives all of $X_{12},X_{21}$ and $X_{22}$.
Even in this case, we have the following upper bound for $v_1(\M_1(F^{(6)}))$:
$$v_1(\M_1(F^{(6)}))\leq \frac{1+2\varepsilon}{8-8\varepsilon}\cdot 1 + \frac14\cdot\varepsilon +\frac14\cdot2\varepsilon+ \frac14\cdot\varepsilon=\frac{1+2\varepsilon}{8-8\varepsilon}+\varepsilon.$$
Taking $\varepsilon\rightarrow0$, the limit of the above upper bound is $\frac18$.
Thus, $v_1(\M_1(F^{(6)}))<\frac14+\frac14\varepsilon$ for sufficiently small $\varepsilon$, and Proposition~\ref{prop:ii} cannot be satisfied.

This concludes the proof of Theorem~\ref{thm:main2}.

\subsection{Truthful, Approximately Proportional Mechanisms}
We have just proved that a truthful, proportional mechanism does not exist.
To deploy cake cutting mechanisms in practice, it leaves us to consider relaxations on truthfulness or proportionality.

In the theorem below, we show the non-existence of approximately proportional mechanisms if we do not relax the dominant strategy truthfulness.
In the next three sections, we will consider some relaxations on truthfulness and provide some mechanisms satisfying the relaxed truthfulness (while guaranteeing fairness).

\begin{theorem}\label{thm:main2_approx}
There does not exist a truthful and $0.974031$-approximately proportional mechanism, even when all of the followings hold:
\begin{itemize}
    \item there are two agents;
    \item each agent's value density function is piecewise-constant;
    \item each agent is hungry: each $f_i$ satisfies $f_i(x)>0$ for any $x\in[0,1]$;
    \item the mechanism needs not to be entire.
\end{itemize}
\end{theorem}

The proof of the above theorem is similar to the proof of Theorem~\ref{thm:main2}, with the addition of many approximation analyses.
We defer it to \ref{append:proof_approx}.

The existence of truthful and approximately proportional mechanism with smaller approximation ratios is still an open problem, and we will discuss more about it in Sect.~\ref{sect:conclusion}.

\section{Monotone Value Density Functions}
\label{sect:monotone}
In this section, we demonstrate that the impossibility result in Theorem~\ref{thm:main2} fails when agents' value density functions are piecewise-constant and \emph{monotone}.
Monotone valuations are natural in many applications where agents' interest on the resource is decreasing or increasing due to some special properties of the resource.
For example, when allocating the advertisement slots on a web page, the slots on the top of the page always have larger values to the agents, although some agents may value those top slots higher than the others.
Under this setting, the mechanism takes $n$ piecewise-constant and monotone value density functions as inputs.
In particular, an agent is forbidden to report a non-monotone value density function (we can assume the mechanism will always first convert a non-monotone piecewise-constant function to a monotone one based on a consistent rule).

\subsection{A Truthful Envy-Free Mechanism for Monotone Value Density Functions}
In this section, we present a truthful envy-free mechanism for monotone value density functions.

\begin{theorem}\label{thm:monotone}
If all the agents' value density functions are piecewise-constant and increasing (or decreasing), there exists a mechanism that is truthful, entire, and envy-free (and thus proportional). 
\end{theorem}

Without loss of generality, we can suppose all the value density functions are increasing. 
Our mechanism is presented in Mechanism~\ref{alg:simple_mono}.
The mechanism consists of $n$ iterations, and the allocation for agent $i$ is determined at the $i$-th iteration.
At the first iteration, agent $1$ cuts the cake at the points of discontinuity of $f_1$, so that the cake is split into multiple intervals and $f_1$ is uniform on each interval.
After that, the leftmost $1/n$ fraction of each interval is allocated to agent $1$.
At the $i$-th iteration, the unallocated part of the cake may consist of multiple intervals, and these intervals are further sub-divided into more intervals where $f_i$ is uniform on each of them.
On each of these intervals, agent $i$ receives the leftmost $1/|S|$ fraction, where $S$ is the set of the agents who have not been allocated (in particular, $|S|=n-i+1$).

\begin{algorithm}
\caption{An envy-free and truthful cake cutting algorithm for increasing functions}
\label{alg:simple_mono}
\begin{algorithmic}[1]
\STATE initialize $U=\{[0,1)\}$\; // $U$ is a collection of intervals that are currently unallocated
\STATE initialize $S=[n]$\; // $S$ is the set of agents who have not been allocated
\STATE \textbf{for} each $i=1,\ldots,n$:
\STATE \hspace{0.5cm} initialize agent $i$'s allocation $A_i\leftarrow\emptyset$\;
\STATE \hspace{0.5cm} \textbf{for} each interval $[s,t)\in U$, 
\STATE \hspace{1cm} let $x_1,x_{2},\ldots,x_{k}$ be the discontinuity points of $f_i$ that are in the interval $[s,t)$\;
\STATE \hspace{1cm} split $[s,t)$ to $k+1$ intervals $I_0=[s,x_1),I_1=[x_1,x_2),\ldots,I_k=[x_k,t)$\;
\STATE \hspace{1cm} for each $I_j=[y,z)$ with $j=0,1,\ldots,k$, update agent $i$'s allocation by $A_i\leftarrow A_i\cup[y, y+\frac{z-y}{|S|})$\;
\STATE \hspace{0.5cm} \textbf{endfor}
\STATE \hspace{0.5cm} update $S\leftarrow S\setminus\{i\}$\;
\STATE \hspace{0.5cm} update $U$ by replacing each $I\in U$ with the interval(s) in $I\setminus A_i$ (remove $I$ from $U$ if $I\subseteq A_i$)\;
\STATE \textbf{endfor}
\STATE \textbf{return} the allocation $(A_1,\ldots,A_n)$\;
\end{algorithmic}
\end{algorithm}

To simplify later analysis, we use $U_i$ and $S_i$ to represent the value of $U$ and $S$ before the $i$-th iteration, respectively (in particular, $U_1 = \{[0, 1)\}$ and $S_1 = [n]$). 
Given a collection $\mathcal{I}$ of intervals and a value density function $i$, we slightly abuse the notation and let
$$v_i(\mathcal{I})=\sum_{I\in\mathcal{I}}v_i(I).$$

We first show the following proposition.
\begin{proposition}\label{prop:simple_mono_lemma1}
For each iteration $i=1,\ldots,n$, the followings are true.
\begin{enumerate}
    \item $|A_i|=\frac1n$
    \item $v_j(A_i)\leq\frac1{|S_i|}v_j(U_i)$ for any $j=1,\ldots,n$.  
    \item $v_j(A_i)=\frac1{|S_j|}v_j(U_j)$ for any $j\leq i$.
\end{enumerate}
\end{proposition}
\begin{proof}
Firstly, we argue that $\sum_{u\in U_i} |u| = \frac{n+1-i}{n}$ for any $i\in[n]$. We can prove it by induction. When $i=1$, $\sum_{u\in U_1} |u| = 1$. Assume the inductive hypothesis is true when $i=\ell$. When $i=\ell+1$, since the agent $(i-1)$ receives intervals with total length of exactly $\frac{1}{n-(i-2)}\cdot\left(\sum_{u\in U_{i-1}} |u|\right)$, $\sum_{u\in U_i} |u|$ equals to $\left(\sum_{u\in U_{i-1}} |u|\right) \cdot \left(1 - \frac{1}{n-(i-2)}\right) = \frac{n+1-i}{n}$. Furthermore, it can also immediately imply that each agent will receive intervals with the same total length $\frac1n$. This is because agent $i$ receives exactly $\frac{1}{n+1-i}$ of $U_i$ for any $1\le i \le n$.
We conclude 1.

During the $i$-th iteration, on each interval $[y,z)\in U_i$, agent $i$ receives $[y, y+\frac{z-y}{|S_i|})$ which is a $\frac1{|S_i|}$ fraction of the length.
In addition, (s)he only receives the leftmost part of each interval.
We have $v_j\left([y, y+\frac{z-y}{|S_i|})\right)\leq\frac1{|S_i|}v_j([y,z))$ as $v_j$ is increasing.
Summing up all the intervals in $U_i$, we have $v_j(A_i) \le \frac{1}{|S_i|}v_j(U_i)$ for any agent $j$, which concludes 2.

Now, consider the $j$-th iteration.
For each $[y,z)\in U_j$, agent $j$ receives $[y, y+\frac{z-y}{|S_j|})$.
In addition, by our mechanism, $v_j$ is uniform on each $[y,z)\in U_j$.
Therefore, $v_j(A_j)=\frac1{|S_j|}v_j(U_j)$.
This proves 3 for $j=i$.

For each $i>j$, by using a similar inductive argument as it is in the first paragraph, it can be easily proved that agent $i$ receives a $\frac1{|S_j|}$ fraction of length on $[y,z)$.
(In particular, $[y,z)$ may be further divided into multiple intervals at later iterations, but it is always the case that an agent at a later iteration will receive the average length on each of these intervals.)
Since $v_j$ is uniform on $[y,z)$, the interval that agent $i$ receives on $[y,z)$ is worth exactly $\frac1{|S_j|}v_j\left([y,z)\right)$ in terms of agent $j$'s valuation.
Summing up all the intervals in $U_j$, we have $v_j(A_i)=\frac1{|S_j|}v_j(U_j)$, which concludes 3.
\end{proof}

We next show that the mechanism is entire and envy-free.
\begin{lemma}
Mechanism~\ref{alg:simple_mono} is entire and envy-free (and thus proportional).
\end{lemma}
\begin{proof}
First, in the last iteration $i=n$, we have $|S|=1$, and it is easy to see that agent $n$ receives the remaining part of the cake.
Thus, the mechanism is entire.

Next, we prove Mechanism~\ref{alg:simple_mono} is envy-free. 
The third part of Proposition~\ref{prop:simple_mono_lemma1} immediately implies $v_j(A_j) = v_j(A_{j+1}) = \cdots = v_j(A_n)$, so agent $j$ does not envy any of $j+1,\ldots,n$.

Hence, we just need to demonstrate that $v_j(A_j)$ is the maximum among $v_j(A_1), \ldots, v_j(A_j)$. According to the second part of Proposition~\ref{prop:simple_mono_lemma1}, we have $v_j(A_i) \le \frac{1}{|S_i|}v_j(U_i) = \frac{1}{|S_i|}\left(v_j(A_i) + v_j(U_{i+1})\right)$, which implies 
\begin{equation}\label{eqn:monotoneEF}
    v_j(A_i) \le \frac{1}{|S_{i+1}|} v_j(U_{i+1}).
\end{equation}
For agent $j-1$, we have
$$
v_j(A_{j-1}) \le \frac{1}{n+1-j} v_j(U_j) = v_j(A_j),
$$
where the last equality is due to 3 of Proposition~\ref{prop:simple_mono_lemma1}.
This shows agent $j$ does not envy agent $j-1$.

For agent $j-2$, we have
\begin{align*}
v_j(A_{j-2}) &\le \frac{1}{n+2-j} v_j(U_{j-1}) \tag{by (\ref{eqn:monotoneEF})}\\
&= \frac{1}{n+2-j}\left( v_j(A_{j-1}) + v_j(U_{j}) \right)\\
&\leq\frac{1}{n+2-j}\left( v_j(A_{j}) + v_j(U_{j}) \right)\tag{we have shown agent $j$ does not envy agent $j-1$}\\
&=\frac{1}{n+2-j}\left( v_j(A_{j}) + (n+1-j)v_j(A_{j}) \right)\tag{3 of Proposition~\ref{prop:simple_mono_lemma1}}\\
&=v_j(A_j),
\end{align*}
which implies agent $j$ does not envy agent $j-2$.

This analysis can be continued, and at last for agent $1$, we have
\begin{align*}
    v_j(A_1) &\le \frac{1}{n-1} v_j(U_{2})\tag{by (\ref{eqn:monotoneEF})}\\
    &=\frac{1}{n-1}\left(v_j(A_2) + \cdots + v_j(A_{j-1}) + v_j(U_{j}) \right)\\
    &\leq\frac1{n-1}\left((j-2)v_j(A_j)+v_j(U_j)\right)\tag{we have shown agent $j$ does not envy any of $2,\ldots,j-1$}\\
    &=\frac1{n-1}\left((j-2)v_j(A_j)+(n+1-j)v_j(A_{j}) \right)\tag{3 of Proposition~\ref{prop:simple_mono_lemma1}}\\
    &=v_j(A_j).
\end{align*}

Therefore, we can conclude that agent $i$ does not envy other agents, which means Mechanism~\ref{alg:simple_mono} is envy-free.
Since the mechanism is envy-free and entire, it is proportional.
\end{proof}
Finally, we show that this mechanism is truthful.
\begin{lemma}
Mechanism~\ref{alg:simple_mono} is truthful.
\end{lemma}
\begin{proof}
According to Mechanism~\ref{alg:simple_mono}, the intervals allocated to agent $1,\ldots,i-2$ and $i-1$ will not change no matter how agent $i$ misreports his(her) valuation. On the other hand, we can find that the intervals allocated to agent $i$ only depend on the discontinuity points of his(her) reported value density function. For this reason, we will only discuss the cases that agent $i$ misreports his(her) discontinuity points.

Assume $U_i = \{ [s_1, t_1), \ldots, [s_k, t_k)\}$. Suppose the discontinuity points of agent $i$'s reported value density function split $U_i$ into the set of intervals $\mathcal{I}$. Since agent $i$ will receive the leftmost part of each interval in $\mathcal{I}$ and $v_i$ is monotone, the total value agent $i$ can receive is at most:
$$
\sum_{I\in \mathcal{I}} \frac{v_i(I)}{|S_i|}  = \frac{1}{|S_i|}v_i\left(\mathcal{I}\right)  = \frac{1}{|S_i|}v_i\left(U_i\right).
$$

However, according to the third part of Proposition~\ref{prop:simple_mono_lemma1}, agent $i$ can already receive a set of intervals with value $\frac{1}{|S_i|}v_i\left(U_i\right)$, so there is no need for agent $i$ to misreport his(her) valuation.
Therefore, Mechanism~\ref{alg:simple_mono} is truthful.
\end{proof}

The two lemmas above conclude Theorem~\ref{thm:monotone}.

We remark that the success of Mechanism~\ref{alg:simple_mono} depends crucially on its \emph{non-anonymity}.
A mechanism is \emph{anonymous} if agent's received value does not depend on agent's index.
Formally, a mechanism $\M$ is anonymous if, given any permutation $\pi:[n]\to[n]$ and letting $(A_1,\ldots,A_n)$ and $(B_1,\ldots,B_n)$ be the outputs of $\M(f_1,\ldots,f_n)$ and $\M(f_{\pi(1)},\ldots,f_{\pi(n)})$ respectively, we have $f_i(A_i)=f_i(B_{\pi^{-1}(i)})$ for each $i=1,\ldots,n$.
Notice that Mechanism~\ref{alg:simple_mono} is highly non-anonymous, as agents' indices play a crucial role on their allocations.
Although anonymity appears to be a mild assumption that is held by many natural mechanisms, it is incompatible with truthfulness even if we completely disregard fairness and even for some very special cases.
Bei et al.~\cite{bei2020truthful} show that there does not exist a truthful and anonymous mechanism even if each agent's value density function $f_i$ is in the following form that is characterized by only one parameter $s_i$
$$f_i(x)=\left\{\begin{array}{ll}
    1 & x\in[0,s_i) \\
    0 & x\in[s_i,1]
\end{array}\right..$$
Notice that this type of functions is a special case of monotone functions.

\setcounter{instance}{0}

\subsection{Incompatibility with Pareto-Optimality}
In this section, we show that truthful and fairness is incompatible with Pareto-optimality, even for piecewise-constant and monotone value density functions.

\begin{theorem}\label{thm:monotonePO}
For any $\alpha>0$, there does not exist a mechanism that is truthful, $\alpha$-proportional, and Pareto-optimal, even when each agent's value density function is piecewise-constant and increasing (or decreasing), and contains at most one point of discontinuity.
\end{theorem}
\begin{proof}
We will prove Theorem~\ref{thm:monotonePO} by contradiction as before. 
Suppose there exists a truthful, $\alpha$-approximately proportional and Pareto-optimal mechanism $\M$.
We will construct three instances with two agents, analyze the outputs of $\M$ on these instances, and prove that truthfulness, proportionality and Pareto-optimality cannot be guaranteed on all these instances.
The three instances are shown in Table~\ref{tab_PO}.

We start with the same instance as Instance~\ref{ins1} in Sect.~\ref{sect:impossibility}.
\begin{instance}\label{ins7}
$F^{(1)}=(f_1^{(1)},f_2^{(1)})$, where $f_1^{(1)}(x)=1$ and $f_2^{(1)}(x)=1$ for $x\in[0,1]$.
\end{instance}

To guarantee $\alpha$-approximate proportionality, we must have $|\M_1(F^{(1)})| \geq \frac\alpha2$ and $|\M_2(F^{(1)})| \geq \frac{\alpha}{2}$. 
The interval $[0,1]$ will be separated into several sub-intervals $\{U_0, \dots, U_t, X\}$, where $X=[s, 1]$ for some $s\in(0, 1)$ is the rightmost interval, and each interval is allocated to exactly one agent.
Without loss of generality, we assume agent $2$ receives $X$.
We have $s\geq\frac\alpha2$ to ensure $|\M_1(F^{(1)})| \geq \frac\alpha2$. 

We use $Y$ for the interval $[0, s)$.

\begin{instance}\label{ins8}
$F^{(2)}=(f_1^{(2)},f_2^{(2)})$, where $f_1^{(2)}(x)=1$ for $x\in[0,1]$ and
$$f_2^{(2)}(x)=\left\{\begin{array}{ll}
    0 & x\in Y \\
    1 & x\in X
\end{array}\right..$$
\end{instance}

Firstly, agent $1$ must receive $Y$. 
If any part of $Y$ is given to agent $2$, it violates Pareto-optimality. 
Secondly, agent $2$ must receive $X$. 
Otherwise, agent $2$ will misreport his/her true value density function from $f_2^{(2)}$ to $f_2^{(1)}$ to obtain an allocation that contains $X$, which increases his/her value. 
Hence, $M_1(F^{(2)}) = Y$ and $M_2(F^{(2)}) = X$.

The allocation is also $\alpha$-approximately proportional. 
For agent $2$, (s)he receives the maximum value. 
For agent $1$, $|Y|\ge\frac{\alpha}{2}$, so $v_1(Y)\ge\frac{\alpha}{2}$.

\begin{instance}\label{ins9}
$F^{(3)}=(f_1^{(3)},f_2^{(3)})$, where
$$f_1^{(3)}(x)=\left\{\begin{array}{ll}
    \varepsilon & x\in Y \\
    1 & x\in X
\end{array}\right.\qquad\mbox{and}\qquad
f_2^{(3)}(x)=\left\{\begin{array}{ll}
    0 & x\in Y \\
    1 & x\in X
\end{array}\right..$$
Moreover, $\varepsilon$ is set to be small enough such that  $\varepsilon<\frac{(1-s)\alpha}{(2-\alpha)s}$ for the $\alpha$ given in the theorem.
\end{instance}

Again, agent $1$ must receive $Y$ to ensure Pareto-optimality. 
Moreover, since $\varepsilon$ is extremely small, it does not satisfy $\alpha$-proportionality for agent $1$ if (s)he only receives $Y$, so (s)he must receive additional part from $X$. 
In this case, however, when considering Instance~\ref{ins8}, by misreporting from $f_1^{(2)}$ to $f_1^{(3)}$, agent $1$ will receive higher value and the mechanism cannot be truthful.

Hence, we conclude from this instance that truthfulness, proportionality and Pareto-optimality cannot be guaranteed at the same time.
\end{proof}

\begin{table}
    \centering
    \begin{tabular}{|l|l|}
    \hline
    Instance & Allocation \\
    \hline
      \includegraphics{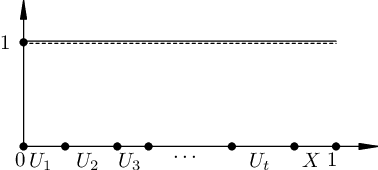}   &  
      \begin{tabular}{l}
           $\M_1(F^{(1)})$: some collections of $U_i$'s  \\
           $\M_2(F^{(1)})$: some collections of $U_i$'s and $X$ 
      \end{tabular}\\
      \hline
      \includegraphics{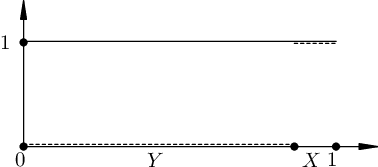}  & $\M_1(F^{(2)})=Y$ and $\M_2(F^{(2)})=X$\\
      \hline
      \includegraphics{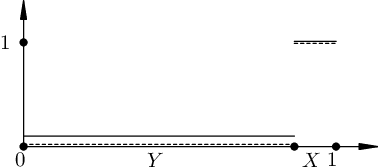}  & See the proof of Theorem~\ref{thm:monotonePO}\\
      \hline
    \end{tabular}
    \caption{Instances constructed for the proof of Theorem~\ref{thm:monotonePO} and the corresponding allocations given by $\M$. The value density for agent $1$ is shown in solid lines, and the value density for agent $2$ is shown in dashed lines.}
    \label{tab_PO}
\end{table}

As a remark, Theorem~\ref{thm:monotonePO} holds if Pareto-optimality is replaced by \emph{non-wastefulness}, with exactly the same proof.
An allocation is \emph{non-wasteful} if each $A_i$ does not contain any part where the value of $f_i$ is zero.
Notice that non-wastefulness is a weaker efficiency guarantee than Pareto-optimality.

\section{On Weaker Truthful Guarantees, Risk-Averse Truthfulness}
\label{sect:RAT}
We have seen in the previous section that standard dominant strategy truthfulness cannot be guaranteed if we want a proportional mechanism or even an approximately proportional mechanism with a sufficiently large approximation ratio.
In this section, we will consider weaker truthful criteria.

One natural idea of relaxing truthfulness is to consider \emph{approximation on truthfulness}, where an agent will not receive a utility that is more than $\alpha$ times the utility (s)he would have received by truth-telling.
However, such a notion is unconvincing in the game theory aspect, although it may be compatible in the spirit of approximation algorithm.
An agent will still misreport his/her valuation under an $\alpha$-approximately truthful mechanism.
On the other hand, there may be other much more stable equilibria than the truth-telling profile.
Agents' behaviors are still largely unpredictable under an $\alpha$-approximately truthful mechanism.
Therefore, we seek some other relaxation on truthfulness.

A common truthful criterion is to require that the truth-telling profile form a \emph{Nash Equilibrium}.
In many applications, this is a significantly weaker guarantee than dominant strategy truthfulness.
However, in our cake cutting case with direct revelation mechanisms, this truthful criterion is equivalent to the dominant strategy truthfulness, as the following theorem shows.

\begin{theorem}\label{thm:Nash}
If a mechanism $\M$ satisfies that agents' strategies of truthfully reporting their value density functions form a Nash equilibrium, then $\M$ is (dominant strategy) truthful.
\end{theorem}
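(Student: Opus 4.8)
The plan is to show that the Nash-equilibrium hypothesis, though seemingly weaker, already forces the full dominant-strategy condition in the cake-cutting setting. The crucial structural observation is that an agent's utility $v_i(\M_i(f_1,\ldots,f_n))$ depends on the other agents' reports \emph{only} through the allocation that $\M$ produces, and that the set of allocations $\M$ can produce when agent $i$ varies $f_{-i}$ (holding $f_i$ fixed at its true value) is — this is the key point — already ``reachable'' inside a single profile in the sense relevant to deviations. More concretely, fix agent $i$, the true profile $f=(f_1,\ldots,f_n)$, and a potential misreport $f_i'$. I want to rule out $v_i(\M_i(f_1,\ldots,f_i',\ldots,f_n)) > v_i(\M_i(f))$.

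First I would consider the auxiliary profile $g = (f_1,\ldots,f_{i-1},f_i',f_{i+1},\ldots,f_n)$ and examine what the Nash hypothesis says \emph{at $g$}. In the game where the true type profile is $g$, truthful reporting is by assumption a Nash equilibrium, so agent $i$ (whose true type in this hypothetical is $f_i'$) cannot gain by deviating to $f_i$; that gives $v'_i(\M_i(g)) \ge v'_i(\M_i(f))$ where $v'_i$ is the valuation induced by $f_i'$ — but that is about the wrong valuation, so this direct attempt does not immediately close the loop. The fix is to instead run the Nash condition \emph{at the true profile $f$}: truthful reporting is a Nash equilibrium at $f$, which says precisely that for every unilateral deviation $f_i'$ of agent $i$, $v_i(\M_i(f)) \ge v_i(\M_i(f_1,\ldots,f_i',\ldots,f_n))$. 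But that \emph{is} the dominant-strategy inequality for this particular $(f, f_i', i)$. Since $f$, $i$, and $f_i'$ were arbitrary, and the hypothesis ``truth-telling forms a Nash equilibrium'' is meant to hold \emph{for every type profile} (that is the content of the theorem's phrasing — the equilibrium property is a property of the mechanism, asserted at all profiles), we are done: the Nash condition quantified over all profiles is literally the dominant-strategy condition quantified over all profiles.

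So the real content is just to make explicit that ``truth-telling is a Nash equilibrium'' in the mechanism-design sense means ``for \emph{every} realized profile of true value density functions, no agent can unilaterally improve by misreporting'', and then observe that unpacking the quantifiers yields exactly the displayed definition of truthfulness in Section~\ref{sect:prelim}. I would write the one-line derivation: for arbitrary $i$, arbitrary $(f_1,\ldots,f_n)$, and arbitrary $f_i'$, apply the Nash property to the profile $(f_1,\ldots,f_n)$ and to agent $i$'s deviation $f_i'$ to conclude $v_i(\M_i(f_1,\ldots,f_n)) \ge v_i(\M_i(f_1,\ldots,f_{i-1},f_i',f_{i+1},\ldots,f_n))$.

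The only ``obstacle'' is purely expository: one must be careful that the two notions are stated with the same quantifier structure, since a reader might (incorrectly) imagine ``Nash equilibrium'' as a statement about a single fixed profile. The point worth stressing is that cake cutting with direct revelation has no payoff-relevant state beyond the reported profile, so there is no room for the Nash/dominant-strategy gap that arises in settings with, e.g., incomplete information or correlated types; once the equilibrium property is demanded at all profiles, it coincides with dominant-strategy truthfulness. I expect this to be essentially the entire proof, occupying at most a short paragraph.
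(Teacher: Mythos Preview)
Your proposal is correct and uses essentially the same idea as the paper: the key observation that any tuple of other agents' \emph{reports} $f_{-i}$ can equally well be a tuple of other agents' \emph{true types}, so the Nash inequality (quantified over all true profiles) already delivers the dominant-strategy inequality (quantified over all report profiles). The paper phrases this contrapositively --- taking a profile at which dominant-strategy truthfulness fails and relabeling the other agents' reports as their true valuations to exhibit a Nash failure --- while you do it directly, but the content is identical.
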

\begin{proof}
Suppose $\M$ satisfying this property is not dominant strategy truthful.
Given a valuation profile $(f_1,\ldots,f_n)$, there must exist an agent $i$ and $n-1$ value density functions $f_1',\ldots,f_{i-1}',f_{i+1}',\ldots,f_n'$ reported by the other $n-1$ agents, such that reporting certain $f_i'$ is more beneficial for agent $i$ than truthfully reporting $f_i$.
Now, consider a different valuation profile $(f_1',\ldots,f_{i-1}',f_i,f_{i+1}',\ldots,f_n')$.
In this new profile, for each $j\neq i$, the function $f_j'$, being the reported function in the previous case, becomes the true valuation for agent $j$.
In this new setting, if the remaining $n-1$ agents truthfully report their value density functions, which are $f_1',\ldots,f_{i-1}',f_{i+1}',\ldots,f_n'$, agent $i$'s best response is to report $f_i'$ instead of his/her true valuation $f_i$ (as we have seen in the first setting).
This indicates that truth-telling is not a Nash equilibrium.
\end{proof}

We remark that the theorem above holds in a very general context: it holds in every normal form game where each strategy profile can represent a truthful profile (i.e., for each strategy profile $(s_1,\ldots,s_n)$, there is an instance where, for each $i=1,\ldots,n$, $s_i$ is the ``truthful strategy'' for agent $i$).

\medskip

Even though we do not have any progress on many standard truthful guarantees in game theory, there are still mechanisms that can achieve ``a certain degree of truthfulness'' in practice.
Most notably, the \emph{I-cut-you-choose} protocol achieves some kind of truthfulness.
The protocol works for proportional/envy-free cake cutting with two agents: agent $1$ find a point $x$ such that $v_1([0,x])=v_1([x,1])$; agent $2$ is allocated one of $[0,x]$ and $[x,1]$ that is more valuable to him/her, and the other piece is allocated to agent $1$.
It is easy to see that agent~$2$'s dominant strategy is truth-telling: (s)he has no control on the position of $x$, and truth-telling can ensure (s)he gets a piece with a larger value.
On the other hand, although it is not a dominant strategy for agent $1$ to tell the truth, agent $1$ still does not have the incentive to lie in the case (s)he has no knowledge of agent $2$'s valuation.
If (s)he reports a value density function that results in a different position of $x$, there is always a risk that (s)he will receive a piece with a value less than $1/2$ of the entire cake (i.e., less than the value guaranteed by proportionality).

There are two reasons behind agent $1$'s truth-telling incentive.
Firstly, as mentioned, (s)he does not have prior knowledge on agent $2$'s valuations.
Secondly, (s)he is a risk-averse agent: whenever there is a risk of receiving a value that is less than what (s)he would have received by truth-telling, (s)he prefers to avoid the risk.

Motivated by this example, we define and consider a new truthful criterion: \emph{the risk-averse truthfulness}.

\begin{definition}\label{def:wrat}
A mechanism $\M$ is \emph{risk-averse truthful} if, for each agent $i$ with value density function $f_i$ and for any $f_i'$, either one of the following holds:
\begin{enumerate}
    \item for any $f_1,\ldots,f_{i-1},f_{i+1},\ldots,f_n$,
    $$v_i(\M_i(f_1,\ldots,f_{i-1},f_i,f_{i+1},\ldots,f_n))\geq v_i(\M_i(f_1,\ldots,f_{i-1},f_i',f_{i+1},\ldots,f_n));$$
    \item there exist $f_1,\ldots,f_{i-1},f_{i+1},\ldots,f_n$ such that $$v_i(\M_i(f_1,\ldots,f_{i-1},f_i',f_{i+1},\ldots,f_n))<v_i(\M_i(f_1,\ldots,f_{i-1},f_i,f_{i+1},\ldots,f_n)).$$
\end{enumerate}
\end{definition}

In other words, a mechanism is risk-averse truthful if either an agent's misreporting is non-beneficial, or the misreporting can potentially cause the agent to receive a piece with a value that is less than what (s)he would have received by truth-telling.

The I-cut-you-choose protocol can achieve a stronger truthful property: if agent $1$ modifies the cut-point $x$ by misreporting his/her value density function, there is always a chance that (s)he will receive a piece with a value that is even less than his/her proportional value.
Motivated by this, we define a stronger truthful notion that based on the fairness criterion of proportionality.

\begin{definition}\label{def:rat}
A mechanism $\M$ is \emph{proportionally risk-averse truthful} if 
\begin{enumerate}
    \item $\M$ is proportional, and
    \item for each agent $i$ with value density function $f_i$ and for any $f_i'$, either one of the following holds:
    \begin{enumerate}
    \item for any $f_1,\ldots,f_{i-1},f_{i+1},\ldots,f_n$,
    $$v_i(\M_i(f_1,\ldots,f_{i-1},f_i,f_{i+1},\ldots,f_n))\geq v_i(\M_i(f_1,\ldots,f_{i-1},f_i',f_{i+1},\ldots,f_n));$$
    \item there exist $f_1,\ldots,f_{i-1},f_{i+1},\ldots,f_n$ such that $$v_i(\M_i(f_1,\ldots,f_{i-1},f_i',f_{i+1},\ldots,f_n))<\frac1n v_i([0,1]).$$
    \end{enumerate}
\end{enumerate}
\end{definition}

It is clear that proportional risk-averse truthfulness implies risk-averse truthfulness, as 1 and 2(b) in Definition~\ref{def:rat} imply 2 in Definition~\ref{def:wrat}.
On the other hand, any truthful mechanism without proportional guarantee (e.g., the mechanism allocating the entire cake to a single agent) is risk-averse truthful, but it is not proportionally risk-averse truthful by definition.
Even if we are restricted to proportional mechanisms, there are mechanisms that are risk-averse truthful but not proportionally risk-averse truthful (see Table~\ref{tab:implication} and Table~\ref{tab:mechanisms}).
Thus, proportional risk-averse truthfulness is a strictly stronger notion than risk-averse truthfulness.
In particular, a proportionally risk-averse truthful mechanism is robust against ``slightly risk-seeking'' agents who can accept potentially receiving less values, as long as the received value is at least proportional.



Brams, Jones, and Klamler~\cite{brams2006better} also define a truthful notion in the spirit of agents' risk-averseness and uncertainty about other agents' valuations, for which we will call it \emph{maximin strategy-proofness}\footnote{Brams, Jones, and Klamler call it \emph{strategy-proofness} in their paper~\cite{brams2006better}. However, strategy-proofness now commonly refers to dominant-strategy truthfulness (the truthfulness defined in Sect.~\ref{sect:prelim}). We use the name \emph{maximin strategy-proofness} following the papers~\cite{brams2006better,ortega2022obvious}.}.
Their notion is weaker than our risk-averse truthfulness (and so further weaker than the proportional risk-averse truthfulness).
In \ref{append:Brams}, we will discuss the difference between our truthful notions and theirs, and we will also point out a minor mistake made in their paper.

We remark that there are other truthful notions that relax the dominant-strategy truthfulness with the consideration of agents' uncertainty about each other's utility.
For example, Troyan and Morrill~\cite{troyan2020obvious} define a truthful notion called ``not obviously manipulatable'' which requires that manipulation should not be strictly better off in both the worst case and the best case.
Besides many technical differences, Troyan and Morrill's notion is also conceptually different from our (proportionally) risk-averse truthfulness.
The (proportionally) risk-averse truthfulness puts more focus on agents’ risk-averseness, whereas more focus is put on the difficulty of finding a deviation in Troyan and Morrill's notion.
We will formally define this notion in \ref{append:nom}.
Comparing the strength of our notion with Troyan and Morrill's, neither one implies the other.
We will also formally show this in \ref{append:nom}.

The relationships between these truthful notions are shown in Table~\ref{tab:implication}.

\begin{table}[ht]
    \centering
    \begin{tabular}{lc}
    \toprule
    {\bf Implication} & {\bf Justification}\\
    \hline
     PRAT $\Rightarrow$ RAT+Proportional  &  See the remark following Def~\ref{def:rat}\\
     RAT+Proportional $\not\Rightarrow$ PRAT    & e.g., Ortega and Segal-Halevi's Moving-Knife (Thm~\ref{thm:OS_PRAT} and Thm~\ref{thm:OS_RAT})\\
     RAT $\Rightarrow$ maximinSP & See the remark following Def~\ref{def:maximinSP}\\
     maximinSP $\not\Rightarrow$ RAT & e.g., Dubins-Spanier's Moving-Knife (Thm~\ref{thm:moving-knife} and Thm~\ref{thm:maximinSP_proportional})\\
     PRAT $\not\Rightarrow$ NOM & e.g., Dubins-Spanier's Moving-Knife (Thm~\ref{thm:moving-knife} and Thm~\ref{thm:moving_knife_nom})\\
     NOM $\not\Rightarrow$ RAT & e.g., Mechanism~\ref{alg:rat_ef} (Thm~\ref{thm:rat_ef} and Thm~\ref{thm:rat_ef_nom})\\
     \bottomrule
    \end{tabular}
    \caption{Implications of truthful notions. PRAT stands for proportional risk-averse truthfulness, RAT stands for risk-averse truthfulness, maximinSP stands for maximin strategy-proofness (\cite{brams2006better}), and NOM stands for not obvious manipulability (\cite{troyan2020obvious}).}
    \label{tab:implication}
\end{table}

In Table~\ref{tab:mechanisms}, we present the satisfabilities of the four above-mentioned truthful notions for many (deterministic) mechanisms discussed in this paper, including the two classical mechanisms, the Dubins-Spanier moving-knife procedure~\cite{dubins1961cut} and the Even-Paz algorithm~\cite{even1984note} (see sect.~\ref{sect:RAT_P} for their descriptions), the equitability procedure proposed by Brams, Jones and Klamler~\cite{brams2006better} (see \ref{append:Brams2} for its description), the variant of the moving-knife procedure proposed by Ortega and Segal-Halevi~\cite{ortega2022obvious} (see \ref{append:OSmovingknife} for its description), and the three mechanisms proposed in this paper (see Sect.~\ref{sect:RAT_EF} and Sect.~\ref{sect:RAT_P} for their descriptions).

\begin{table}[ht]
    \centering
    \begin{tabular}{lcccc}
    \toprule
    {\bf Mechanism} & PRAT & RAT & maximinSP & NOM \\
    \hline
    Dubins-Spanier's Moving-Knife~\cite{dubins1961cut} & $\times$ (Thm~\ref{thm:moving-knife}) & $\times$ (Thm~\ref{thm:moving-knife}) & $\checkmark$ (Thm~\ref{thm:maximinSP_proportional}) & $\checkmark$ (Thm~\ref{thm:moving_knife_nom}) \\
    Even-Paz~\cite{even1984note} & $\times$ (Thm~\ref{thm:even-paz}) & $\checkmark/\times^\dag$ (\ref{append:EvenPaz}) & $\checkmark$ (Thm~\ref{thm:maximinSP_proportional}) & $\checkmark$ (Thm~\ref{thm:Even-Paz_nom})  \\
    Equitability Procedure~\cite{brams2006better} & $\times$ (Thm~\ref{thm:equitability_PRAT}) & ? & $\checkmark$ \cite{brams2006better} & $\checkmark$ (Thm~\ref{thm:equitability_nom}) \\
    Ortega and Segal-Halevi's Moving-Knife~\cite{ortega2022obvious} & $\times$ (Thm~\ref{thm:OS_PRAT}) & $\checkmark/\times^\dag$ (\ref{append:OSRAT}) & $\checkmark$ (Thm~\ref{thm:maximinSP_proportional}) & $\checkmark$ \cite{ortega2022obvious} \\
    Mechanism~\ref{alg:rat_ef} & $\checkmark$ (Thm~\ref{thm:rat_ef}) & $\checkmark$ (Thm~\ref{thm:rat_ef}) & $\checkmark$ (Thm~\ref{thm:maximinSP_proportional}) & $\times$ (Thm~\ref{thm:rat_ef_nom}) \\
    Mechanism~\ref{alg:rat_p} (assuming hungry agents) & $\checkmark$ (Thm~\ref{thm:rat_p_risk-averse_hungry}) & $\checkmark$ (Thm~\ref{thm:rat_p_risk-averse_hungry}) & $\checkmark$ (Thm~\ref{thm:maximinSP_proportional}) & $\checkmark$ (Thm~\ref{thm:rat_p_nom}) \\
    Mechanism~\ref{alg:rat_p'} & $\checkmark$ (Thm~\ref{thm:rat_p_change}) & $\checkmark$ (Thm~\ref{thm:rat_p_change}) & $\checkmark$ (Thm~\ref{thm:maximinSP_proportional}) & $\times$ (Thm~\ref{thm:rat_p'_nom}) \\
    \bottomrule
    \end{tabular}
    \caption{Mechanisms and their satisfabilities to the truthful notions proportional risk-averse truthfulness (PRAT), risk-averse truthfulness (RAT), maximin strategy-proofness (maximinSP), and not obvious manipulability (NOM).\\
    $\dag$ The risk-averse truthfulness for the Even-Paz algorithm and Ortega and Segal-Halevi's moving-knife procedure depend on some subtle tie-breaking issues. Both mechanisms are risk-averse truthful if agents are hungry. \ref{append:EvenPaz} and \ref{append:OSRAT} elaborate these.}
    \label{tab:mechanisms}
\end{table}

Finally, we remark that a common Bayesian model captures the uncertainty of other agents' private information: define a probability distribution from which an agent believes that the other agents' private information is drawn (typically, this distribution depends on the information this agent has).
This is a typical setting in the auction theory (e.g., an agent believes that another agent's valuation on an item is drawn uniformly at random from $[0,1]$).
However, in our case, we do not see any natural way to define a probability distribution over piecewise-constant functions.

\section{Risk-Averse Truthful Envy-Free Mechanisms}
\label{sect:RAT_EF}
There exists a simple algorithm that outputs envy-free allocations for $n$ agents with piecewise-constant value density functions.
The algorithm first collects all the points of discontinuity from all agents.
This partitions the cake into multiple intervals where each agent's value density function is uniform on each of these intervals.
Then, the algorithm uniformly allocates each interval to all agents.
The output allocation $(A_1,\ldots,A_n)$ of this algorithm satisfies $v_i(A_j)=\frac1n v_i([0,1])$ (this property of an allocation is called \emph{perfect}), which is clearly envy-free.
However, to make the algorithm deterministic, we need to specify a left-to-right order of the $n$ agents on how each interval is allocated.
The algorithm is described in Mechanism~\ref{alg:simple_ef}.

\begin{algorithm}
\caption{A simple envy-free cake cutting algorithm}
\label{alg:simple_ef}
\begin{algorithmic}[1]
\STATE let $X_i$ be the set of all points of discontinuity for $f_i$\;
\STATE let $X=\bigcup_{i=1}^nX_i$\;
\STATE let $X=\{x_1,\ldots,x_{m-1}\}$ be sorted by ascending order, and let $x_0=0,x_m=1$\;
\STATE initialize $A_i=\emptyset$ for each $i=1,\ldots,n$\;
\STATE \textbf{for} each $j=0,1,\ldots,m-1$:
\STATE \hspace{0.5cm} for each agent $i=1,\ldots,n$: $A_i\leftarrow A_i\cup \left[x_j+\frac{i-1}n(x_{j+1}-x_j),x_j+\frac in(x_{j+1}-x_j)\right)$;
\STATE \textbf{endfor}
\STATE \textbf{return} allocation $(A_1,\ldots,A_n)$\;
\end{algorithmic}
\end{algorithm}

However, Mechanism~\ref{alg:simple_ef} is not even risk-averse truthful.

\begin{theorem}
Mechanism~\ref{alg:simple_ef} is not risk-averse truthful.
\end{theorem}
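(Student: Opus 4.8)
The plan is to pin down a single agent, a single true density, and a single misreport that violate both clauses of Definition~\ref{def:wrat}. Take $n=2$ and reason about agent~$1$. The starting observation is structural: the allocation produced by Algorithm~\ref{alg:simple_ef} depends only on the union $X=X_1\cup X_2$ of the two agents' discontinuity sets (the reported values matter only through where the jumps are), and agent~$1$ always receives the left half, by length, of every interval that $X$ induces on $[0,1]$. So agent~$1$'s only influence on its own outcome is which points it injects into $X$.

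Fix $0<a<b<1$ and let agent~$1$'s true density $f_1$ equal $2$ on $[a,b)$ and $1$ on the rest of $[0,1]$, so that its discontinuity set is $\{a,b\}$; let the misreport $f_1'$ be any piecewise-constant function whose discontinuity set is exactly $\{a\}$. I would prove: for \emph{every} $f_2$, agent~$1$'s value under $f_1'$ is at least its value under $f_1$, and strictly more when $f_2$ is uniform. Granting this, clause~(1) of Definition~\ref{def:wrat} fails (truth-telling is strictly beaten when $f_2$ is uniform) and clause~(2) fails (no $f_2$ makes $f_1'$ strictly worse than $f_1$), so $(f_1,f_1')$ witnesses that Algorithm~\ref{alg:simple_ef} is not weakly risk-averse truthful.

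To establish the key claim, let $Z$ be agent~$2$'s discontinuity set. The combined sets $\{a,b\}\cup Z$ (truth) and $\{a\}\cup Z$ (misreport) differ only in whether $b$ is present. If $b\in Z$ the two allocations coincide, so the values are equal. If $b\notin Z$, let $[a',b']$ be the interval of the misreport partition that contains $b$; then $a\le a'<b<b'\le 1$, and the truth partition is obtained by splitting $[a',b']$ at $b$, with every other interval (and agent~$1$'s share of it) unchanged. On $[a',b']$ agent~$1$ gets $[a',\tfrac{a'+b'}{2})$ under $f_1'$ and $[a',\tfrac{a'+b}{2})\cup[b,\tfrac{b+b'}{2})$ under $f_1$; these two sets have the same length $\tfrac{b'-a'}{2}$, the region the former has in excess lies in $[\tfrac{a'+b}{2},b)\subseteq[a,b)$ where $f_1\equiv 2$, and the region the latter has in excess lies in $[b,b')\subseteq[b,1)$ where $f_1\equiv 1$. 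Since these two excess regions have equal, positive length $\ell$, the value difference is exactly $(2-1)\ell>0$ in agent~$1$'s favour. Taking $f_2$ uniform ($Z=\emptyset$) is a concrete case where this holds strictly.

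The delicate point — and the main obstacle — is proving the claim \emph{uniformly over all $f_2$}, i.e.\ checking that dropping the cut at $b$ always trades mass out of the high plateau $[a,b)$ and into the strictly-less-valued region $[b,1)$, regardless of where agent~$2$'s cuts fall. This is exactly what dictates the construction: $f_1$ must be a single plateau immediately followed by a lower value, and the misreport must delete the \emph{right} endpoint of the plateau while keeping the left endpoint $a$, so that the left neighbour $a'$ of $b$ in the combined partition is forced to lie in $[a,b)$ and the swapped-in interval $[b,b']$ is forced to lie in $[b,1)$.
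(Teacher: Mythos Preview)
Your argument is correct. The symmetric-difference computation on the single affected interval $[a',b']$ goes through in both cases ($\tfrac{a'+b'}{2}\le b$ and $\tfrac{a'+b'}{2}>b$): in each case $M\setminus T\subseteq[a,b)$ and $T\setminus M\subseteq[b,b')$ with equal positive length, so the misreport is weakly better for every $f_2$ and strictly better when $Z=\emptyset$.

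Your route differs from the paper's in the construction and in the comparison step. The paper takes a \emph{monotone decreasing} $f_1$ (high on $[0,\tfrac1n)$, low on $[\tfrac1n,1]$) and deletes its only discontinuity, then argues globally: truth-telling yields exactly the proportional share (the algorithm is exact), while under the misreport agent~$1$ still gets the leftmost $\tfrac1n$ of every induced interval, which by monotonicity of $f_1$ is worth at least a $\tfrac1n$ fraction of each interval's value---hence never below proportional, and strictly above in one instance. You instead build a non-monotone plateau, keep its left endpoint and drop its right endpoint, and compare the two allocations directly on the one interval where they differ. Your approach makes the mechanism of the gain (a length-preserving swap from the low region into the high region) completely explicit; the paper's monotonicity argument is a one-liner and generalizes immediately to $n>2$ without rewriting the local computation. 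Both exploit the same structural fact you identified: agent~$1$'s only lever is which cut points it contributes, and removing a cut on the right edge of a high-value region shifts its share leftward into that region.
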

\begin{proof}
Consider $f_1$ such that $f_1(x)=1$ for $x\in[0,\frac1n)$ and $f_1(x)=0.5$ for $x\in[\frac1n,1]$, and consider $f_1'(x)=1$ for $x\in[0,1]$.
Let $\M$ be the mechanism.
We aim to show that, 1) there exist $f_2,\ldots,f_n$ such that $v_1(\M_1(f_1',f_2,\ldots,f_n))>v_1(\M_1(f_1,f_2,\ldots,f_n))$, and 2) for any $f_2,\ldots,f_n$, $v_1(\M_1(f_1',f_2,\ldots,f_n))\geq v_1(\M_1(f_1,f_2,\ldots,f_n))$.
That is, misreporting $f_1$ to $f_1'$ is sometimes more beneficial and always no harm.

To show 1), consider $f_2(x)=\cdots=f_n(x)=1$ for $x\in[0,1]$.
If agent $1$ truthfully reports $f_1$, (s)he will receive $[0,\frac1{n^{2}})\cup[\frac1n,\frac1n+\frac{n-1}{n^{2}})$, which is worth $\frac1{n^2}+\frac{n-1}{2n^2}$.
If agent $1$ reports $f_1'$, the mechanism will see $n$ uniform functions, and allocation $[0,\frac1n)$ to agent $1$, which is worth $\frac1n$, which is more than $\frac1{n^2}+\frac{n-1}{2n^2}$.

To show 2), consider any $f_2,\ldots,f_n$.
Suppose agent $1$ reports $f_1'$. Let $X$ be defined in Step~2 and 3 of the mechanism with respect to $f_1',f_2,\ldots,f_n$.
Agent $1$ always receives the leftmost $1/n$ fraction of each $[x_j,x_{j+1})$.
Since $f_1$ is monotonically decreasing, this is worth at least $1/n$ of $v([x_j,x_{j+1}))$, and agent $1$ receives at least his/her proportional share overall.
On the other hand, if agent $1$ truthfully reports $f_1$, (s)he will always receive exactly his/her proportional share, which is weakly less than what (s)he would receive by reporting $f_1'$.
\end{proof}

The reason for Mechanism~\ref{alg:simple_ef} not being risk-averse truthful is that an agent can ``delete'' a point of discontinuity to merge two intervals $[x_j,x_{j+1})$ and $[x_{j+1},x_{j+2})$.
This may be more beneficial if his/her value is higher on $[x_j,x_{j+1})$ (or $[x_{j+1},x_{j+2})$) and (s)he knows that the mechanism will allocate a piece on the very left (or very right) of $[x_j,x_{j+2})$.
Therefore, it is the deterministic left-to-right order on each interval that compromises the truthfulness.
It is easy to randomize Mechanism~\ref{alg:simple_ef} such that Mechanism~\ref{alg:simple_ef} is \emph{truthful in expectation}, meaning that an expected utility optimizing agent's dominant strategy is truth-telling.
To achieve this, we just need to partition each $[x_j,x_{j+1})$ evenly into $n$ pieces and allocate these $n$ pieces to the $n$ agents by a random perfect matching.
This is essentially the Mechanism proposed by Mossel and Tamuz~\cite{mossel2010truthful}.

We propose a deterministic proportionally risk-averse truthful and envy-free mechanism that uses similar ideas.
The mechanism is the same as Mechanism~\ref{alg:simple_ef}, except that the left-to-right order on each interval $[x_j,x_{j+1})$ depends on the index $j$.
Intuitively, if an agent tries to merge two intervals, (s)he does not know where exactly his/her $1/n$ fraction of $[x_j,x_{j+1})$ is, as (s)he does not know other agents' value density functions.
This makes it possible that (s)he ends up receiving a portion where (s)he has less value on.
The mechanism is shown in Mechanism~\ref{alg:rat_ef}.

\begin{algorithm}
\caption{A risk-averse truthful envy-free cake cutting mechanism}
\label{alg:rat_ef}
\begin{algorithmic}[1]
\STATE let $X_i$ be the set of all points of discontinuity for $f_i$\;
\STATE let $X=\bigcup_{i=1}^nX_i$\;
\STATE let $X=\{x_1,\ldots,x_{m-1}\}$ be sorted by ascending order, and let $x_0=0,x_m=1$\;
\STATE initialize $A_i=\emptyset$ for each $i=1,\ldots,n$\;
\STATE \textbf{for} each $j=0,1,\ldots,m-1$:
\STATE \hspace{0.5cm} for each agent $i$: $A_i\leftarrow A_i\cup \left[x_j+\frac{i+j-1\mod n}n(x_{j+1}-x_j),x_j+\frac{(i+j-1\mod n) +1}n(x_{j+1}-x_j)\right)$;
\STATE \textbf{endfor}
\STATE \textbf{return} allocation $(A_1,\ldots,A_n)$\;
\end{algorithmic}
\end{algorithm}

\begin{theorem}\label{thm:rat_ef}
Mechanism~\ref{alg:rat_ef} is proportionally risk-averse truthful and envy-free.
\end{theorem}
\begin{proof}
The envy-freeness is trivial. We will focus on proportional risk-averse truthfulness.
The part of proportionality is also trivial, as an entire envy-free allocation is always proportional and Mechanism~\ref{alg:rat_ef} is entire.

We focus on agent $1$ without loss of generality.
Let $f_1$ be agent $1$'s true value density function.
Consider an arbitrary $f_1'$ that agent $1$ reports.
Let $X_1$ and $X_1'$ be the sets of all points of discontinuity for $f_1$ and $f_1'$ respectively.

Suppose $X_1\subseteq X_1'$.
It is easy to see that agent $1$ will still get a value of $\frac1nv_1([0,1])$ by reporting $f_1'$.
This is because any subdivision of an interval where agent $1$ has a uniform value gives only smaller intervals each of which agent $1$ has a uniform value on.
This kind of misreporting is captured by 2(a) of Definition~\ref{def:rat}.

Suppose $X_1\not\subseteq X_1'$.
Pick an arbitrary $t\in X_1\setminus X_1'$.
Assume without loss of generality that $\displaystyle\lim_{x\rightarrow t^-}f(x)<\lim_{x\rightarrow t^+}f(x)$.
Consider a sufficiently small $\varepsilon>0$ such that $[t-\varepsilon,t+(n-1)\varepsilon]$ do not contain any points in $X_1\cup X_1'\setminus\{t\}$.
We can construct $f_2,\ldots,f_n$ such that 1) $\bigcup_{i=2}^nX_i$ contains $X_1\cup X_1' \cup\{t-\varepsilon, t+(n-1)\varepsilon\}\setminus\{t\}$, 2) $\bigcup_{i=2}^nX_i$ do not intersect the open interval $(t-\varepsilon,t+(n-1)\varepsilon)$, and 3) $t-\varepsilon$ is the $j$-th point from left to right with $j$ being a multiple of $n$.
By our mechanism, agent $1$ will receive $[t-\varepsilon,t)$ on the $j$-th interval $[t-\varepsilon,t+(n-1)\varepsilon)$, which is worth less than $\frac1nv_1([t-\varepsilon,t+(n-1)\varepsilon))$.
Agent $1$ will receive value exactly $\frac1nv_1([0,1]\setminus [t-\varepsilon,t+(n-1)\varepsilon))$ on the remaining part of the cake.
Therefore, the overall value agent $1$ receives is below the proportional value.
We have shown that this type of misreporting may cause agent $1$'s received value to be less than the proportional value, which corresponds to 2(b) of Definition~\ref{def:rat}.
\end{proof}

\section{Risk-Averse Truthful Proportional Mechanisms with Connected Pieces}
\label{sect:RAT_P}
We have seen that Mechanism~\ref{alg:rat_ef} is proportionally risk-averse truthful.
However, each agent may receive a union of quite many intervals instead of a single interval.
This is undesirable in many applications where people want a contiguous piece of resource, e.g., dividing a piece of land, allocating meeting time slots.
In this section, we are looking for proportionally risk-averse truthful mechanisms that satisfy the \emph{connected pieces} property.
That is, we require that each agent must receive a connected interval of the cake.

Many existing algorithms output proportional allocations with connected pieces.
Two notable algorithms are \emph{the moving-knife procedure}~\cite{dubins1961cut} and \emph{the Even-Paz algorithm}~\cite{even1984note}.
We will see in this section that both algorithms are not proportionally risk-averse truthful.
In particular, the moving-knife procedure is not even risk-averse truthful.
We conclude this section by proposing a proportionally risk-averse truthful  mechanism with connected pieces.

\paragraph{Moving-knife procedure}
Let $a_i=\frac1nv_i([0,1])$ be agent $i$'s proportional value.
The moving-knife procedure marks for each agent $i$ a point $x_i$ such that $[0,x_i)$ is worth exactly $a_i$ to agent $i$.
Then, the algorithm finds the smallest value $x_{i^\ast}$ among $x_1,\ldots,x_n$, and allocates $[0,x_{i^\ast})$ to agent $i^\ast$.
Next, for the remaining part of the cake $[x_{i^\ast},1]$, the algorithm marks for each of the $n-1$ remaining agents a point $x_i'$ such that $[x_{i^\ast},x_i')$ is worth exactly $a_i$ to agent $i$.
The algorithm then finds the smallest value $x_{i^\dag}$ among those $n-1$ $x_i'$s, and allocates $[x_{i^\ast},x_{i^\dag})$ to agent $i^\dag$.
This is repeated until the $(n-1)$-th agent is allocated an interval, and then the last agent gets the remaining part of the cake.
It is easy to verify that each of the first $n-1$ agents receives an interval that is worth exactly his/her proportional value $a_i$, while the last agent may receive more than his/her proportional value.

\paragraph{Even-Paz algorithm}
The Even-Paz algorithm is a divide-and-conquer-based algorithm.
For each agent $i$, Even-Paz algorithm finds a point $x_i$ such that $v_i([0,x_i])=\lfloor\frac n2\rfloor \frac1nv_i([0,1])$.
It then find the median $x^\ast$ for $x_1,\ldots,x_n$.
Let $L$ be the set of agents $i$ with $x_i<x^\ast$ and $R$ be the set of agents $i$ with $x_i\geq x^\ast$.
Since each agent $i$ in $L$ believes $v_i([0,x^\ast])\geq \lfloor\frac n2\rfloor\frac1n v_i([0,1])$ and there are $\lfloor\frac n2\rfloor$ agents in $L$, there exists an allocation of $[0,x^\ast]$ to agents in $L$ such that each agent $i$ receives at least his/her proportional value $\frac1n v_i([0,1])$.
For the similar reasons, there exists an allocation of $(x^\ast,1]$ to agents in $R$ such that each agent $i$ receives at least his/her proportional value $\frac1n v_i([0,1])$.
The algorithm then solves these two problems recursively.
It is also easy to prove that the Even-Paz algorithm always outputs proportional allocations.

\bigskip

To show that both algorithms are not proportionally risk-averse truthful.
We first define the following two value density functions.
\begin{equation}\label{eqn:lr}
    \ell^{(n)}(x)=\left\{\begin{array}{ll}
        \frac32 & x\in\left[0,\frac1{2n}\right) \\
        \frac12 & x\in\left[\frac1{2n},\frac1n\right)\\
        1 & x\in\left[\frac1n,1\right]
    \end{array}\right.\qquad
    r^{(n)}(x)=\left\{\begin{array}{ll}
        1 & x\in\left[0,1-\frac1n\right)\\
        \frac12 & x\in\left[1-\frac1n,1-\frac1{2n}\right)\\
        \frac32 & x\in\left[1-\frac1{2n},1\right]
    \end{array}\right.
\end{equation}

Notice that $\int_0^1\ell^{(n)}(x)dx=\int_0^1r^{(n)}(x)dx=1$.
The following lemma shows that any connected allocation that is proportional in either $\ell^{(n)}$ or $r^{(n)}$ is also proportional in the uniform value density function.

\begin{lemma}\label{lem:lr}
Let $f(x)=1$ for $x\in[0,1]$. For any interval $I$ such that $\int_I\ell^{(n)}(x)dx\geq\frac1n$, we have $\int_If(x)dx\geq\frac1n$. For any interval $I$ such that $\int_Ir^{(n)}(x)dx\geq\frac1n$, we have $\int_If(x)dx\geq\frac1n$.
\end{lemma}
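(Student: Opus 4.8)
The plan is to prove the statement by a direct case analysis on where the interval $I$ sits relative to the ``cheap'' region of $\ell^{(n)}$ (respectively $r^{(n)}$). Observe first that the two claims are mirror images of each other under the reflection $x \mapsto 1-x$, which maps $\ell^{(n)}$ to $r^{(n)}$ and preserves Lebesgue measure, so it suffices to prove the statement for $\ell^{(n)}$ and then invoke symmetry. For $\ell^{(n)}$, the only place where the density dips below $1$ is the interval $[\frac1{2n}, \frac1n)$, on which $\ell^{(n)} = \frac12$; everywhere else $\ell^{(n)} \ge 1$ (indeed $\ell^{(n)} = \frac32$ on $[0,\frac1{2n})$ and $\ell^{(n)} = 1$ on $[\frac1n,1]$). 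Hence for any interval $I$, the ``deficit'' $\int_I f - \int_I \ell^{(n)}$ is controlled entirely by how much of the cheap middle piece $[\frac1{2n},\frac1n)$ is contained in $I$.

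First I would make this precise: for any measurable $I$, write $\int_I \ell^{(n)}(x)\,dx = \int_I f(x)\,dx + \frac12\,\bigl|I \cap [0,\tfrac1{2n})\bigr| - \frac12\,\bigl|I \cap [\tfrac1{2n},\tfrac1n)\bigr|$, since $\ell^{(n)} - f$ equals $+\frac12$ on $[0,\frac1{2n})$, $-\frac12$ on $[\frac1{2n},\frac1n)$, and $0$ on $[\frac1n,1]$. Now suppose $I = [a,b]$ is an interval with $\int_I \ell^{(n)} \ge \frac1n$ but, for contradiction, $\int_I f = b - a < \frac1n$. Then from the displayed identity, $\frac12\bigl|I\cap[0,\tfrac1{2n})\bigr| - \frac12\bigl|I\cap[\tfrac1{2n},\tfrac1n)\bigr| > \frac1n - (b-a) > 0$, so in particular $I$ must meet $[0,\frac1{2n})$; being an interval of length $< \frac1n$ that meets $[0,\frac1{2n})$, it is forced to start near the left end, i.e.\ $a < \frac1{2n}$. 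The key step is then to bound $\bigl|I \cap [0,\tfrac1{2n})\bigr|$: since $a < \frac1{2n}$ this equals $\min(b,\frac1{2n}) - a$, which is at most $\frac1{2n} - a$ and also at most $b - a < \frac1n$. Plugging $\bigl|I\cap[0,\tfrac1{2n})\bigr| \le \frac1{2n} - a$ (when $b \ge \frac1{2n}$) or noting that when $b < \frac1{2n}$ the quantity $\int_I \ell^{(n)} = \frac32(b-a) < \frac3{2n}$ forces $b - a < \frac1{2n} \cdot \frac{2}{3}\cdot\ldots$ — more simply, $\frac32(b-a) \ge \frac1n$ already gives $b-a \ge \frac{2}{3n}$, contradicting nothing yet, so one does need the middle-piece term — and combining with $-\frac12|I\cap[\frac1{2n},\frac1n)| \le 0$ yields $\int_I \ell^{(n)} \le (b-a) + \frac12(\frac1{2n} - a) < \frac1n + \frac1{4n}$, which is not yet a contradiction; so I would instead argue directly that the maximum of $\int_I \ell^{(n)}$ over intervals $I$ of a fixed length $L < \frac1n$ is achieved by $I = [0,L]$ (pushing the interval as far left as possible to capture the high-density piece $[0,\frac1{2n})$), and for that interval $\int_{[0,L]} \ell^{(n)} = \frac32 \cdot \frac1{2n} + \frac12(L - \frac1{2n}) = \frac12 L + \frac12 \cdot \frac1{2n} = \frac{L}{2} + \frac1{4n} < \frac1{2n} + \frac1{4n} = \frac{3}{4n} < \frac1n$, the desired contradiction.

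I expect the main obstacle to be the bookkeeping in that monotonicity-in-position argument: one has to check that among all intervals of length $L < \frac1n$, the integral $\int_I \ell^{(n)}$ is maximized by the leftmost one, which requires noting that sliding $I$ rightward trades density-$\frac32$ mass at the left edge for density-$\le 1$ mass at the right edge (and possibly passes through the density-$\frac12$ trough), so the integral is non-increasing in the left endpoint until $I$ clears the region $[0,\frac1n)$, after which it is constant at $L$. Once that is established the arithmetic is immediate, and the $r^{(n)}$ case follows by the reflection $x \mapsto 1-x$ with no further work. An alternative, perhaps cleaner, route that avoids the sliding argument entirely: partition $I$ into $I \cap [0,\frac1n)$ and $I \cap [\frac1n,1]$; on the second part $\ell^{(n)} = f$, and on the first part $\int_{I\cap[0,\frac1n)} \ell^{(n)} \le \int_{[0,\frac1n)} \ell^{(n)} = \frac32\cdot\frac1{2n} + \frac12\cdot\frac1{2n} = \frac1n$, while if $\int_I \ell^{(n)} \ge \frac1n$ with $\int_I f < \frac1n$ then $\int_{I \cap [\frac1n,1]} f < \frac1n - \int_{I\cap[0,\frac1n)} f$, and a short estimate on the relation between $\int_{I\cap[0,\frac1n)}\ell^{(n)}$ and $\int_{I\cap[0,\frac1n)} f$ for a sub-interval of $[0,\frac1n)$ (namely that the former exceeds the latter by at most $\frac12 \cdot \frac1{2n} = \frac1{4n}$) closes the gap; I would present whichever of these two versions turns out to be shortest.
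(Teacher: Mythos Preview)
Your plan is correct in spirit and will work, but note an arithmetic slip: when $\frac1{2n} < L < \frac1n$, the value $\int_{[0,L]}\ell^{(n)} = \tfrac32\cdot\tfrac1{2n} + \tfrac12\bigl(L-\tfrac1{2n}\bigr)$ equals $\tfrac{L}{2}+\tfrac{1}{2n}$, not $\tfrac{L}{2}+\tfrac{1}{4n}$. Fortunately this corrected value is still $<\tfrac1n$ when $L<\tfrac1n$, so the contradiction survives. You should also treat the sub-case $L\le\tfrac1{2n}$ separately (there $\int_{[0,L]}\ell^{(n)}=\tfrac32 L\le\tfrac{3}{4n}<\tfrac1n$), since your displayed formula tacitly assumes $L>\tfrac1{2n}$.

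Your route is genuinely different from the paper's. The paper does not argue by contradiction or optimize over the position of an interval of fixed length. Instead it first handles the \emph{equality} case $\int_I\ell^{(n)}=\tfrac1n$ by a direct case check on the left endpoint of $I$ (if the left endpoint is $0$ or lies in $[\tfrac1n,1-\tfrac1n]$, then $|I|=\tfrac1n$; if it lies in $(0,\tfrac1n)$, then $|I|>\tfrac1n$), concluding $|I|\ge\tfrac1n$ in all cases. The strict-inequality case $\int_I\ell^{(n)}>\tfrac1n$ is then reduced to the equality case by passing to a subinterval $I'\subseteq I$ with $\int_{I'}\ell^{(n)}=\tfrac1n$. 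This bypasses both the sliding/monotonicity argument and the explicit computation of the maximum; the trade-off is that your optimization argument is more quantitative (it actually computes the maximal $\ell^{(n)}$-mass at each length), whereas the paper's subinterval reduction is shorter and avoids the bookkeeping you flagged as the main obstacle.
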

\begin{proof}
We only prove the lemma for $\int_I\ell^{(n)}(x)dx\geq\frac1n$, as the proof for $\int_Ir^{(n)}(x)dx\geq\frac1n$ is similar.
It is straightforward to see that $\int_I\ell^{(n)}(x)dx=\frac1n$ implies $|I|\geq\frac1n$.
In particular, $|I|=\frac1n$ if the left endpoint of $I$ belongs to $\{0\}\cup[\frac1n,1-\frac1n]$, and $|I|>\frac1n$ if the left endpoint of $I$ belongs to $(0,\frac1n)$.
For $|I|\geq\frac1n$, we have $\int_If(x)dx\geq\frac1n$.
If $\int_I\ell^{(n)}(x)dx>\frac1n$, there exists $I'\subseteq I$ such that $\int_{I'}\ell^{(n)}(x)dx=\frac1n$.
By our previous analysis, $|I'|\geq\frac1n$. 
We have $\int_If(x)dx\geq \int_{I'}f(x)dx\geq\frac1n$.
\end{proof}

\begin{theorem}\label{thm:moving-knife}
The moving-knife procedure is not risk-averse truthful.
\end{theorem}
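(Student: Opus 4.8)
The plan is to show the failure already for $n\ge 3$ agents (for $n=2$ the procedure coincides with I-cut-you-choose and \emph{is} weakly risk-averse truthful, so this restriction cannot be avoided). Fix agent~$1$, let its true value density be the uniform $f_1(x)=1$, and take the candidate misreport to be $f_1'=\ell^{(n)}$. I will establish: (a) for \emph{every} profile $f_2,\dots,f_n$ of the remaining agents, reporting $\ell^{(n)}$ gives agent~$1$ at least as much value (with respect to $f_1$) as reporting truthfully; and (b) for \emph{some} profile it gives strictly more. Together these say that neither clause of Definition~\ref{def:wrat} holds for the pair $(f_1,f_1')$, so the moving-knife procedure is not weakly risk-averse truthful.

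The engine is a monotonicity property of the marks. For a remaining interval $[c,1]$, let $m_{\mathrm u}(c)$ and $m_\ell(c)$ be the points at which $[c,x]$ is worth exactly $1/n$ under the uniform density and under $\ell^{(n)}$, respectively; since $\int_0^1\ell^{(n)}=1$, these are exactly the marks the procedure computes for agent~$1$ on the remaining piece $[c,1]$ under the two reports. A short case analysis according to whether $c\in\{0\}\cup[1/n,1]$, $c\in(0,\tfrac1{2n})$, or $c\in[\tfrac1{2n},\tfrac1n)$ shows $m_\ell(c)\ge m_{\mathrm u}(c)$ in every case, strictly so precisely when $c\in(0,1/n)$, and $m_\ell(0)=m_{\mathrm u}(0)=1/n$. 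Because the other agents' reports (hence their marks) are unaffected by agent~$1$'s report, an induction on the rounds then gives: if under truthful reporting agent~$1$ is not served in the first $n-1$ rounds, the first $n-1$ rounds are executed identically under the misreport and agent~$1$ again receives the same leftover interval.

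For (a): if agent~$1$ is the last agent under truthful reporting, the induction above shows it receives the identical interval under the misreport. Otherwise agent~$1$ is served among the first $n-1$ agents under truthful reporting and so receives an interval of length (equal to its uniform value) exactly $1/n$. Under the misreport agent~$1$ receives some interval $I$ with $\int_I\ell^{(n)}\ge\frac1n$ --- exactly $\frac1n$ if it is non-last, at least $\frac1n$ as the leftover if it is last, since the moving-knife procedure is proportional with respect to the reported valuations --- and Lemma~\ref{lem:lr} yields $|I|\ge\frac1n$. In either case the misreport yields agent~$1$ at least its truthful value, so clause~2 of Definition~\ref{def:wrat} fails.

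For (b): choose $f_2$ front-loaded so that agent~$2$'s first mark is some $c_0\in(0,1/n)$ --- for instance, with $n=3$, $f_2=7$ on $[0,\tfrac29)$ and $f_2=4$ on $[\tfrac29,1]$ gives $c_0=\tfrac29$ --- and choose $f_3,\dots,f_n$ heavily back-loaded (nearly all mass in $[1-\tfrac1n,1]$) so that none of them is served before agent~$1$. Then under both reports agent~$2$ is served first, leaving $[c_0,1]$, and agent~$1$ is served second: under truthful reporting it gets $[c_0,c_0+\tfrac1n)$ of value $\tfrac1n$, while under the misreport it gets $[c_0,m_\ell(c_0))$, whose true value $m_\ell(c_0)-c_0$ is strictly larger than $\tfrac1n$ since $c_0\in(0,1/n)$; hence clause~1 also fails. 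I expect the delicate part to be the monotonicity-plus-induction step: one must check that switching agent~$1$'s report to $\ell^{(n)}$ only ever (weakly) postpones the round in which agent~$1$ is served and never perturbs the order among the remaining agents, so that the only effect on agent~$1$ is the possibility of being served on a remaining piece $[c,1]$ with $c\in(0,1/n)$ --- exactly the regime where $\ell^{(n)}$ buys a strictly longer interval for the same reported value $1/n$. Tie-breaking is immaterial, since the profile in (b) produces no ties.
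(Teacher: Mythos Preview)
Your proof is correct and follows essentially the same route as the paper: take agent~$1$ with uniform true density, misreport to $\ell^{(n)}$, and show (a) the misreport never hurts by combining the mark-monotonicity $m_\ell(c)\ge m_{\mathrm u}(c)$ (which is exactly the content of Lemma~\ref{lem:lr}) with the observation that the first $n-1$ cuts are unchanged when agent~$1$ is last under truth, and (b) the misreport strictly helps on a profile where agent~$1$ is served second from a left endpoint $c_0\in(0,1/n)$. The paper's proof is the same argument with a specific choice $c_0=1/n^2$; your alternative $c_0$ works just as well.

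One small inaccuracy in your aside: the moving-knife procedure for $n=2$ does \emph{not} coincide with I-cut-you-choose (in the former the cut lands at the smaller of the two agents' half-points, in the latter always at agent~$1$'s half-point). What is true, and what actually forces $n\ge 3$ for this particular misreport, is that $m_\ell(0)=m_{\mathrm u}(0)=1/n$, so with only two agents the single mark agent~$1$ ever makes is unchanged by switching to $\ell^{(n)}$ and the outcome is identical. The paper's construction also tacitly requires $n\ge 3$ for the same reason. This does not affect the theorem, since exhibiting failure for some $n$ suffices.
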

\begin{proof}
Let $f_1(x)=1$ for $x\in[0,1]$ be the true value density function for agent~$1$.
We show that agent~$1$ can misreport his/her value density function to $f_1'=\ell^{(n)}$ that satisfies 1) there exist $f_2,\ldots,f_n$ such that $v_1(\M_1(f_1',f_2,\ldots,f_n))>v_1(\M_1(f_1,f_2,\ldots,f_n))$, and 2) for any $f_2,\ldots,f_n$, $v_1(\M_1(f_1',f_2,\ldots,f_n))\geq v_1(\M_1(f_1,f_2,\ldots,f_n))$.

To see 1), suppose $f_2(x)=1$ for $x\in[0,\frac1n]$ and $f_2(x)=0$ for $x\in(\frac1n,1]$, and $f_3(x)=\cdots=f_n(x)=0$ for $x\in[0,\frac1n)$ and $f_3(x)=\cdots=f_n(x)=1$ for $x\in[\frac1n,1]$.
In the moving-knife procedure, if agent $1$ truthfully reports $f_1$, (s)he will be the second agent receiving an interval after agent $2$ taking $[0,\frac1{n^2})$, and (s)he will receive $[\frac1{n^2},\frac1n+\frac1{n^2})$, which is worth $\frac1n$.
If agent $1$ reports $f_1'$, (s)he will also be the second agent receiving an interval after agent $2$ taking $[0,\frac1{n^2})$, and (s)he will receive $[\frac1{n^2},\frac1n+\frac3{2n^2})$ (by some simple calculations), which is worth more than $\frac1n$ with respect to his/her true valuation.

To see 2), suppose agent $1$ reports $f_1'$.
Since the moving-knife procedure is proportional, regardless of what the remaining $n-1$ agents report, agent $1$ will receive an interval that has a value of at least $\frac1n$ with respect to $f_1'$.
By Lemma~\ref{lem:lr}, agent $1$ receives an interval that is worth at least $\frac1n$ with respect to his/her true valuation $f_1$.
This already shows that the moving-knife procedure is not proportionally risk-averse truthful.

We can further show that the procedure is not even risk-averse truthful.
Consider any $f_2,\ldots,f_n$. 
If agent $1$ is not the last agent receiving an interval by reporting $f_1$ truthfully, agent $1$ receives exactly value $\frac1n$ by the nature of the moving-knife procedure.
Since we have shown that reporting $f_1'$ also guarantees the proportionality of agent $1$, reporting $f_1'$ will not harm agent $1$.
Suppose agent $1$ is the last agent receiving an interval by reporting $f_1$ truthfully.
Now, suppose agent $1$ reports $f_1'$.
In each iteration of the procedure, by Lemma~\ref{lem:lr}, agent $1$'s marked point for reporting $f_1'$ is the same as, or on the right-hand side of, agent $1$'s marked point for reporting $f_1$.
This indicates that agent $1$ will still be the last agent to receive an interval when reporting $f_1'$.
Moreover, the first $n-1$ points cut by the procedure will only depend on $f_2,\ldots,f_n$.
Thus, when agent $1$ reports $f_1'$, agent $1$ receives the same interval as it is in the case where agent $1$ reports $f_1$.
In this case, reporting $f_1'$ does not harm agent $1$ as well.
\end{proof}

\begin{theorem}\label{thm:even-paz}
The Even-Paz algorithm is not proportionally risk-averse truthful.
\end{theorem}
\begin{proof}
Consider the scenario with $n=5$ agents.
Let $f_1(x)=1$ for $x\in[0,1]$ be the true value density function for agent~$1$.
We show that agent~$1$ can misreport his/her value density function to $f_1'=r^{(5)}$ that satisfies 1) there exist $f_2,f_3,f_4,f_5$ such that $v_1(\M_1(f_1',f_2,f_3,f_4,f_5))>v_1(\M_1(f_1,f_2,f_3,f_4,f_5))$, and 2) for any $f_2,f_3,f_4,f_5$, we have $v_1(\M_1(f_1',f_2,f_3,f_4,f_5))\geq \frac15v_1([0,1])$.
Since the Even-Paz algorithm is proportional, Lemma~\ref{lem:lr} immediately implies 2). 
It remains to show 1).

Let $\varepsilon>0$ be a small number less than $\frac1{10}$. 
Consider $f_2(x)=1$ on $[0,\varepsilon)$ and $f_2(x)=0$ on $[\varepsilon,1]$, and $f_3(x)=f_4(x)=f_5(x)=0$ on $[0,1-\varepsilon)$ and $f_3(x)=f_4(x)=f_5(x)=1$ on $[1-\varepsilon,1]$.
We analyze two cases: the case where agent $1$ truthfully reports $f_1$ and the case where agent $1$ reports $f_1'$.
It is easy to verify that, in both cases, after the first round of the algorithm, an allocation of $[0,1-\frac35\varepsilon]$ to agent $1$ and $2$ is to be decided, and an allocation of $(1-\frac35\varepsilon,1]$ to agent $3,4,5$ is to be decided.
In the next round, the algorithm will find the half-half point for each of agent $1$ and $2$ on $[0,1-\frac35\varepsilon]$, and the algorithm will cut at the median of the two points, which is the average of the two points, and allocate the right-hand side interval to agent $1$.
By some simple calculations, the half-half point of $f_1$ on $[0,1-\frac35\varepsilon]$ is to the right of the half-half point of $f_1'$ on $[0,1-\frac35\varepsilon]$.
As a result, agent $1$ will receives a larger length of interval if (s)he reports $f_1'$.
Since the true value density function $f_1$ is uniform, reporting $f_1'$ will give agent $1$ more utility.
\end{proof}

In \ref{append:EvenPaz}, we will see that the Even-Paz algorithm can be made risk-averse truthful if some subtle tie-breaking issues are handled correctly.

To conclude this section, we present a mechanism that is proportionally risk-averse truthful.
In particular, if we require the entire allocations, it is proportionally risk-averse truthful for hungry agents.
The mechanism is shown in Mechanism~\ref{alg:rat_p}.
Later, we will show that we can modify the mechanism by a little bit to make it proportionally risk-averse truthful (without assuming the agents are hungry) if we do not require entire allocations (while still guaranteeing proportionality and connected pieces).

\begin{algorithm}
\caption{A proportionally risk-averse truthful cake cutting mechanism with connected pieces}
\label{alg:rat_p}
\begin{algorithmic}[1]
\STATE for each $f_i$, find the smallest $x^{(i)}_1,\ldots,x^{(i)}_{n-1}$ such that $\int_{x_j^{(i)}}^{x_{j+1}^{(i)}}f_i(x)dx=\frac1n\int_0^1f_i(x)dx$ for each $j=0,1,\ldots,n-1$, where $x^{(i)}_0=0$ and $x^{(i)}_n=1$\;
\STATE $c_0\leftarrow 0$\;
\STATE $\text{Unallocated}\leftarrow\{1,\ldots,n\}$\; // the set of agents who have not been allocated
\STATE \textbf{for} each $j=1,\ldots,n-1$:
\STATE \hspace{0.5cm} $i_j\leftarrow\arg\min_{i\in\text{Unallocated}}\{x_j^{(i)}\}$\;
\STATE \hspace{0.5cm} $c_j\leftarrow x_{j}^{(i_j)}$\;
\STATE \hspace{0.5cm} allocate $[c_{j-1},c_j)$ to agent $i_j$\;
\STATE \hspace{0.5cm} $\text{Unallocated}\leftarrow \text{Unallocated}\setminus\{i_j\}$\;
\STATE \textbf{endfor}
\STATE allocate the remaining unallocated interval to the one remaining agent in Unallocated.
\end{algorithmic}
\end{algorithm}

\begin{theorem}\label{thm:rat_p_proportional}
Mechanism~\ref{alg:rat_p} is entire and proportional, and it always outputs allocations with connected pieces.
\end{theorem}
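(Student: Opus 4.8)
The plan is to track the ``boundary points'' $c_0=0,c_1,\dots,c_{n-1}$ produced by the algorithm, to show they are non-decreasing and stay in $[0,1]$, and to extract proportionality from a single monotonicity invariant. Throughout I would fix the convention that $x^{(i)}_j$ is the leftmost point with $\int_0^{x^{(i)}_j}f_i(x)\,dx=\tfrac jn\int_0^1 f_i(x)\,dx$, so that $0=x^{(i)}_0\le x^{(i)}_1\le\cdots\le x^{(i)}_n=1$ and $\int_{x^{(i)}_{j-1}}^{x^{(i)}_j}f_i(x)\,dx=\tfrac1n\int_0^1 f_i(x)\,dx$ for every $i$ and $j$; this makes the cut points genuinely monotone even when some $f_i$ vanishes on a subinterval.

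First I would dispose of the structural claims. By construction agent $i_j$ receives $[c_{j-1},c_j)$ and the final agent receives $[c_{n-1},1]$, so every $A_i$ is an interval, giving the connected-pieces property. Once we know $0=c_0\le c_1\le\cdots\le c_{n-1}\le 1$, the allocated pieces are $[c_0,c_1),[c_1,c_2),\dots,[c_{n-2},c_{n-1}),[c_{n-1},1]$; these are pairwise disjoint (so the output is a valid allocation), and their union telescopes to $[c_0,c_{n-1})\cup[c_{n-1},1]=[0,1]$, so $\M$ is entire. Thus the only structural fact still needing proof is $c_{j-1}\le c_j$.

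The heart of the argument is the following invariant, proved by induction on $j$: \emph{at the start of iteration $j$, every still-unallocated agent $i$ satisfies $x^{(i)}_{j-1}\ge c_{j-1}$.} The base case $j=1$ holds since $c_0=0=x^{(i)}_0$. For the inductive step, assume the invariant at the start of iteration $j$; then $c_j=x^{(i_j)}_j=\min_{i\text{ unallocated}}x^{(i)}_j$, so every agent still unallocated after iteration $j$ has $x^{(i)}_j\ge c_j$, which is exactly the invariant at the start of iteration $j+1$. Applying the invariant to $i_j$ itself and using $x^{(i_j)}_{j-1}\le x^{(i_j)}_j$ gives $c_{j-1}\le x^{(i_j)}_{j-1}\le x^{(i_j)}_j=c_j$, which settles monotonicity, and also $c_{n-1}=x^{(i_{n-1})}_{n-1}\le x^{(i_{n-1})}_n=1$.

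Finally I would read off proportionality from the same invariant. If agent $i=i_j$ is allocated in iteration $j$, then $c_{j-1}\le x^{(i)}_{j-1}$ and $c_j=x^{(i)}_j$, so $A_i=[c_{j-1},c_j)\supseteq[x^{(i)}_{j-1},x^{(i)}_j)$ and hence $v_i(A_i)\ge v_i\big([x^{(i)}_{j-1},x^{(i)}_j)\big)=\tfrac1n v_i([0,1])$. If $i$ is the final agent, it was unallocated and distinct from $i_{n-1}$ at iteration $n-1$, so $x^{(i)}_{n-1}\ge c_{n-1}$, whence $A_i=[c_{n-1},1]\supseteq[x^{(i)}_{n-1},x^{(i)}_n]$ and $v_i(A_i)\ge\tfrac1n v_i([0,1])$. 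This gives proportionality for all $n$ agents. I do not expect a genuine obstacle here; the only points requiring care are fixing the deterministic (leftmost) choice of the $x^{(i)}_j$ so monotonicity survives flat pieces of $f_i$, and the inductive bookkeeping of the invariant — everything else reduces to a containment of intervals. In essence this is just the correctness proof for this greedy, moving-knife-style scheme.
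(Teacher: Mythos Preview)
Your proof is correct and follows essentially the same idea as the paper's: both establish proportionality via the containment $[x^{(i_j)}_{j-1},x^{(i_j)}_j)\subseteq[c_{j-1},c_j)$, which comes from observing that $i_j$ was still unallocated when $c_{j-1}$ was chosen as a minimum. Your write-up is in fact more complete than the paper's, since you explicitly verify the monotonicity $c_{j-1}\le c_j$ (needed for entireness and disjointness), treat the last agent's proportionality separately, and fix the leftmost convention for the $x^{(i)}_j$ so the argument goes through even when some $f_i$ has flat zero regions.
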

\begin{proof}
It is trivial that the mechanism is entire and always outputs allocations with connected pieces. It remains to show the proportionality.
It suffices to show that, in each iteration $j$, we have $[x^{(i_j)}_{j-1},x^{(i_j)}_{j})\subseteq[c_{j-1},c_j)$ (notice that $[x^{(i_j)}_{j-1},x^{(i_j)}_{j})$ is worth exactly the proportional value for agent $i_j$).
Since $x^{(i_j)}_{j}=c_j$, it suffices to show that $x^{(i_j)}_{j-1}\geq c_{j-1}$.
In the $(j-1)$-th iteration, agent $i_j$ is still in the set Unallocated.
Since $i_{j-1}$ is the agent $i$ in Unallocated with minimum $x_{j-1}^{(i)}$, we have $x^{(i_j)}_{j-1}\geq x^{(i_{j-1})}_{j-1}= c_{j-1}$.
\end{proof}

\begin{theorem}\label{thm:rat_p_risk-averse_hungry}
Mechanism~\ref{alg:rat_p} is proportionally risk-averse truthful for hungry agents.
\end{theorem}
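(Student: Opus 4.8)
The plan is to fix agent~$1$ as the potential manipulator (by symmetry of the algorithm this is without loss of generality) with true value density function $f_1$ and an arbitrary misreport $f_1'$. I want to show that either reporting $f_1'$ is never beneficial against any profile $f_2,\ldots,f_n$, or there is some profile against which $f_1'$ makes agent~$1$ receive strictly less than $\frac1n v_1([0,1])$ — this is exactly the dichotomy in Definition~\ref{def:rat}. The key structural fact to exploit is that, by the proof of the previous theorem, the interval agent~$1$ receives when reporting $f_1'$ is some $[c_{j-1},c_j)$ that \emph{contains} the interval $[x^{(1)'}_{j-1},x^{(1)'}_{j})$, where the $x^{(1)'}_\cdot$ are agent~$1$'s $1/n$-cut points \emph{computed from $f_1'$}. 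So the value agent~$1$ actually gets, measured by the \emph{true} $f_1$, is $v_1([c_{j-1},c_j))$, and the question is how this compares to $v_1$ of the corresponding interval obtained from $f_1$.

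The first step is to handle the easy case: suppose the cut points $x^{(1)'}_1,\ldots,x^{(1)'}_{n-1}$ derived from $f_1'$ \emph{coincide} with the true cut points $x^{(1)}_1,\ldots,x^{(1)}_{n-1}$ derived from $f_1$. Then agent~$1$'s behavior in the algorithm is literally identical under $f_1$ and $f_1'$ against every profile of the others (the algorithm only ever looks at these $n-1$ numbers for agent~$1$), so $f_1'$ is never beneficial — case~1 of Definition~\ref{def:rat} holds. The second step is the interesting case: some true cut point $x^{(1)}_k$ differs from $x^{(1)'}_k$. I would then construct an adversarial profile $f_2,\ldots,f_n$ that pins down the positions of all the cuts $c_0,\ldots,c_{n-1}$ so that agent~$1$ ends up with a specific slot. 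Concretely, I want to arrange the other agents' cut points so that agent~$1$ is forced to take slot $j$ where the $f_1'$-interval $[x^{(1)'}_{j-1},x^{(1)'}_j)$ is strictly contained — with room to spare — in a region where $f_1$ places very little density but $f_1'$ claims density $\frac1n$-worth; then $v_1([x^{(1)'}_{j-1},x^{(1)'}_j)) < \frac1n v_1([0,1])$, and by making the $c$-interval agent~$1$ actually receives only slightly larger (again by choosing the others' cuts to sandwich it), the total true value stays below the proportional threshold. The hungriness assumption enters here: it guarantees $f_1 > 0$ everywhere so the cut points are well-defined and strictly increasing, and it lets me control exactly how much true value sits in any sub-interval, which is what makes the sandwiching quantitative.

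The main obstacle I anticipate is the bookkeeping of the adversarial construction: I need the others' reported $1/n$-cut points to (i) be consistent with genuine piecewise-constant, strictly-positive density functions summing the right way, (ii) make the greedy "$\arg\min$" selection in Algorithm~\ref{alg:rat_p} assign agent~$1$ precisely the slot I want, and (iii) do so while keeping $c_{j-1}$ close to $x^{(1)'}_{j-1}$ and $c_j = x^{(1)'}_j$ exactly (the latter is automatic once agent~$1$ is the $\arg\min$ at step~$j$). Getting $c_{j-1}$ close to $x^{(1)'}_{j-1}$ requires that whoever is selected at step $j-1$ has their relevant cut point just barely to the left of $x^{(1)'}_{j-1}$; I would build $f_{i_{j-1}}$ to have a cut point at $x^{(1)'}_{j-1} - \delta$ for tiny $\delta$. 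The remaining subtlety is that a misreport could shift \emph{which} slot agent~$1$ naturally falls into; but since I get to choose the adversary, I can always design the others so that agent~$1$'s $f_1'$-cut points make agent~$1$ the minimizer exactly at the step I need, and I should argue that whenever $f_1' \neq f_1$ as reflected in the cut points, at least one such "bad" slot with a true-value deficit exists — this follows because the $n$ intervals cut out by $f_1'$ have true $f_1$-values summing to $v_1([0,1])$ but not all equal to $\frac1n v_1([0,1])$, so some interval has true value strictly below $\frac1n v_1([0,1])$, and that is the slot to target.
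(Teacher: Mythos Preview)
Your plan is correct and structurally matches the paper's proof: both reduce to the dichotomy ``same cut points $\Rightarrow$ identical behaviour'' versus ``different cut points $\Rightarrow$ build an adversary forcing a sub-proportional share.'' The difference is only in how the second branch is executed. The paper looks at the \emph{first} index $j^\ast$ where the true and misreported cut points disagree and then splits into two sub-cases according to whether $x_{j^\ast}^{(1)'}<x_{j^\ast}^{(1)}$ or $x_{j^\ast}^{(1)'}>x_{j^\ast}^{(1)}$; in the first sub-case it forces agent~$1$ into slot $j^\ast$ directly, while in the second it makes \emph{all} slots from $j^\ast$ onward deficient and lets agent~$1$ land in any of them. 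Your route is more uniform: by pigeonhole (using hungriness to guarantee uniqueness of the true $1/n$-cut points) some slot $j$ determined by $f_1'$ has true value strictly below $\tfrac{1}{n}v_1([0,1])$, and you force agent~$1$ into exactly that slot with $c_{j-1}$ sandwiched at $x_{j-1}^{(1)'}-\delta$. This avoids the case split at the cost of needing slightly tighter control over \emph{which} slot agent~$1$ receives; the paper's $y'>y$ case trades that control for making a whole tail of slots bad. Both constructions are feasible with piecewise-constant hungry densities for the adversaries, and your sketch of the bookkeeping (designate agents $2,\ldots,j$ to be eliminated in rounds $1,\ldots,j-1$ with the round-$(j-1)$ agent's cut just below $x_{j-1}^{(1)'}$, and keep agents $j+1,\ldots,n$'s $j$-th cut above $x_j^{(1)'}$) is the right one.
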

\begin{proof}
Without loss of generality, we consider the potential misreport for agent $1$.
Let $f_1$ be agent $1$'s true value density function, and consider an arbitrary $f_1'$.
If the values for $x_1^{(1)},\ldots,x_{n-1}^{(1)}$ (in Step~1 of the mechanism) are the same for $f_1$ and $f_1'$, the mechanism will output the same allocation for $f_1$ and $f_1'$.
In this case, reporting $f_1'$ is not strictly more beneficial.
We will conclude the proof by showing that, if the values for $x_1^{(1)},\ldots,x_{n-1}^{(1)}$ are not the same for $f_1$ and $f_1'$, there exist $f_2,\ldots,f_n$ such that agent $1$ will receive an interval with value less than the proportional value (with respect to the true valuation $f_1$).

Suppose $j^\ast$ is the minimum index such that $x_{j^\ast}^{(1)}$ is not the same for $f_1$ and $f_1'$.
Let $y$ be the value of $x_{j^\ast}^{(1)}$ for $f_1$ and $y'$ be the value of $x_{j^\ast}^{(1)}$ for $f_1'$.
We consider two cases: $y'<y$ and $y'>y$.
Let $\varepsilon>0$ be a sufficiently small number.

Suppose $y'<y$.
We can construct $f_2,\ldots,f_n$ such that 1) for each $j=1,\ldots,j^\ast-1$, $c_j=x_{j}^{(1)}-\varepsilon$, and 2) $c_{j^\ast}=y'$.
In this case, agent $1$ will receive $[x_{j^\ast-1}^{(1)}-\varepsilon,y')$.
When $\varepsilon\rightarrow0$, this interval converges to $[x_{j^\ast-1}^{(1)},y']$, which is a proper subset of $[x_{j^\ast-1}^{(1)},y)$.
We know that $[x_{j^\ast-1}^{(1)},y)$ is just enough to guarantee the proportionality for agent $1$.
Agent $1$ receives an interval with a value less than the proportional value by reporting $f_1'$, if $\varepsilon$ is small enough.

Suppose $y'>y$.
Since each of the intervals $[x_0^{(1)},x_1^{(1)}),\ldots,[x_{j^\ast-2}^{(1)},x_{j^\ast-1}^{(1)})$ is worth exactly $\frac1n v_1([0,1])$ and the interval $[x_{j^\ast-1}^{(1)},y')$ is worth strictly more than $\frac1n v_1([0,1])$, the interval $[y',1]$ is worth less than $\frac{n-j^\ast}n v_1([0,1])$.
It is possible to find $y_{j^\ast+1},\ldots,y_{n-1}$ such that $[y_j,y_{j+1})$ is worth strictly less than $\frac1nv_1([0,1])$ for each $j=j^\ast,\ldots,n-1$, where we let $y_{j^\ast}=y'$ and $y_n=1$.
Now we construct $f_2,\ldots,f_n$ such that 1) $c_j=x_j^{(1)}-\varepsilon$ for each $j=1,\ldots,j^\ast-1$, 2) $c_{j^\ast}=y'-\varepsilon$, and 3) $\min_{i}x_j^{(i)}=y_j$ for each $j=j^\ast+1,\ldots,n-1$.
It is easy to see that agent $1$ will receive an interval that is a subset of one of $[y_{j^\ast},y_{j^\ast+1}),\ldots,[y_{n-1},1]$.
Therefore, agent $1$ will receive a value less than the proportional value in this case.
\end{proof}

If the agents are not hungry, the set of points $x_1^{(i)},\ldots,x_{n-1}^{(i)}$ satisfying the condition in Step~1 may not be unique.
Different selections of this set may result in different allocations.
An agent can select this set (by reporting an $f_i'$ with $x_1^{(i)},\ldots,x_{n-1}^{(i)}$ being exactly what (s)he want) and potentially receive a better allocation.

It is possible to get rid of the hungry agents assumption.
The trick is to make sure that each agent $i$ receives \emph{exactly} one of $[0,x_1^{(1)}),[x_1^{(1)},x_2^{(1)}),\ldots,[x_{n-1}^{(1)},1]$.
We only need to change Step~7 of Mechanism~\ref{alg:rat_p} to ``allocate $[x_{j-1}^{(i_j)},c_j)$ to agent $i_j$''.
The mechanism is stated in Mechanism~\ref{alg:rat_p'}.

\begin{algorithm}
\caption{A proportionally risk-averse truthful cake cutting mechanism with connected pieces}
\label{alg:rat_p'}
\begin{algorithmic}[1]
\STATE for each $f_i$, find the smallest $x^{(i)}_1,\ldots,x^{(i)}_{n-1}$ such that $\int_{x_j^{(i)}}^{x_{j+1}^{(i)}}f_i(x)dx=\frac1n\int_0^1f_i(x)dx$ for each $j=0,1,\ldots,n-1$, where $x^{(i)}_0=0$ and $x^{(i)}_n=1$\;
\STATE $c_0\leftarrow 0$\;
\STATE $\text{Unallocated}\leftarrow\{1,\ldots,n\}$\; // the set of agents who have not been allocated
\STATE \textbf{for} each $j=1,\ldots,n-1$:
\STATE \hspace{0.5cm} $i_j\leftarrow\arg\min_{i\in\text{Unallocated}}\{x_j^{(i)}\}$\;
\STATE \hspace{0.5cm} $c_j\leftarrow x_{j}^{(i_j)}$\;
\STATE \hspace{0.5cm} allocate $[x_{j-1}^{(i_j)},c_j)$ to agent $i_j$\;
\STATE \hspace{0.5cm} $\text{Unallocated}\leftarrow \text{Unallocated}\setminus\{i_j\}$\;
\STATE \textbf{endfor}
\STATE allocate the remaining unallocated interval to the one remaining agent in Unallocated.
\end{algorithmic}
\end{algorithm}

In this case, as long as an agent selects a set $x_1^{(i)},\ldots,x_{n-1}^{(i)}$ that satisfies the condition in Step~1, (s)he will get exactly his/her proportional share.
Of course, if (s)he selects a set $x_1^{(i)},\ldots,x_{n-1}^{(i)}$ that does not satisfy the condition, the same arguments in the proof of Theorem~\ref{thm:rat_p_risk-averse_hungry} show that there is always a scenario that (s)he will receive a value less than the proportional value.
These prove the theorem below, which is stated with the formal proof left to the readers.

\begin{theorem}\label{thm:rat_p_change}
Mechanism~\ref{alg:rat_p'} is proportionally risk-averse truthful  (but not entire).
\end{theorem}

However, compared with Mechanism~\ref{alg:rat_p}, other than not being entire, another disadvantage of Mechanism~\ref{alg:rat_p'} is that it is obviously manipulable (see \ref{append:nom} and Theorem~\ref{thm:rat_p'_nom}).

\subsection{Remarks on Computational Complexity}
Although computational complexity of mechanisms is not the main focus of this paper (in fact, the impossibility results in Sect.~\ref{sect:impossibility} is irrelevant to computational complexity, and they also exclude the possibility of super-polynomial time mechanisms), being able to be executed in a polynomial time is still a desirable property for a practical mechanism.
It is easy to check that our mechanisms in this section, as well as the one in the previous section, can be implemented in polynomial time (in terms of the length of the string encoding all the $n$ value density functions, as it is standard in complexity theory).



\section{Conclusion and Future Work}
\label{sect:conclusion}
We have proved that a truthful proportional cake cutting mechanism does not exist, even in the restrictive setting with two agents whose value density functions are piecewise-constant and strictly positive.
The impossibility result extends to the setting where it is not required that the entire cake needs to be allocated.
This resolves the long-standing fundamental open problem in the cake cutting literature.
The main take-home message for this paper is that dominant-strategy truthfulness and fairness cannot be both guaranteed for the cake cutting problem.
Therefore, to deploy a cake-cutting mechanism, we need to further relax dominant-strategy truthfulness or fairness.

\subsection{Relaxing Truthfulness}
For relaxing dominant-strategy truthfulness, we have proposed a new truthful notion called \emph{(proportionally) risk-averse truthfulness}, which is motivated by the truthful property that the I-cut-you-choose mechanism possesses.
We have shown that some well-known cake cutting algorithms do not satisfy this truthful criterion.
We have provided a proportionally risk-averse truthful and envy-free mechanism and a proportionally risk-averse truthful mechanism that always outputs allocations with connected pieces.

In some scenarios where randomized mechanisms are acceptable and agents are generally risk-neutral, another option is the randomized mechanism proposed by Mossel and Tamuz~\cite{mossel2010truthful} that is truthful in expectation.

\subsection{Relaxing Proportionality}
On the other hand, we can relax the proportionality requirement, and instead, consider the approximation of proportionality.
We have seen in Theorem~\ref{thm:main2_approx} that there does not exist a truthful and $0.974031$-approximately proportional mechanism.
How about smaller approximation ratios?

\begin{openP}
Does there exist an $\alpha>0$ such that there exists a truthful, $\alpha$-approximately proportional mechanism?
\end{openP}

Designing dominant-strategy truthful mechanisms for piecewise-constant value density functions is still a largely unexplored research area.
To the best of our knowledge, there is no ``natural'' dominant-strategy truthful mechanism if agents' value density functions are piecewise-constant.
We only know some ``unnatural'' truthful mechanisms that either are oblivious to one or more agents' valuation (e.g., allocate the whole cake to a fixed single agent, allocate the cake evenly to $n$ agents such that each agent receives a length of $\frac1n$ disregarding agents' valuations, etc), or cannot even guarantee each agent a positive value (e.g., the mechanism can arbitrarily fix two different allocations $(A_1,\ldots,A_n)$ and $(A_1',\ldots,A_n')$ and let the $n$ agents vote for the more preferred allocation; this mechanism is truthful and non-oblivious to all agents' valuations, but some agents may receive pieces with a zero value).
These mechanisms cannot guarantee even the minimum level of fairness.

Indeed, we do not even know the existence of a truthful mechanism that guarantees each agent a positive value.
If the answer to the following open problem is no, we have the same impossibility result as the result of Br{\^a}nzei and Miltersen~\cite{branzei2015dictatorship} for the Robertson-Webb query model.

\begin{openP}
Does there exist a truthful mechanism that always allocates each agent a subset on which the agent has a positive value?
\end{openP}

Of course, if agents are hungry, the answer to the problem above is yes, as the mechanism can just allocate $[0,1]$ to the agents such that each agent receives a length of $\frac1n$, disregarding the agents' reports.

In conclusion, designing a ``reasonable'' truthful mechanism is still a challenging problem.

\subsection{Special Value Density Functions}
Although our main result shows that truthfulness and proportionality are incompatible for general piecewise-constant value density functions, they may be compatible for some \emph{special cases} of piecewise-constant functions such as piecewise-uniform functions (reference~\cite{CL10}) and monotone functions (Sect.~\ref{sect:monotone}).
Another direction for future work is to figure out for what subsets of value density functions we can have a truthful and proportional mechanism.

In many practical scenarios, the precision of value density functions can be limited.
For example, there is a small constant $k$ such that value density functions can only take value from $\{0,1/k,2/k,\ldots,1\}$.
The case for $k=1$ corresponds to piecewise-uniform functions, in which case a truthful, envy-free and entire mechanism exists even for any number of agents~\cite{CL10}.
The theorem below, whose proof is deferred to \ref{append:restricted}, shows that our impossibility result continues to hold for $k\geq3$.

\begin{theorem}\label{thm:main_restricted}
There does not exist a truthful proportional mechanism, even when all of the following hold:
\begin{itemize}
    \item there are two agents;
    \item each agent's value density function is piecewise-constant and only takes value from $\{0,1/k,2/k,\ldots,1\}$ for any fixed $k\geq 3$;
    \item the mechanism needs not to be entire.
\end{itemize}
\end{theorem}

We do not know if the impossibility result holds for $k=2$.
In general, characterizing the subsets of value density functions where a truthful and proportional (or envy-free) mechanism exists is an interesting future research direction.

\subsection{Cake Cutting with More Than Two Agents}
We have proved the impossibility result on truthful proportional mechanisms with $n=2$.
Although this implies such mechanisms do not exist in general, it still makes sense to consider this problem with a fixed number of agents that is more than $2$.
We conjecture that the impossibility result holds for any fixed $n\geq2$.

\begin{openP}
Does there exist a positive integer $n\geq3$ such that there exists a truthful proportional mechanism with $n$ agents?
\end{openP}

A common technique for extending the impossibility result from $n=2$ to a general fixed $n$ is to add $n-2$ dummy agents who only have positive values on $n-2$ non-intersecting intervals that do not intersect with the valued intervals of the first two agents.
However, this technique fails here, as adding more agents would reduce the proportional guarantee from $\frac12v_i([0,1])$ to $\frac1nv_i([0,1])$.
In particular, an allocation where agent $1$ and $2$ get exactly $1/n$ of their values on the entire cake (while the remaining parts of agent $1$'s and agent $2$'s valued intervals are allocated to the remaining $n-2$ agents, even if the remaining $n-2$ agents have a zero value on these intervals) can still be proportional, or even envy-free.
This allocation, on the other hand, is not proportional if we turn back to the instance with two agents (with the dummy agents removed).

\subsection{Empirical Studies}
We have proposed two mechanisms that are risk-averse truthful.
It is also interesting to test them empirically by simulations or sociological experiments and compare their performances with other classical algorithms such as the moving-knife procedure and the Even-Paz algorithm.

\section*{Acknowledgments}
The authors would like to thank Xiaohui Bei, Grant Schoenebeck, and the anonymous reviewers for their great suggestions on this paper.

The research of Biaoshuai Tao was supported by the National Natural Science Foundation of China (Grant No. 62102252).





\bibliographystyle{elsarticle-num} 
\bibliography{reference}

%




\newpage
\appendix

\setcounter{instance}{0}

\section{Proof of Theorem~\ref{thm:main2_approx}}
\label{append:proof_approx}
Let $\M$ be a truthful and $(1-\tau)$-approximately proportional mechanism for certain $\tau\in[0,0.025969]$.
Like the proof for Theorem~\ref{thm:main2}, we will construct six instances, analyze the outputs of $\M$ on these instances, and prove that truthfulness and $(1-\tau)$-approximate proportionality cannot be both guaranteed.
The six instances we used are similar to those in the proof of Theorem~\ref{thm:main2} and are shown in Table~\ref{tab_append}.

\begin{table}
    \centering
    \begin{tabular}{|l|l|}
    \hline
    Instance & Allocation \\
    \hline
      \includegraphics{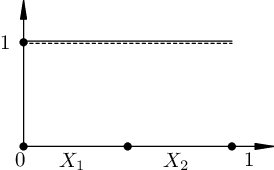}   &  $\M_1(F^{(1)})\subseteq X_1$ and $\M_2(F^{(1)})=X_2$\\
      \hline
      \includegraphics{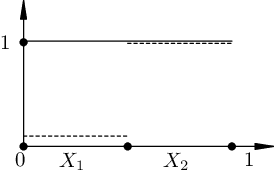}  & $\M_1(F^{(2)})\subseteq X_1$ and $\M_2(F^{(2)})=X_2$\\
      \hline
      \includegraphics{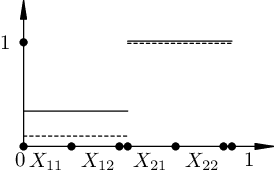}  & $\M(F^{(3)})=(X_{11}\cup X_{21},X_{12}\cup X_{22})$\\
      \hline
      \includegraphics{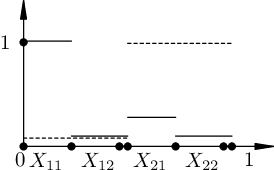}  & Proposition~\ref{prop:ins4_approx}\\
      \hline
      \includegraphics{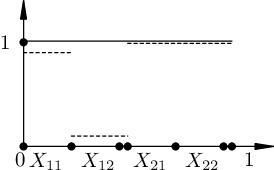}  &  Proposition~\ref{prop:ins5_approx}\\
      \hline
      \includegraphics{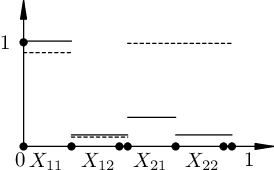}  &  
      \begin{tabular}{l}
           Proposition~\ref{prop:approx_final1} and Proposition~\ref{prop:approx_final2}  \\
           Later, we show that the two propositions cannot be held simultaneously.\\
           This yields a contradiction.
      \end{tabular}\\
      \hline
    \end{tabular}
    \caption{Instances constructed for the proof of Theorem~\ref{thm:main2_approx} and the corresponding allocations given by $\M$. The value density for agent $1$ is shown in solid lines, and the value density for agent $2$ is shown in dashed lines.}
    \label{tab_append}
\end{table}

\begin{instance}
$F^{(1)}=(f_1^{(1)},f_2^{(1)})$, where $f_1^{(1)}(x)=1$ and $f_2^{(1)}(x)=1$ for $x\in[0,1]$.
\end{instance}

To ensure the $(1-\tau)$-approximate proportionality, we must have $|\M_1(F^{(1)})|\geq\frac12(1-\tau)$ and $|\M_2(F^{(1)})|\geq\frac12(1-\tau)$.
Let $X_2=\M_2(F^{(1)})$ and $X_1=[0,1]\setminus X_2$.
We have $\M_1(F^{(1)})\subseteq X_1$.
Notice that we may have $\M_1(F^{(1)})\subsetneq X_1$, as we do not require $\M$ to be entire.

\begin{definition}
$X_2=\M_2(F^{(1)})$ and $X_1=[0,1]\setminus X_2$.
\end{definition}

Since $|\M_1(F^{(1)})|\geq\frac12(1-\tau)$ and $|\M_2(F^{(1)})|\geq\frac12(1-\tau)$, we have
\begin{equation}\label{eqn:X1X2}
    |X_1|,|X_2|\in\left[\frac12(1-\tau),\frac12(1+\tau)\right].
\end{equation}

In the instances constructed later, we let $\varepsilon>0$ be a sufficiently small real number.
Next, we consider the following instance.

\begin{instance}
$F^{(2)}=(f_1^{(2)},f_2^{(2)})$, where $f_1^{(2)}(x)=1$ for $x\in[0,1]$ and
$$f_2^{(2)}(x)=\left\{\begin{array}{ll}
    \varepsilon & x\in X_1 \\
    1 & x\in X_2
\end{array}\right..$$
\end{instance}

This instance is the same as the second instance in the proof of Theorem~\ref{thm:main2}, except that $X_1$ and $X_2$ are defined differently.

\begin{proposition}
$\M_1(F^{(2)})\subseteq X_1$ and $\M_2(F^{(2)})=X_2$.
\end{proposition}
\begin{proof}
Firstly, we must have $|\M_2(F^{(2)})|\leq|X_2|$.
Otherwise, in the first instance, agent $2$ will misreport $f_2^{(2)}$ instead of truthfully reporting $f_2^{(1)}$ and receive an interval with a length of more than $|X_2|$, which is more beneficial.
This will violate truthfulness.

Given $|\M_2(F^{(2)})|\leq|X_2|$, the maximum value agent $2$ can receive is $|X_2|$ by $\M_2(F^{(2)})=X_2$.
In addition, if agent $2$ reports $f_2^{(1)}$ instead of truthfully reporting $f_2^{(2)}$, the instance becomes $F^{(1)}$ and we know agent $2$ will receive $X_2$.
To guarantee truthfulness, we must have $\M_2(F^{(2)})=X_2$.

Finally, this further implies $\M_1(F^{(2)})\subseteq X_1$.
\end{proof}

The third instance is also similar to before.
To optimize the approximation ratio for proportionality in this impossibility result, we set the value for $f_1^{(3)}(x)$ on $X_1$ to $\frac13$ instead of $0.5$.

\begin{instance}
$F^{(3)}=(f_1^{(3)},f_2^{(3)})$, where
$$f_1^{(3)}(x)=\left\{\begin{array}{ll}
    \frac13 & x\in X_1 \\
    1 & x\in X_2
\end{array}\right.\qquad\mbox{and}\qquad
f_2^{(3)}(x)=\left\{\begin{array}{ll}
    \varepsilon & x\in X_1 \\
    1 & x\in X_2
\end{array}\right..$$
\end{instance}

We will define $X_{11},X_{12},X_{21}$ and $X_{22}$ as follows.

\begin{definition}
$X_{11}=\M_1(F^{(3)})\cap X_1$, $X_{12}=\M_2(F^{(3)})\cap X_1$, $X_{21}=\M_1(F^{(3)})\cap X_2$ and $X_{22}=\M_2(F^{(3)})\cap X_2$.
\end{definition}

We have $\M_1(F^{(3)})=X_{11}\cup X_{21}$ and $\M_2(F^{(3)})=X_{12}\cup X_{22}$.
We also have $|X_{11}|+|X_{12}|\leq |X_1|$ and $|X_{21}|+|X_{22}|\leq |X_2|$.
Notice that the inequalities may be strict, as the allocation needs not to be entire.

We show that both $|X_{11}|$ and $|X_{21}|$ are approximately $\frac14$.
The proof is similar to the proof of Proposition~\ref{prop:ins3}, with some extra calculations.

\begin{proposition}
$|X_{11}|$ and $|X_{21}|$ are bounded as follows:
$$\frac14-\frac72\tau+\frac14\tau^2-\varepsilon\cdot\frac34(1+\tau)^2\leq|X_{11}|\leq\frac14+\frac32\tau-\frac14\tau^2,$$
$$\frac14-\tau+\frac14\tau^2\leq|X_{21}|\leq\frac14(1+\tau)^2+\varepsilon\cdot\frac14(1+\tau)^2.$$
\end{proposition}
\begin{proof}
By the $(1-\tau)$-approximate proportionality for agent $1$, we must have
\begin{equation}\label{eqn:ins3agent1lowerbound}
\frac13|X_{11}|+|X_{21}|\geq\frac12(1-\tau)\cdot \left(\frac13|X_1|+|X_2|\right).
\end{equation}
In addition, we must also have $|\M_1(F^{(3)})|\leq|\M_1(F^{(2)})|$. Otherwise, in the second instance, it is more beneficial for agent $1$ to report $f_1^{(3)}$ than truthfully reporting $f_1^{(2)}$.
Thus,
\begin{equation}\label{eqn:ins3agent1upperbound}
|X_{11}|+|X_{21}|\leq |\M_1(F^{(2)})|\leq |X_1|.
\end{equation}
By (\ref{eqn:ins3agent1lowerbound}) and (\ref{eqn:ins3agent1upperbound}), we can obtain
\begin{equation}\label{eqn:lower_bound_X21}
    |X_{21}|\geq-\frac14(1+\tau)|X_1|+\frac34(1-\tau)|X_2|.
\end{equation}

By the $(1-\tau)$-approximate proportionality for agent $2$, we have
$$\varepsilon|X_{12}|+|X_{22}|\geq\frac12(1-\tau)\cdot\left(\varepsilon|X_1|+|X_2|\right),$$
which, by $|X_{12}|\leq |X_1|$, implies
$$|X_{22}|\geq\frac12(1-\tau)|X_2|+\varepsilon\cdot\left(\frac12(1-\tau)|X_1|-|X_{12}|\right)\geq \frac12(1-\tau)|X_2|-\varepsilon\cdot\frac12(1+\tau)|X_1|,$$
which, by $|X_{21}|+|X_{22}|\leq |X_2|$, further implies
\begin{equation}\label{eqn:upper_bound_X21}
|X_{21}|\leq \frac12(1+\tau)|X_2|+\varepsilon\cdot\frac12(1+\tau)|X_1|.
\end{equation}
Substituting (\ref{eqn:X1X2}) into (\ref{eqn:lower_bound_X21}) and (\ref{eqn:upper_bound_X21}), we have
\begin{equation}\label{eqn:X21}
    \frac14-\tau+\frac14\tau^2\leq|X_{21}|\leq\frac14(1+\tau)^2+\varepsilon\cdot\frac14(1+\tau)^2.
\end{equation}

We can also obtain the range of $|X_{11}|$ by combining (\ref{eqn:ins3agent1lowerbound}), (\ref{eqn:ins3agent1upperbound}), (\ref{eqn:X21}) and (\ref{eqn:X1X2}) with some calculations:
\begin{equation}\label{eqn:X11}
    \frac14-\frac72\tau+\frac14\tau^2-\varepsilon\cdot\frac34(1+\tau)^2\leq|X_{11}|\leq\frac14+\frac32\tau-\frac14\tau^2.
\end{equation}
\end{proof}

\begin{instance}
$F^{(4)}=(f_1^{(4)},f_2^{(4)})$, where
$$f_1^{(4)}(x)=\left\{\begin{array}{ll}
    1 & x\in X_{11} \\
    \sqrt{\varepsilon} & x\in X_{21}\\
    \varepsilon & x\in [0,1]\setminus(X_{11}\cup X_{21})
\end{array}\right.\qquad\mbox{and}\qquad
f_2^{(4)}(x)=\left\{\begin{array}{ll}
    \varepsilon & x\in X_1 \\
    1 & x\in X_2
\end{array}\right..$$
\end{instance}

The proposition below shows that the total length agent $2$ can get from $X_2$ is at most approximately $\frac14$.

\begin{proposition}\label{prop:ins4_approx}
$|\M_2(F^{(4)})\cap X_2|\leq\frac14+\frac32\tau-\frac14\tau^2+\sqrt\varepsilon$.
\end{proposition}
\begin{proof}
Suppose agent $1$ report $f_1^{(3)}$ instead of his/her true value density function $f_1^{(4)}$.
The instance becomes Instance 3, and we have seen that agent $1$ will receive $X_{11}\cup X_{21}$, which is worth $|X_{11}|+\sqrt\varepsilon|X_{21}|$ with respect to his/her true value density function $f_1^{(4)}$.
To ensure truthfulness, we must have 
\begin{equation}\label{eqn:ins4_valuation_lowerbound}
v_1(\M_1(F^{(4)}))\geq|X_{11}|+\sqrt\varepsilon\cdot|X_{21}|.
\end{equation}

On the other hand, we have
\begin{align*}
    v_1(\M_1(F^{(4)})) &= |\M_1(F^{(4)})\cap X_{11}|+\sqrt\varepsilon\cdot|\M_1(F^{(4)})\cap X_{21}|+\varepsilon\cdot|M_1(F^{(4)})\setminus(X_{11}\cup X_{21})|\\
    &\leq |X_{11}|+\sqrt\varepsilon\cdot|\M_1(F^{(4)})\cap X_{21}|+\varepsilon.
\end{align*}
Combining this with (\ref{eqn:ins4_valuation_lowerbound}), we have
$$|\M_1(F^{(4)})\cap X_{21}|\geq |X_{21}|-\sqrt\varepsilon.$$
For agent $2$, we then have
\begin{align*}
    |\M_2(F^{(4)})\cap X_2|&\leq |X_2|-|\M_1(F^{(4)})\cap X_2|\\
    &\leq |X_2|-|\M_1(F^{(4)})\cap X_{21}|\\
    &\leq |X_2|-|X_{21}|+\sqrt\varepsilon\\
    &\leq\frac12(1+\tau)-\left(\frac14-\tau+\frac14\tau^2\right)+\sqrt\varepsilon\tag{by (\ref{eqn:X1X2}) and (\ref{eqn:X21})}\\
    &=\frac14+\frac32\tau-\frac14\tau^2+\sqrt\varepsilon.
\end{align*}
\end{proof}

\begin{instance}
$F^{(5)}=(f_1^{(5)},f_2^{(5)})$, where $f_1^{(5)}(x)=1$ for $x\in[0,1]$ and
$$f_2^{(5)}(x)=\left\{\begin{array}{ll}
    1 & x\in X_2 \\
    1-\varepsilon & x\in X_{11}\\
    \varepsilon & x\in X_1\setminus X_{11}
\end{array}\right..$$
\end{instance}

The following proposition says that agent $1$ must receive most of $X_{11}$ and agent $2$ must receive exactly $X_2$.

\begin{proposition}\label{prop:ins5_approx}
$|\M_1(F^{(5)})\cap X_{11}|\geq |X_{11}|-\tau$ and $\M_2(F^{(5)})=X_2$.
\end{proposition}
\begin{proof}
The reason for $\M_2(F^{(5)})=X_2$ is similar as it is in the proof of Proposition~\ref{prop:ins5}:
firstly, we must have $|\M_2(F^{(5)})|\leq|X_2|$, for otherwise agent $2$ in Instance $1$ will misreport his/her value density function to $f_2^{(5)}$; secondly, given $|\M_2(F^{(5)})|\leq|X_2|$, the maximum value agent $2$ can get is $|X_2|$ by receiving $X_2$, and we must allocate $X_2$ to agent $2$ to avoid him/her to misreport $f_2^{(1)}$.
This proves the second half of the proposition.

Since $X_2$ is allocated to agent $2$ and $X_1=[0,1]\setminus X_2$, we have $\M_1(F^{(5)})\subseteq X_1$.
To guarantee $(1-\tau)$-approximate proportionality, we must have $|\M_1(F^{(5)})\cap X_1|\geq \frac12(1-\tau)$, which, by (\ref{eqn:X1X2}), implies
$$|X_1\setminus\M_1(F^{(5)})|=|X_1|-|M_1(F^{(5)})\cap X_1|\leq\frac12(1+\tau)-\frac12(1-\tau)=\tau.$$
As a result,
$$|X_{11}\setminus\M_1(F^{(5)})|\leq|X_1\setminus\M_1(F^{(5)})|\leq\tau,$$
which implies the first half of the proposition.
\end{proof}

\begin{instance}
$F^{(6)}=(f_1^{(6)},f_2^{(6)})$, where
$$f_1^{(6)}(x)=\left\{\begin{array}{ll}
    1 & x\in X_{11} \\
    \sqrt{\varepsilon} & x\in X_{21}\\
    \varepsilon & x\in [0,1]\setminus(X_{11}\cup X_{21})
\end{array}\right.\qquad\mbox{and}\qquad
f_2^{(6)}(x)=\left\{\begin{array}{ll}
    1 & x\in X_2 \\
    1-\varepsilon & x\in X_{11}\\
    \varepsilon & x\in X_1\setminus X_{11}
\end{array}\right..$$
\end{instance}

Firstly, the length agent $2$ receives on $X_2$ is at most approximately $\frac14$.
\begin{proposition}\label{prop:approx_final1}
$|\M_2(F^{(6)})\cap X_2|\leq\frac14+\frac32\tau-\frac14\tau^2+2\sqrt\varepsilon$.
\end{proposition}
\begin{proof}
Consider Instance 4 in this proof.
By Proposition~\ref{prop:ins4_approx}, the value agent $2$ can receive in $\M_2(F^{(4)})$, with respect to $f_2^{(4)}$, is at most
$$\varepsilon\cdot |\M_2(F^{(4)})\cap X_1|+1\cdot\left(\frac14+\frac32\tau-\frac14\tau^2+\sqrt\varepsilon\right)<\frac14+\frac32\tau-\frac14\tau^2+2\sqrt\varepsilon.$$
If $|\M_2(F^{(6)})\cap X_2|>\frac14+\frac32\tau-\frac14\tau^2+2\sqrt\varepsilon$, the subset $\M_2(F^{(6)})\cap X_2$ is worth more than $\frac14+\frac32\tau-\frac14\tau^2+2\sqrt\varepsilon$ with respect to $f_2^{(4)}$.
Then agent $2$ will report $f_2^{(6)}$ instead of the true value density function $f_2^{(4)}$ (now the instance becomes Instance 6 as $f_1^{(4)}=f_1^{(6)}$), and receive more benefit, which contradicts to the truthfulness. 
\end{proof}

Next, we show that most part of $|X_{11}|$ is not allocated to agent $2$.
\begin{proposition}\label{prop:approx_final2}
$|\M_2(F^{(6)})\cap X_{11}|\leq\tau+\sqrt\varepsilon$.
\end{proposition}
\begin{proof}
Suppose agent $1$ report $f_1^{(5)}$ instead of his/her true value density function $f_1^{(6)}$.
The instance becomes Instance 5, and Proposition~\ref{prop:ins5_approx} implies agent $1$ will receive a length of at least $|X_{11}|-\tau$ on $X_{11}$, which is worth $|X_{11}|-\tau$ with respect to $f_1^{(6)}$.
To guarantee truthfulness, we must have $v_1(\M_1(F^{(6)}))\geq|X_{11}|-\tau$.

On the other hand, we have
\begin{align*}
    v_1(\M_1(F^{(6)}))&=|\M_1(F^{(6)})\cap X_{11}|+\sqrt\varepsilon\cdot|\M_1(F^{(6)})\cap X_{21}|+\varepsilon\cdot|\M_1(F^{(6)})\setminus(X_{11}\cup X_{21})|\\
    &\leq|\M_1(F^{(6)})\cap X_{11}|+\sqrt\varepsilon.
\end{align*}
Putting those together, we have
$$|\M_1(F^{(6)})\cap X_{11}|+\sqrt\varepsilon\geq |X_{11}|-\tau,$$
which implies $|\M_1(F^{(6)})\cap X_{11}|\geq|X_{11}|-\tau-\sqrt\varepsilon$, which further implies $|\M_2(F^{(6)})\cap X_{11}|\leq\tau+\sqrt\varepsilon$.
\end{proof}

Finally, we show that Proposition~\ref{prop:approx_final1} and Proposition~\ref{prop:approx_final2} imply that the $(1-\tau)$-approximate proportionality cannot be satisfied for agent $2$ if $\tau$ is small.

The two propositions imply the following upper bound on the value agent $2$ gets:
\begin{align*}
    v_2(\M_2(F^{(6)}))&=|\M_2(F^{(6)})\cap X_{11}|\cdot(1-\varepsilon)+|\M_2(F^{(6)})\cap X_{2}|\cdot1+|\M_2(F^{(6)})\setminus(X_{11}\cup X_2)|\cdot\varepsilon\\
    &\leq|\M_2(F^{(6)})\cap X_{11}|+|\M_2(F^{(6)})\cap X_{2}|+\varepsilon\\
    &\leq (\tau+\sqrt\varepsilon)+\left(\frac14+\frac32\tau-\frac14\tau^2+2\sqrt\varepsilon\right)+\varepsilon\tag{Proposition~\ref{prop:approx_final1} and Proposition~\ref{prop:approx_final2}}\\
    &<\frac14+\frac52\tau-\frac14\tau^2+4\sqrt\varepsilon.
\end{align*}

On the other hand, we have
\begin{align*}
    v_2([0,1])&=|X_2|+(1-\varepsilon)|X_{11}|+\varepsilon\cdot|X_1\setminus X_{11}|\\
    &\geq|X_2|+(1-\varepsilon)|X_{11}|\\
    &\geq\frac12(1-\tau)+(1-\varepsilon)\left(\frac14-\frac72\tau+\frac14\tau^2-\varepsilon\cdot\frac34(1+\tau)^2\right)\tag{by (\ref{eqn:X1X2}) and (\ref{eqn:X11})}\\
    &>\frac34-4\tau+\frac14\tau^2-10\varepsilon,\tag{$10$ is a loose upper bound to the coefficient of $\varepsilon$}
\end{align*}
and the $(1-\tau)$-approximately proportional value for agent $2$ is
$$\frac12(1-\tau)v_2([0,1])>\frac12(1-\tau)\left(\frac34-4\tau+\frac14\tau^2-10\varepsilon\right)>\frac38-\frac{19}8\tau+\frac{17}8\tau^2-\frac18\tau^3-10\varepsilon.$$

Therefore, to guarantee the $(1-\tau)$-approximate proportionality for agent $2$, a necessary condition is
$$\frac14+\frac52\tau-\frac14\tau^2+4\sqrt\varepsilon>\frac38-\frac{19}8\tau+\frac{17}8\tau^2-\frac18\tau^3-10\varepsilon.$$

Elementary calculations show that
$$\frac14+\frac52\tau-\frac14\tau^2<\frac38-\frac{19}8\tau+\frac{17}8\tau^2-\frac18\tau^3$$
for $\tau\in[0,0.025969]$.
By considering a sufficiently small $\varepsilon$, the $(1-\tau)$-approximate proportionality cannot hold for agent $2$ if $\tau\leq0.025969$, which concludes Theorem~\ref{thm:main2_approx}.

\setcounter{instance}{0}

\section{Proof of Theorem~\ref{thm:main_restricted}}
\label{append:restricted}
The proof of this theorem is very similar to the proof of Theorem~\ref{thm:main2}.
We only need to modify the six instances slightly.

All the $\varepsilon$ terms in the six instances are replaced by $0$.
The number $0.5$ in Instance~\ref{ins3} is replaced by $1/k$.
The number $2\varepsilon$ in Instance~\ref{ins4} and \ref{ins6} is replaced by $1/k$.
The number $1-\varepsilon$ in Instance~\ref{ins5} and \ref{ins6} is replaced by $(k-1)/k$, which is larger than $1/k$ since $k\geq 3$.
The six instances are shown in Table~\ref{tab_restricted}.

\begin{table}
    \centering
    \begin{tabular}{|c|c|}
    \hline
    Instance & Allocation \\
    \hline
      \includegraphics{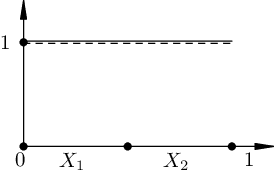}   & $\M(F^{(1)})=(X_1,X_2)$ \\
      \hline
      \includegraphics{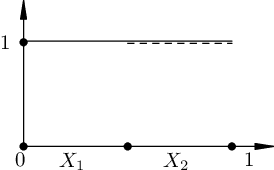}  & $\M(F^{(2)})=(X_1,X_2)$\\
      \hline
      \includegraphics{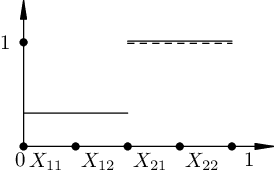}  & $\M_1(F^{(3)})=X_{11}\cup X_{21}$ and $X_{22}\subseteq\M_2(F^{(3)})\subseteq X_{12}\cup X_{22}$\\
      \hline
      \includegraphics{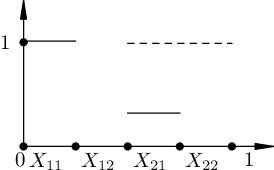}  & $\M_1(F^{(4)})=X_{11}\cup X_{21}$ and $X_{22}\subseteq\M_2(F^{(4)})\subseteq X_{12}\cup X_{22}$\\
      \hline
      \includegraphics{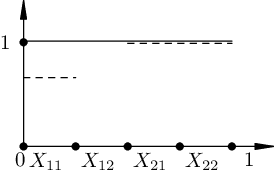}  & $\M(F^{(5)})=(X_1,X_2)$\\
      \hline
      \includegraphics{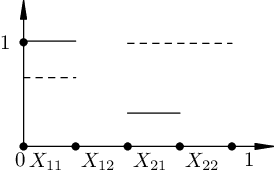}  & \begin{tabular}{l}
          Agent $1$ must receive a length of at least $\frac14-\frac1{4k}$ on $X_{11}$.  \\
          Agent $2$ must receive a length of at most $\frac14$ on $X_2=X_{21}\cup X_{22}.$ \\
          However, in this case, proportionality of agent $2$ fails.
      \end{tabular}\\
      \hline
    \end{tabular}
    \caption{Instances constructed for the proof of Theorem~\ref{thm:main_restricted} and the corresponding allocations given by $\M$. The value density for agent $1$ is shown in solid lines, and the value density for agent $2$ is shown in dashed lines.}
    \label{tab_restricted}
\end{table}

The analyses for the first two instances are exactly the same as before, and we have that $(X_1,X_2)$ is the only possible allocation.

For Instance~\ref{ins3}, we must have $|\M_1(F^{(3)})\cap X_1|=|\M_1(F^{(3)})\cap X_2|=|\M_2(F^{(3)})\cap X_2|=\frac14$.
The proof for $|\M_1(F^{(3)})\cap X_1|=|\M_1(F^{(3)})\cap X_2|=\frac14$ is the same as before, and $|\M_2(F^{(3)})\cap X_2|=\frac14$ is due to proportionality for agent $2$.
We let $X_{11}=\M_1(F^{(3)})\cap X_1$, $X_{12}=X_1\setminus X_{11}$,  $X_{21}=\M_1(F^{(3)})\cap X_2$, and $X_{22}=\M_2(F^{(3)})\cap X_2$.

For Instance~\ref{ins4}, we must also have $\M_1(F^{(4)})=X_{11}\cup X_{21}$ and $X_{22}\subseteq\M_2(F^{(4)})\subseteq X_{12}\cup X_{22}$.
The reason for $\M_1(F^{(4)})=X_{11}\cup X_{21}$ is the same as the proof of Proposition~\ref{prop:ins4}, and $X_{22}\subseteq\M_2(F^{(4)})$ is due to proportionality for agent $2$.

The analysis for Instance~\ref{ins5} is the same as the proof pf Proposition~\ref{ins5}, and we must have $\M(F^{(5)})=(X_1,X_2)$.

Now we derive a contradiction for Instance~\ref{ins6}.
Firstly, agent $1$ must receive a length of at least $\frac14-\frac1{4k}$ on $X_{11}$.
Otherwise, with respect to $f_1^{(6)}$, agent $1$ receives a value that is strictly less than $\frac14-\frac1{4k}$ on $X_{11}$, and agent $1$ receives a total value that is strictly less than $\frac14$ even if the entire $X_{21}$ is given to agent $1$.
However, by misreporting the value density function to $f_1^{(5)}$, agent $1$ gets $X_1$ which is worth $\frac14$ with respect to $f_1^{(6)}$.
Thus, the mechanism cannot be truthful.

Secondly, agent $2$ must receive a length of at most $\frac14$ on $X_2=X_{21}\cup X_{22}$.
Otherwise, in Instance~\ref{ins4}, reporting $f_2^{(6)}$ instead of reporting the true valuation $f_2^{(4)}$ is more beneficial for agent $2$.

However, the proportionality then fails to hold for agent $2$.
Agent $2$ can receive at most value $\frac{k-1}{4k^2}$ on $X_{11}$ and at most value $\frac14$ on $X_{22}$, which gives us the total value at most $\frac14+\frac{k-1}{4k^2}$, which is less than half of agent $2$'s value on the entire cake, $\frac12+\frac{k-1}{4k}$, given that $k\geq3$.

\section{Discussions on Brams, Jones, and Klamler's Truthful Notion and Mechanisms}
\label{append:Brams}
Brams, Jones, and Klamler~\cite{brams2006better} define a truthful notion called \emph{strategy-proofness} which is similar but slightly weaker than our risk-averse truthfulness.
In this section, we will use the word ``maximin strategy-proof'' to refer to the truthful notion defined by Brams, Jones, and Klamler (as strategy-proofness is more often used for dominant strategy truthfulness).
In Sect.~\ref{append:Brams1}, we will define maximin strategy-proofness and compare it with our (proportional) risk-averse truthfulness.
In Sect.~\ref{append:Brams2}, we will describe \emph{the equitability procedure}, a mechanism proposed by Brams, Jones, and Klamler that is maximin strategy-proof and proportional which always outputs allocations with connected pieces (see the first paragraph in Sect.~\ref{sect:RAT_P} for allocations with connected pieces), and we will compare it with our Mechanism~\ref{alg:rat_p}.

\subsection{Maximin Strategy-Proofness in Brams, Jones and Klamler~\cite{brams2006better}}
\label{append:Brams1}
A mechanism is \emph{maximin strategy-vulnerable} if a (risk-averse) agent can misreport his/her value density function and ``assuredly'' do better, regardless of the functions reported by other players.
A mechanism is \emph{maximin strategy-proof} if it is not maximin strategy-vulnerable.

\begin{definition}\label{def:maximinSP}
A mechanism $\M$ is \emph{maximin strategy-proof} if, for any agent $i$ with value density function $f_i$ and any value density function $f_i'$, there exist $f_1,\ldots,f_{i-1},f_{i+1},\ldots,f_n$ such that
$$v_i(\M_i(f_1,\ldots,f_{i-1},f_i,f_{i+1},\ldots,f_n))\geq v_i(\M_i(f_1,\ldots,f_{i-1},f_i',f_{i+1},\ldots,f_n)).$$
\end{definition}

It is clear from the definition that risk-averse truthfulness implies maximin strategy-proofness, as both 1 and 2 in Definition~\ref{def:wrat} imply the inequality in Definition~\ref{def:maximinSP}.
In fact, maximin strategy-proofness is slightly weaker than the risk-averse truthfulness (and so further weaker than the proportional risk-averse truthfulness).
Consider a scenario where an agent $i$ misreports $f_i$ to $f_i'$.
If $f_i$ and $f_i'$ give the same worst-case utility to agent $i$ and $f_i'$ sometimes performs strictly better, the mechanism is maximin strategy-proof, but it does not satisfy Definition~\ref{def:wrat}.

Maximin strategy-proofness is a rather weak notion if we consider the direct-revelation model (like in this paper) instead of Robertson-Webb query model.
Indeed, the following theorem says that every proportional mechanism automatically satisfies this truthful notion (a similar observation is proved by Ortega and Segal-Halevi; see Lemma~3 in reference~\cite{ortega2022obvious}.
Chen et al.~\cite{CL10} also call maximin strategy-proofness a “strikingly weak notion of truthfulness”

\begin{theorem}\label{thm:maximinSP_proportional}
All proportional mechanisms (including all those in Table~\ref{tab:mechanisms}) are maximin strategy-proof.
\end{theorem}
\begin{proof}
Consider an arbitrary proportional mechanism $\M$ and an arbitrary agent $i$ with value density function $f_i$.
Suppose the remaining $n-1$ agents report value density functions that are identical to $f_i$.
Consider an arbitrary function $f_i'$ reported by agent $i$, and let $(A_1,\ldots,A_n)$ be the allocation output by the mechanism.
By proportionality, we have $v_i(A_j)\geq\frac1nv_i([0,1])$ for each $j=1,\ldots,i-1,i+1,\ldots,n$ (since $f_i$ is the reported value density function for each of the remaining agents).
This implies $v_i(A_i)\leq\frac1nv_i([0,1])$.
Therefore, any report of agent $i$ would not yield an allocation that is worth more than agent $i$'s proportional value.
On the other hand, since $\M$ is proportional, agent $i$ will guarantee a proportional value when (s)he reports truthfully.
Therefore, $\M$ is maximin strategy-proof.
\end{proof}

\subsection{Equitability Procedure}
\label{append:Brams2}
Brams, Jones, and Klamler~\cite{brams2006better} propose a maximin strategy-proof mechanism, \emph{the equitability procedure}, that always outputs a proportional allocation.
In addition, Brams, Jones, and Klamler claim that, under the equitability procedure, an agent may receive a share that is worth less than his/her proportional share if (s)he misreports his/her value density function (see Theorem~3 of the paper).
This claim is even stronger than saying that the procedure is proportionally risk-averse truthful.
We will show that this claim is wrong, and the mechanism is not even proportionally risk-averse truthful.

\begin{definition}
Given a valuation profile $(f_1,\ldots,f_n)$, an allocation $(A_1,\ldots,A_n)$ is \emph{equitable} if
$$\frac{\int_{A_1}f_1(x)dx}{\int_0^1f_1(x)dx}=\frac{\int_{A_2}f_2(x)dx}{\int_0^1f_2(x)dx}=\cdots=\frac{\int_{A_n}f_n(x)dx}{\int_0^1f_n(x)dx}.$$
A mechanism is equitable if it always outputs equitable allocations with respect to the reported value density functions.
\end{definition}

The equitability procedure always outputs equitable, proportional, and entire allocations with connected pieces.
An entire allocation with connected pieces can be characterized by a permutation of $(1,\ldots,n)$ which specifies a left-to-right order of the agents and a set of $n-1$ cut points $x_1,\ldots,x_{n-1}$ that divide the cake to $n$ intervals such that the $i$-th interval is allocated to the $i$-th agent in the order specified by the permutation.
The equitability procedure computes $x_1,\ldots,x_{n-1}$ that yield an equitable allocation for each of the $n!$ permutations.
Then it outputs an allocation that maximizes the fractional value $\frac{\int_{A_i}f_i(x)dx}{\int_0^1f_i(x)dx}$ (notice that the fraction has the same value for all the agents, as the allocation is equitable).
This finishes the description of the equitability procedure.

It is proved in reference~\cite{brams2006better} that the procedure is maximin  strategy-proof.
In addition, the procedure always outputs a proportional allocation (stated in the first half of Theorem~3 in their paper).
Intuitively, in the moving-knife procedure, the first $n-1$ agents receive exactly their proportional shares, while the last agent may receive more.
Consider the same left-to-right order.
By shifting the $n-1$ cut points rightward for a little bit, we can make the allocation equitable while making sure each agent receives a piece with a slightly larger value.
Thus, there exist ``good'' left-to-right orders where the resultant equitable allocations are proportional.

As we mentioned, Brams, Jones, and Klamler misclaim in Theorem~3 that, under the equitability procedure, an agent may receive a piece with a value less than the proportional value if (s)he misreports his/her value density function.
Before we disprove this claim, we first note that value density functions are normalized with $\int_0^1f_i(x)dx=1$ in reference~\cite{brams2006better}, and two value density functions are considered the same if one rescales the other.
However, the uniform function $f(x)=1$ and the two functions $\ell^{(n)},r^{(n)}$ defined in (\ref{eqn:lr}) are all normalized, and they are distinct.
Suppose $f(x)=1$ is an agent's true value density function, Lemma~\ref{lem:lr} implies that reporting $\ell^{(n)}$ or $r^{(n)}$ can still guarantee a proportional share for this agent since the equitability procedure always outputs proportional allocations with respect to the reported value density functions.
Since $\ell^{(n)}$, $r^{(n)}$ and $f$ are different even up to normalization, this disproves the claim made by Brams, Jones, and Klamler.

In addition, the equitability procedure is not proportionally risk-averse truthful: an agent with the uniform value density function can misreport his/her valuation to $\ell^{(n)}$ or $r^{(n)}$, which is sometimes more beneficial while still guaranteeing to receive a proportional share.

\begin{theorem}\label{thm:equitability_PRAT}
The equitability procedure is not proportionally risk-averse truthful.
\end{theorem}
\begin{proof}
Let $f_1(x)=1$ for $x\in[0,1]$ be agent $1$'s true value density function.
Lemma~\ref{lem:lr} implies that reporting $\ell^{(n)}$ still guarantees a proportional share for agent $1$.
It remains to show that there exist $f_2,\ldots,f_n$ such that reporting $\ell^{(n)}$ is strictly more beneficial for agent $1$ than truthfully reporting $f_1$.
Let $f_2(x)=1$ for $x\in[0,\frac1n]$ and $f_2(x)=0$ for $x\in(\frac1n,1]$, and $f_3(x)=\cdots=f_n(x)=0$ for $x\in[0,\frac{n-1}n)$ and $f_3(x)=\cdots=f_n(x)=1$ for $x\in[\frac{n-1}n,1]$.

For both scenarios where agent $1$ reports $f_1$ and $\ell^{(n)}$ respectively, agent $1$ will be the second agent in the left-to-right order of the allocation.
Let $I$ be the interval allocated to agent $1$ when (s)he truthfully reports $f_1$.
Then $I$ is an interval that is near the left edge of the cake.
By misreporting $\ell^{(n)}$, the value of $I$ in terms of $\ell^{(n)}$ is smaller than its value in terms of $f_1$.
To maintain equitability, the equitability procedure will stretch $I$ to make sure the fractional value for agent $1$ matches the fractional value for the remaining agents.
This will make misreporting $\ell^{(n)}$ more beneficial.
\end{proof}

Notice that we do not know if the equitability procedure is  risk-averse truthful.

\paragraph{Comparison between the equitability procedure and Mechanism~\ref{alg:rat_p}, \ref{alg:rat_p'}}
The advantage of the equitability procedure is its equitability guarantee.
Equitability is a desirable property for fairness in many applications.
In our Mechanism~\ref{alg:rat_p}, the first agent in the left-to-right order receives exactly his/her proportional share, while the remaining agents may receive more than their proportional shares.
This may be viewed as being unfair to the first agent.

Mechanism~\ref{alg:rat_p'}, on the other hand, is equitable: each agent receives exactly $\frac1n$ of the value of the whole cake.
The equitability procedure outperforms this mechanism by allocation efficiency.
The equitability procedure always outputs entire allocations, and each agent may receive a piece with more than his/her proportional value.

The advantage of Mechanism~\ref{alg:rat_p} and \ref{alg:rat_p'} are their stronger truthful guarantees, as we have already seen.
In addition, the equitability procedure runs in exponential time (the mechanism needs to enumerate all the $n!$ permutations of the $n$ agents), while Mechanism~\ref{alg:rat_p}, as well as all our mechanisms in Sect.~\ref{sect:RAT_EF} and Sect.~\ref{sect:RAT_P}, run in polynomial time.

\section{Discussions on Troyan and Morrill's Obvious Manipulations}
\label{append:nom}
In this section, we formally introduce Troyan and Morrill's truthful notion of \emph{not obvious manipulability}, and compare it with our (proportional) risk-averse truthfulness.
Similar to our setting of (proportional) risk-averse truthfulness, Troyan and Morrill~\cite{troyan2020obvious} take into account that an agent does not know the utility functions of other agents.
A misreport of utility function is \emph{profitable} if there exists a set of other agents' utility function profile such that the misreporting agent is strictly better off. 
A profitable misreport of utility function is considered an \emph{obvious manipulation} if it either makes the agent strictly better off in the best case or makes the agent strictly better off in the worst case.
Below, we define not obvious manipulability in the context of cake cutting.

\begin{definition}
Given a mechanism $\M$, for an agent $i$ with value density function $f_i$, a value density function $f_i'$ is a \emph{profitable manipulation} if there exist $f_1,\ldots,f_{i-1},f_{i+1},\ldots,f_n$ such that
    $$v_i(\M_i(f_1,\ldots,f_{i-1},f_i,f_{i+1},\ldots,f_n))<v_i(\M_i(f_1,\ldots,f_{i-1},f_i',f_{i+1},\ldots,f_n)).$$
\end{definition}

\begin{definition}\label{def:nom}
A mechanism $\M$ is \emph{not obvious manipulable} if, for each agent $i$ with value density function $f_i$ and for any profitable manipulation $f_i'$, the following are true:
\begin{enumerate}
    \item $\displaystyle\inf_{f_1,\ldots,f_{i-1},f_{i+1},\ldots,f_n}v_i(\M_i(f_1,\ldots,f_{i-1},f_i,f_{i+1},\ldots,f_n))\geq\inf_{f_1,\ldots,f_{i-1},f_{i+1},\ldots,f_n}v_i(\M_i(f_1,\ldots,f_{i-1},f_i',f_{i+1},\ldots,f_n))$
    \item $\displaystyle\sup_{f_1,\ldots,f_{i-1},f_{i+1},\ldots,f_n}v_i(\M_i(f_1,\ldots,f_{i-1},f_i,f_{i+1},\ldots,f_n))\geq\sup_{f_1,\ldots,f_{i-1},f_{i+1},\ldots,f_n}v_i(\M_i(f_1,\ldots,f_{i-1},f_i',f_{i+1},\ldots,f_n))$
\end{enumerate}
\end{definition}

Besides the obvious difference that the not obvious manipulability focuses exclusively on the worse case and the best case, there is also a technical difference between it and (proportional) risk-averse truthfulness. 
In the definition above, a separate min (max) function is taken over the other agents’ strategies $f_1,\ldots,f_{i-1},f_{i+1},\ldots,f_n$ on both sides of the inequality. 
However, in the notion of risk-averse truthfulness, we are essentially considering if
$$\inf_{f_1,\ldots,f_{i-1},f_{i+1},\ldots,f_n}\left(u_i(\M_i(f_1,\ldots,f_{i-1},f_i',f_{i+1},\ldots,f_n))-u_i(\M_i(f_1,\ldots,f_{i-1},f_i,f_{i+1},\ldots,f_n))\right)$$
is negative. 
In particular, a single set of strategies $\{f_1,\ldots,f_{i-1},f_{i+1},\ldots,f_n\}$ is considered and taken to minimize the utility gain.

The strength of not obvious manipulability is incomparable with (proportional) risk-averse truthfulness.
As we will show later, our Mechanism~\ref{alg:rat_ef} is obviously manipulable, but we have seen that it is proportionally risk-averse truthful (Theorem~\ref{thm:rat_ef}).
On the other hand, Dubins-Spanier's moving-knife procedure is not obviously manipulable\footnote{Ortega and Segal-Halevi~\cite{ortega2022obvious} prove that the moving-knife procedure is obviously manipulable. However, their result applies to the Robertson-Webb query model, whereas we consider direct-revelation setting in our paper. See our remark following Theorem~\ref{thm:moving_knife_nom}.}, but we have seen that it is not risk-averse truthful (Theorem~\ref{thm:moving-knife}).

\begin{theorem}\label{thm:rat_ef_nom}
Mechanism~\ref{alg:rat_ef} is obviously manipulable.
\end{theorem}
\begin{proof}
Suppose agent $1$'s value density function is
$$f_1(x)=\left\{\begin{array}{ll}
    1 & x\in[0,\frac1n) \\
    0 & x\in[\frac1n,1]
\end{array}\right..$$
In the best case, if agent $1$ reports $f_1$ truthfully, regardless of the reports of the other $n-1$ agents, (s)he will receive a $\frac1n$ fraction of the interval $[0,\frac1n)$, which has value $\frac1{n^2}$.
However, when agent $1$ report the uniform function $f_1(x)=1$, (s)he will receive almost all of the interval $[0,\frac1n)$ if the other $n-1$ agents' value density functions are 
$$f_2(x)=\cdots=f_n(x)=\left\{\begin{array}{ll}
    0 & x\in[0,1-\varepsilon) \\
    1 & x\in[1-\varepsilon,1]
\end{array}\right.,$$
where $\varepsilon$ is a sufficiently small number.
Therefore, 2 in Definition~\ref{def:nom} fails.
\end{proof}

\begin{theorem}\label{thm:moving_knife_nom}
Dubins-Spanier's moving-knife procedure is not obviously manipulable.
\end{theorem}
\begin{proof}
Consider an arbitrary agent $i$ with value density function $f_i$ and any value density function $f_i'$.
In the worst case, agent $i$ will receive a value of $\frac1nv_i([0,1])$ when truthfully reporting $f_i$ (as long as (s)he is not the last agent who is allocated in the moving-knife procedure).
When reporting $f_i'$, agent $i$ will receive a value of at most $\frac1nv_i([0,1])$ in the worst case.
This happens when the remaining $n-1$ agents' value density functions are identical to $f_i$, in which case each of the remaining $n-1$ agents receives a piece with value at least $\frac1nv_i([0,1])$.
Thus, 1 in Definition~\ref{def:nom} holds.

In the best case, agent $i$ will receive almost the entire cake, which happens when the other $n-1$ agents' value density functions are 
$$f(x)=\left\{\begin{array}{ll}
    1 & x\in[0,\varepsilon) \\
    0 & x\in[\varepsilon,1]
\end{array}\right.,$$
where $\varepsilon$ is a sufficiently small number satisfying $v_i([0,\varepsilon))<\frac1nv_i([0,1])$.
By taking $\varepsilon\rightarrow0$, we can see the left-hand side of 2 in Definition~\ref{def:nom} reaches the maximum possible value $v_i([0,1])$.
Thus, 2 in Definition~\ref{def:nom} holds.
\end{proof}

As a remark, Ortega and Segal-Halevi~\cite{ortega2022obvious} prove that the moving-knife procedure is obviously manipulable if we are considering the Robertson-Webb query model where the game is formulated as an extensive-form game.
On the other hand, our claim above applies to the direct-revelation setting where all the $n$ value density functions are reported at the same time.
We refer the readers to Theorem~1 and Remark~1 of Ortega and Segal-Halevi's paper for the reason why the moving-knife procedure fails to be not obviously manipulable in the Robertson-Webb setting.

\subsection{Ortega and Segal-Halevi's Moving-Knife Procedure}
\label{append:OSmovingknife}
As we mentioned earlier, Ortega and Segal-Halevi~\cite{ortega2022obvious} prove that the moving-knife procedure is obviously manipulable under the Robertson-Webb query model.
As a solution, they propose a variant of the moving-knife procedure that is not obviously manipulable under the Robertson-Webb query model.

\paragraph{Ortega and Segal-Halevi's moving-knife procedure}
Same as Dubins-Spanier's moving-knife procedure, Ortega and Segal-Halevi's moving-knife procedure asks each agent $i$ a point $x_i$ such that $[0,x_i)$ is worth exactly the proportional value $\frac1nv_i([0,1])$, and allocate $[0,x_{i^\ast})$ to agent $i^\ast$ with the smallest $x_i$ value.
The difference between the two moving-knife procedures comes at the next step.
At the next step, Ortega and Segal-Halevi's moving-knife procedure asks each of the remaining $n-1$ agents for a point $x_i'$ such that $[x_{i^\ast},x_i')$ is worth exactly $\frac1{n-1}v_i([x_{i^\ast},1])$ (instead of the proportional value $\frac1nv_i([0,1])$ as it is in Dubins-Spanier's moving-knife procedure).
Similarly, at Step~$t$ when the remaining part of the cake is allocated among the remaining $n-t+1$ agents, each agent is asked to mark at a point such that the interval from the left endpoint to this point is worth exactly $\frac1{n-t+1}$ fraction of the value of the remaining part of the cake.
This is repeated until the $(n-1)$-th agent is allocated an interval, and then the last agent gets the remaining part of the cake.
It is easy to see that this variant of the moving-knife procedure always produces proportional allocations.
Unlike Dubins-Spanier's moving-knife procedure, all the agents, except for the first agent, may receive allocations with more than their proportional values.

\bigskip

Since Ortega and Segal-Halevi's moving-knife procedure is not obviously manipulable under the Robertson-Webb query model, it is also not obviously manipulable under our direct-revelation model.\footnote{A truthful property satisfied at every internal node in the tree of an extensive-form game is naturally satisfied at the root node.}
However, it does not satisfy our proportional risk-averse truthfulness.

\begin{theorem}\label{thm:OS_PRAT}
Ortega and Segal-Halevi's moving-knife procedure is not proportionally risk-averse truthful.
\end{theorem}
\begin{proof}
Let $f_1(x)=1$ be the true value density function for agent~$1$.
We show that agent~$1$ can misreport his/her value density function to $f_1'=\ell^{(n)}$ (see Eqn.~(\ref{eqn:lr})) that satisfies 1) there exist $f_2,\ldots,f_n$ such that $v_1(\M_1(f_1',f_2,\ldots,f_n))>v_1(\M_1(f_1,f_2,\ldots,f_n))$, and 2) for any $f_2,\ldots,f_n$, $v_1(\M_1(f_1',f_2,\ldots,f_n))\geq \frac1nv_1([0,1])$.

To see 1), suppose $f_2(x)=1$ for $x\in[0,\frac1n]$ and $f_2(x)=0$ for $x\in(\frac1n,1]$, and $f_3(x)=\cdots=f_n(x)=0$ for $x\in[0,\frac1n)$ and $f_3(x)=\cdots=f_n(x)=1$ for $x\in[\frac1n,1]$.
In Ortega and Segal-Halevi's moving-knife procedure, if agent $1$ truthfully reports $f_1$, (s)he will be the second agent receiving an interval after agent $2$ taking $[0,\frac1{n^2})$, and (s)he will receive $[\frac1{n^2},\frac1{n^2}+\frac1{n-1}\cdot(1-\frac1{n^2}))$, which is worth $\frac{n+1}{n^2}$.
If agent $1$ reports $f_1'$, (s)he will also be the second agent receiving an interval after agent $2$ taking $[0,\frac1{n^2})$, and (s)he will receive $[\frac1{n^2},\frac1n+\frac{5n-6}{2n^2(n-1)})$ (by some simple calculations), which is worth more than $\frac{n+1}{n^2}$ with respect to his/her true valuation.

To see 2), suppose agent $1$ reports $f_1'$.
Since Ortega and Segal-Halevi's moving-knife procedure is proportional, regardless of what the remaining $n-1$ agents report, agent $1$ will receive an interval that has a value of at least $\frac1n$ with respect to $f_1'$.
By Lemma~\ref{lem:lr}, agent $1$ receives an interval that is worth at least $\frac1n$ with respect to his/her true valuation $f_1$.
\end{proof}

The risk-averse truthfulness of Ortega and Segal-Halevi's moving-knife procedure depends on a subtle tie-breaking issue.
We discuss it in \ref{append:OSRAT}.

\section{The missing proofs for Table~\ref{tab:mechanisms}}
\label{append:missingproofs}
In this section, we provide the proofs for the results in Table~\ref{tab:mechanisms} that are missing in the previous parts of the paper.

\subsection{Proofs for the Satisfability of Not Obvious Manipulability}
Here, we prove the results in the last column of Table~\ref{tab:mechanisms}.
We first present the following lemma whose proof is almost the same as the proof of Theorem~\ref{thm:maximinSP_proportional} and is also available in Reference~\cite{ortega2022obvious} (Lemma~3 in the paper).

\begin{lemma}\label{lem:1nom}
Every proportional mechanism satisfies 1 in Definition~\ref{def:nom}.
\end{lemma}

The proofs for all the three theorems below are almost identical.
We will show them in one proof.

\begin{theorem}\label{thm:Even-Paz_nom}
The Even-Paz algorithm is not obviously manipulable.
\end{theorem}

\begin{theorem}\label{thm:equitability_nom}
The equitability procedure is not obviously manipulable.
\end{theorem}

\begin{theorem}\label{thm:rat_p_nom}
Mechanism~\ref{alg:rat_p} is not obviously manipulable.
\end{theorem}

\begin{proof}[Proof of Theorem~\ref{thm:Even-Paz_nom}, \ref{thm:equitability_nom}, and \ref{thm:rat_p_nom}]
Lemma~\ref{lem:1nom} implies 1 in Definition~\ref{def:nom} holds.
To show 2 in the definition, it suffices to show that a truthful agent can get almost the entire cake in the best case.
In all the three mechanisms (Even-Paz, equitability, Mechanism~\ref{alg:rat_p}), this happens when all the remaining agents only have a positive value on the interval $[0,\varepsilon)$ for an arbitrarily small $\varepsilon>0$.
\end{proof}

Finally, Mechanism~\ref{alg:rat_p'} is obviously manipulable.

\begin{theorem}\label{thm:rat_p'_nom}
Mechanism~\ref{alg:rat_p'} is obviously manipulable.
\end{theorem}
\begin{proof}
Suppose agent $1$'s value density function is $f_1(x)=1$ on $[0,1]$.
(S)he will always receive a length of $\frac1n$ when truth-telling, which is worth $\frac1n$, regardless of the reports of the remaining $n-1$ agents.
However, if (s)he reports
$$f_1'(x)=\left\{\begin{array}{ll}
    \frac{n-1}\varepsilon &  x\in[0,\varepsilon)\\
    1 & x\in[\varepsilon,1]
\end{array}\right.$$
for some small $\varepsilon>0$, there is a chance (s)he receives $[\varepsilon,1]$ (when the remaining agents only have positive values on $[0,\varepsilon)$), which, in terms of the true value density function $f_1$, is worth much more than $\frac1n$.
\end{proof}

\subsection{On Risk-Averse Truthfulness of Even-Paz Algorithm}
\label{append:EvenPaz}
In this section, we will show that, to make the Even-Paz algorithm risk-averse truthful, we need to carefully choose a tie-breaking rule.

Recall that the algorithm is based on divide-and-conquer.
At each recursive call where an interval $I\subseteq[0,1]$ is allocated to a subset of $k$ agents, each agent is asked to report his/her $\lfloor\frac k2\rfloor:\lceil\frac k2\rceil$ point, $I$ is then cut at the median of these reported points, and the two halves of $I$ are allocated recursively.
It is possible that the $\lfloor\frac k2\rfloor:\lceil\frac k2\rceil$ point is not unique for some agent, which happens when this agent's value density function is $0$ on some intervals (e.g., when $f(x)=1$ on $[0,0.4)\cup[0.6,1]$ and $f(x)=0$ on $[0.4,0.6)$, the $0.5:0.5$ point can be any point on the interval $[0.4,0.6]$).
Therefore, to complete the description of the algorithm, we need to specify a tie-breaking rule.
We consider two natural tie-breaking rules: always choose the left-most point ($0.4$ in the example above) and always choose the right-most point ($0.6$ in the example above).

We have described the Even-Paz algorithm such that the \emph{median} of the $k$ points is cut at each recursive call.
Many papers define the algorithm such that the $\lfloor\frac k2\rfloor$-th point is cut.
We will also consider this variant in this section.

Thus, we consider the following four different implementations of the Even-Paz algorithm.
Below, an agent's ``middle point'' refers to the $\lfloor\frac k2\rfloor:\lceil\frac k2\rceil$ point of this agent when a subinterval $I$ is allocated to $k$ agents at a recursive call.
\begin{itemize}
    \item Even-Paz(median, left): The median of the $k$ middle points is cut, and the left-most point is used when an agent's middle point is not unique.
    \item Even-Paz(median, right): The median of the $k$ middle points is cut, and the right-most point is used when an agent's middle point is not unique.
    \item Even-Paz($\lfloor\frac k2\rfloor$, left): The $\lfloor\frac k2\rfloor$-th middle point is cut, and the left-most point is used when an agent's middle point is not unique.
    \item Even-Paz($\lfloor\frac k2\rfloor$, right): The $\lfloor\frac k2\rfloor$-th middle point is cut, and the right-most point is used when an agent's middle point is not unique.
\end{itemize}

It is easy to verify that all the implementations make the algorithm output proportional allocations.
We will show that Even-Paz($\lfloor\frac k2\rfloor$, left) is not risk-averse truthful, and the remaining three implementations are risk-averse truthful.

\begin{theorem}\label{thm:Even-Paz_RAT}
Even-Paz(median, left), Even-Paz(median, right) and Even-Paz($\lfloor\frac k2\rfloor$, right) are risk-averse truthful. Even-Paz($\lfloor\frac k2\rfloor$, left) is not risk-averse truthful.
\end{theorem}

Notice that the left-variant and the right-variant are identical if we assume agents are hungry.
\begin{corollary}
The Even-Paz algorithm is risk-averse truthful if agents are hungry.
\end{corollary}

In the remaining part of this section, we prove Theorem~\ref{thm:Even-Paz_RAT}.

\subsubsection{Even-Paz(median, left), Even-Paz(median, right) and Even-Paz($\lfloor\frac k2\rfloor$, right) are risk-averse truthful}
Without loss of generality, we consider agent $1$ with value density function $f_1$ who reports $f_1'$ instead.
Consider the executions of $\M(f_1,f_2,\ldots,f_n)$ and $\M(f_1',f_2,\ldots,f_n)$, where $\M$ stands for any one of the three implementations of the Even-Paz algorithm.
If agent $1$'s middle point is the same at each recursive call throughout the algorithm for any $f_2,\ldots,f_n$, then reporting $f_1'$ is non-beneficial, and 2(a) of Definition~\ref{def:rat} holds.
Therefore, we assume, for some $f_2,\ldots,f_n$, at some recursive call where a subinterval $I=[s,t]$ is allocated to a subset of $k\geq2$ agents, agent $1$'s middle points are different.
Let $x_1$ and $x_1'$ be agent $1$'s middle points on $I$ with respect to $f_1$ and $f_1'$ respectively.
Let $x$ and $x'$ be the corresponding cut points for this recursive call (which is the median of all the middle points in Even-Paz(median, left) and Even-Paz(median, right), and the $\lfloor\frac k2\rfloor$-th middle point in Even-Paz($\lfloor\frac k2\rfloor$, right)).

First of all, we show the following proposition.
\begin{proposition}
For any of the three implementations of the Even-Paz algorithm, if $v_1([s,x_1])\neq v_1([s,x_1'])$, there is always a chance that agent $1$ receives less value by reporting $f_1'$ instead of $f_1$.
\end{proposition}
\begin{proof}
We will only prove the proposition for the case $v_1([s,x_1])<v_1([s,x_1'])$, as the other case $v_1([s,x_1])>v_1([s,x_1'])$ is similar.
In this case, $v_1([x_1,t])>v_1([x_1',t])$.
By the nature of the Even-Paz algorithm, if agent $1$ reports $f_1$ truthfully, (s)he can guarantee receiving an interval with value at least $\frac1kv_1([s,t])$.
It suffices to show that agent $1$ has a chance to get an interval with value strictly less than $\frac1kv_1([s,t])$ by reporting $f_1'$.

In all the three implementations, for any $\varepsilon>0$, it is possible that, by reporting $f_1'$, we have $x'\in(x_1'-\varepsilon,x_1']$ and agent $1$ is among those $\lceil\frac k2\rceil$ agents to whom an allocation of $[x',t]$ is decided in the next recursive call.
In this case, agent $1$'s value on $[x',t]$ (which is approximately the value on $[x_1',t]$) is strictly less than the value on $[x_1,t]$ (for sufficiently small $\varepsilon$), and thus strictly less than $\frac{\lceil\frac k2\rceil}{k}v_1([s,t])$.
In addition, it is possible that agent $1$ will end up with receiving at most a $\frac1{\lceil \frac k2\rceil}$ fraction of the value of $[x',t]$ (which happens when the remaining agents' value density functions are identical, up to rescaling, to $f_1$ on the interval $[x',t]$).
This value is less than $\frac1kv_1([s,t])$.
\end{proof}

From now on, we assume $v_1([s,x_1])=v_1([s,x_1'])$.
Next, we prove the risk-averse truthfulness of the three implementations by consider different cases for $k$.

\paragraph{Case $k\geq4$}
In this case, both $\lfloor \frac k2\rfloor$ and $\lceil\frac k2\rceil$ are at least $2$.
Suppose $x_1'>x_1$.
It is possible that all the remaining agents' middle points are between $x_1$ and $x_1'$,
we have $x_1'> x'=x> x_1$.
Since $v_1([s,x_1])=v_1([s,x_1'])$, we have $v_1([s,x])=\frac{\lfloor\frac k2\rfloor}kv_1([s,t])$ and $v_1([x,t])=\frac{\lceil\frac k2\rceil}kv_1([s,t])$.
Moreover, agent $1$ is among those $\lfloor \frac k2\rfloor$ agents to whom an allocation of $[s,x]$ is decided at the next recursive call if agent $1$ is truth-telling, and agent $1$ is among those $\lceil \frac k2\rceil$ agents to whom an allocation of $[x,t]$ is decided at the next recursive call if agent $1$ reports $f_1'$.
If $f_2,\ldots,f_n$ are positive for only a very tiny interval on $[s,x]$ and are identical (up to rescaling) with $f_1$ on $[x,t]$, then, at the end of the algorithm, agent $1$ gets a value that is very close to $\frac{\lfloor \frac k2\rfloor}k v_1([s,t])$ when truth-telling and gets a value of at most $\frac1kv_1([s,t])$ when reporting $f_1'$.
Since $\lfloor \frac k2\rfloor\geq 2$, this shows that misreporting can be potentially harmful for agent $1$.
The analysis for the case $x_1'<x_1$ is similar.

\paragraph{Case $k=3$}
We first consider the two implementations Even-Paz(median, right) and Even-Paz($\lfloor \frac k2\rfloor$, right).
Since we have assumed $v_1([s,x_1])=v_1([s,x_1'])$ and both implementations break the tie by choosing the right-most middle point, we must have $x_1'<x_1$.
It is possible that all the remaining two agents' middle points are between $x_1'$ and $x_1$,
we have $x_1> x'\geq x\geq x_1'$.
Since $v_1([s,x_1])=v_1([s,x_1'])$, we have $v_1([s,x])=\frac13v_1([s,t])$ and $v_1([x,t])=\frac23v_1([s,t])$.
Moreover, agent $1$ is among those $2$ agents to whom an allocation of $[x,t]$ is decided at the next recursive call if agent $1$ is truth-telling, and agent $1$ will receive $[s,x]$ if agent $1$ reports $f_1'$.
If $f_2$ and $f_3$ are positive for only a very tiny interval on $[x,t]$, then agent $1$ gets a value that is very close to $\frac23 v_1([s,t])$ when truth-telling and gets a value of exactly $\frac13v_1([s,t])$ when reporting $f_1'$.
This shows that misreporting can be potentially harmful for agent $1$.

We then consider the implementation Even-Paz(median, left).
In this case, we must have $x_1'>x_1$.
It is possible that $f_2$ and $f_3$ satisfy the following.
\begin{itemize}
    \item agent $2$'s middle point ($\frac13:\frac23$ point) is to the left of $x_1$.
    \item agent $3$'s middle point ($\frac13:\frac23$ point) is to the right of $x_1'$.
    \item agent $3$ has a positive value on $[x_1,x_1']$.
    \item $f_3$ is identical to $f_1$ on $[x_1',t]$ up to rescaling.
\end{itemize}
In this case, $x=x_1$ and $x'=x_1'$.
Consider the last round where agent $1$ and $3$ will be allocated from $[x,t]$ (when agent $1$ reports $f_1$) or $[x',t]$ (when agent $1$ reports $f_1'$).
If agent $1$ reports $f_1'$, agent $1$ will receive at most half of the value of $[x_1',t]$ at the end.
If agent $1$ truthfully reports $f_1$, since agent $3$ has a positive value on $[x_1,x_1']$ and $f_3$ and $f_1$ are identical on $[x_1',t]$, agent $3$'s middle point does not split $[x_1',t]$ evenly in agent $1$'s valuation, and agent $1$ will receive a piece that is worth more than half of the value of $[x_1',t]$.
Agent $1$'s misreporting is potentially harmful.

\paragraph{Case $k=2$}
For Even-Paz(median, left), we must have $x_1'>x_1$ since $v_1([s,x_1])=v_1([s,x_1'])$.
It is easy to verify that, when the other agent's middle point is in $[x_1-\delta,x_1]$ where $f_1$ is positive on $[x_1-\delta,x_1]$ (for some $\delta>0$), misreporting is harmful for agent $1$.

For Even-Paz(median, right), we must have $x_1'<x_1$ since $v_1([s,x_1])=v_1([s,x_1'])$.
It is easy to verify that, when the other agent's middle point is in $[x_1,x_1+\delta]$ where $f_1$ is positive on $[x_1,x_1+\delta]$ (for some $\delta>0$), misreporting is harmful for agent $1$.

For Even-Paz($\lfloor \frac k2\rfloor$, right), we must have $x_1'<x_1$ since $v_1([s,x_1])=v_1([s,x_1'])$.
It is easy to verify that misreporting is always non-beneficial (and also non-harmful) in all the three cases, $x_2\in [s,x_1)$, $x_2\in [x_1,x_1')$ and $x_2\in[x_1',t]$, where $x_2$ is the middle point for the other agent.

\subsubsection{Even-Paz($\lfloor \frac k2\rfloor$, left) is not risk-averse truthful}
\label{append:Even-Paz-not-RAT}
We consider an instance with three agents.
We describe the instance with the cake being represented by $[0,5]$, which can be scaled to $[0,1]$ with the value density functions scaled accordingly.
Consider
\begin{equation}\label{eqn:EvenPazRAT}
f(x)=\left\{\begin{array}{ll}
    1 & x\in[0,1)\cup[2,3)\cup[4,5] \\
    0 & x\in[1,2)\cup[3,4)
\end{array}\right.\qquad\mbox{and}\qquad f'(x)=\left\{\begin{array}{ll}
    1 & x\in[0,1)\cup[2-\varepsilon,3)\cup[4-\varepsilon,5] \\
    0 & x\in[1,2-\varepsilon)\cup[3,4-\varepsilon)
\end{array}\right.
\end{equation}
for sufficiently small $\varepsilon>0$.
Suppose $f$ is the true value density function for agent $1$. We will show that reporting $f'$ is sometimes beneficial and always not harmful.

To see it is sometimes beneficial, suppose agent $2$ and $3$ only have positive values on $[1+\varepsilon,2-\varepsilon]$.
When truth-telling, agent $1$'s middle point is at $x_1=1$, which is to the left of the other two agents' middle points.
(S)he will receive $[0,1)$, which is worth value $1$.
On the other hand, by reporting $f'$, agent $1$'s middle point is between $2-\varepsilon$ and $2$, which is to the right of the other two agents' middle points.
In addition, at the final round, the other agent's middle point is still in $[1+\varepsilon,2-\varepsilon]$, and agent $1$ will receive a value of $2$. 

To see reporting $f'$ is always not harmful, we discuss the following two cases.
\paragraph{Case 1}
Suppose one of agent $2$ or $3$ has its middle point in $[0,1)$.
Assume without loss of generality that $x_2\in[0,1)$.
In this case agent $2$ gets $[0,x_2)$, and $[x_2,1]$ is allocated to agent $1$ and $3$ at the final round.
Simple calculations reveal that the middle points of both $f$ and $f'$ on $[x_2,1]$ are identical.
This means that misreporting will not change the output allocation at all.

\paragraph{Case 2}
Suppose no agent has its middle point in $[0,1)$.
When agent $1$ truthfully reports $f$, (s)he will receive $[0,1)$ which is worth $1$.
It suffices to show that agent $1$ will receive a value of at least $1$ when reporting $f'$.
Let $x_1'=2-\frac13\varepsilon$ be agent $1$'s middle point with respect to $f_1'$.
Without loss of generality, suppose the other two agents' middle points satisfy $x_2\leq x_3$.
We further discuss three cases regarding $x_2$.

If $x_2\geq x_1'$, agent $1$'s middle point is still the leftmost among the three agents, and (s)he will receive $[0,x_1')$ which is worth exactly $1$.

If $x_2\in(2-\varepsilon,x_1')$, agent $2$ will get $[0,x_2)$, and agent $1$ and $3$ is allocated the interval $[x_2,1]$ at the final round.
It is easy to see that, at the final round, agent $1$'s middle point (with respect to $f'$) is between $4-\varepsilon$ and $4$.
In terms of agent $1$'s true valuation, this middle point splits $[x_2,1]$ into two intervals with the equal value $1$.
Agent $1$ will receive a value of at least $1$.

If $x_2\in[1,2-\varepsilon]$, agent $2$ will get $[0,x_2)$, and agent $1$ and $3$ is allocated the interval $[x_2,1]$ at the final round.
For both $f$ and $f'$, the middle point of agent $1$ at the final round is at $3$.
Reporting $f'$ does not change the output allocation in this case.

\subsection{On Risk-Averse Truthfulness of Ortega and Segal-Halevi's Moving-Knife Procedure}
\label{append:OSRAT}
Similar to the Even-Paz algorithm, the risk-averse truthfulness of Ortega and Segal-Halevi's moving-knife procedure (defined in \ref{append:OSmovingknife}) depends on the tie-breaking rule used.
Recall that, at Step~$t$ when the remaining part of the cake is allocated among the remaining $n-t+1$ agents, the algorithm computes a point $x_i$ for each agent $i$ where the interval from the left endpoint to $x_i$ is worth exactly $\frac1{n-t+1}$ fraction of the value of the remaining part of the cake. 
In the case agents are not necessarily hungry, the point $x_i$ may not be unique.
Same as before, we consider two natural tie-breaking rules: always choose the smallest such $x_i$ (OS-MovingKnife(left)) and always choose the largest such $x_i$ (OS-MovingKnife(right)).
We will show that the former is not risk-averse truthful and the latter is risk-averse truthful.

\begin{theorem}\label{thm:OS_RAT_left}
OS-MovingKnife(left) is not risk-averse truthful.
\end{theorem}
\begin{proof}
Notice that, when the total number of agents is $3$, OS-MovingKnife(left) is identical to Even-Paz($\lfloor\frac k2\rfloor$,left).
The proof of this theorem follows from the result in \ref{append:Even-Paz-not-RAT}.
%
%
%
\end{proof}

Before proving OS-MovingKnife(right) is risk-averse truthful, we first show the following simple property of Ortega and Segal-Halevi's moving-knife procedure.

\begin{proposition}\label{prop:OSMovingKnife}
Suppose, after an iteration, the remaining part of the cake is $[s,1]$ and the set of agents who have not yet been allocated is $S$.
Each agent $i\in S$ will be allocated a piece with value at least $\frac1{|S|}v_i([s,1])$ at the end of the procedure.
\end{proposition}
\begin{proof}
Let $S=\{1,\ldots,k\}$ where $i$ is the agent who is allocated at the $(T+i)$-th iteration, where $T$ is the iteration after which the remaining part of the cake is $[s,1]$.
Let $s_0,s_1,\ldots,s_k$ with $s=s_0<s_1<s_2<\cdots<s_k=1$ be such that $[s_{i-1},s_{i})$ is the piece allocated to agent $i$.
Consider an arbitrary agent $i$.
By the property of the procedure that the agent with the left-most mark point is allocated at each iteration, we have $\frac{v_i([s_{j-1},s_j))}{v_i([s_{j-1},1))}\leq\frac1{k-j+1}$ for each $j=1,\ldots,i-1$.
Thus, $\frac{v_i([s_{j},1])}{v_i([s_{j-1},1))}\geq\frac{k-j}{k-j+1}$ for each $j=1,\ldots,i-1$.
By multiplying these fractions for $j=1,\ldots,i-1$, we have $\frac{v_i([s_{i-1},1])}{v_i([s,1))}\geq\frac{k-i+1}k$.
Since the value of the piece $[s_{i-1},s_i)$ received by agent $i$ is exactly $\frac1{k-i+1}v_i([s_{i-1},1])$, we have $v_i([s_{i-1},s_i))=\frac1{k-i+1}v_i([s_{i-1},1])\geq\frac1{k-i+1}\cdot\frac{k-i+1}kv_i([s,1))=\frac1kv_i([s,1))$, which implies the proposition.
\end{proof}

Now we show the risk-averse truthfulness of OS-MovingKnife(right).

\begin{theorem}\label{thm:OS_RAT}
OS-MovingKnife(right) is risk-averse truthful.
\end{theorem}
\begin{proof}
Let $f_1$ be the true value density function for agent $1$. 
If agent $1$ misreports his/her value density function to $f_1'$, for any value density functions $f_2,\ldots,f_n$ of other agents, we consider the implementation of OS-MovingKnife(right) under $f_1,f_2 \ldots, f_n$ and $f_1',f_2 \ldots, f_n$.
Let $r$ and $r'$ be the iterations where agent $1$ is allocated a piece of cake at the two implementations respectively.
Let $r_m=\min\{r,r'\}$.
The two implementations of OS-MovingKnife(right) are identical for the first $r_m-1$ iterations.
Let $[s,1]$ be the part of the unallocated cake before the $r_m$-th iteration.
Let $S$ be the set of agents who have not been allocated before the $r_m$-th iteration, where we must have $1\in S$.
We discuss three cases: $r_m=r'<r$, $r_m=r'=r$, and $r'>r=r_m$.

If $r_m=r'<r$, agent $1$ is allocated at an earlier iteration by misreporting.
Recall that, at the $r_m$-th iteration, a point $x_i$ is computed for each agent $i$ such that $v_i([s,x_i])=\frac1{n-r_m+1}v_i([s,1])$, and the agent with the left-most point is allocated at this iteration.
Since agent $1$ is allocated at the $r_m$-th iteration by reporting $f_1'$ and is allocated at a later iteration by reporting $f_1$, the $\frac1{n-r_m+1}$-point $x_1$ with respect to $f_1$ must be larger than the $\frac1{n-r_m+1}$-point $x_1'$ with respect to $f_1'$.
By reporting $f_1'$, agent $1$ receives $[s,x_1')$ with value $v_1([s,x_1'))\leq v_1([s,x_1))=\frac1{n-r_m+1}v_1([s,1])$.
On the other hand, by Proposition~\ref{prop:OSMovingKnife}, agent $1$ receives value at least $\frac1{n-r_m+1}v_1([s,1])$ if reporting truthfully.
This shows that reporting $f_1'$ is not beneficial.

If $r_m=r'=r$, agent $1$ is allocated at the same iteration for reporting $f_1$ and $f_1'$.
Let $x_1$ and $x_1'$ be agent $1$'s $\frac1{n-r_m+1}$-points with respect to $f_1$ and $f_1'$ respectively.
If $x_1'\leq x_1$, agent $1$ receives only a subset by misreporting, which is not beneficial.
If $x_1'>x_1$, we have $v_1([s,x_1'))>v_1([s,x_1))$ since OS-MovingKnife(right) breaks ties at the right-most mark point.
In this case, misreporting $f_1'$ is beneficial for agent $1$.
However, we will construct $f_2',\ldots,f_n'$ such that this misreporting is harmful for agent $1$.
Suppose $f_2,\ldots,f_n,f_2',\ldots,f_n'$ are normalized such that the value of $[0,1]$ is always $1$.
Let $f_i$ and $f_i'$ be identical on $[0,s]$ for each $i=2,\ldots,n$, so the first $r_m-1$ iterations are the same as before.
Since $v_1([s,x_1'))>v_1([s,x_1))=\frac1{n-r_m+1}v_i([s,1])$, we have $v_1([x_1',1])<\frac{n-r_m}{n-r_m+1}v_i([s,1])$.
Find a point $x<x_1'$ such that $v_1([x,1])<\frac{n-r_m}{n-r_m+1}v_i([s,1])$.
We can construct $f_2',\ldots,f_n'$ by modifying the values of $f_1,\ldots,f_n$ on $[s,1]$ such that 1) an agent other than agent $1$ is allocated $[s,x)$ at the $r_m$-th iteration, 2) at each later iteration, there is an agent other than agent $1$ whose mark point is to the left of but very close to agent $1$'s mark point with respect to $f_1$, and 3) at the last iteration, the remaining part of the cake is worth less than $\frac{1}{n-r_m+1}v_i([s,1])$ to agent $1$.
It is then easy to see that agent $1$ will get a piece with value less than $\frac{1}{n-r_m+1}v_i([s,1])$ by reporting $f_1'$.
Since agent $1$ receives value $\frac1{n-r_m+1}v_i([s,1])$ by truthfully reporting $f_1$, misreporting is harmful in this case.

If $r'>r=r_m$, agent $1$ is allocated at a later iteration by misreporting.
Let $x_1$ and $x_1'$ be agent $1$'s $\frac1{n-r_m+1}$-points with respect to $f_1$ and $f_1'$ respectively at the $r_m$-th iteration.
It must be that $x_1'>x_1$.
Following the similar arguments in the last paragraph, we can construct $f_2',\ldots,f_n'$ such that reporting $f_1'$ is harmful for agent $1$.
\end{proof}

Again, the left-variant and the right-variant are identical if we assume agents are hungry.
\begin{corollary}
Ortega and Segal-Halevi's Moving-Knife Procedure is risk-averse truthful if agents are hungry.
\end{corollary}

\end{document}